\title{Disconnected Cuts in Claw-free Graphs\thanks{This paper received support from the Leverhulme Trust (RPG-2016-258).}}
\author{Barnaby Martin\inst{1} \and Dani\"el~Paulusma\inst{1} \and Erik Jan van Leeuwen\inst{2}}
\institute{Durham University, Durham, United Kingdom\\
\texttt{\{barnaby.d.martin,daniel.paulusma\}@durham.ac.uk}
\and Utrecht University, Utrecht, The Netherlands,\\
\texttt{e.j.vanleeuwen@uu.nl}}
\newcommand{\NP}{{\sf NP}}
\newcounter{ctrclaim}[theorem]
\begin{document}
\maketitle

\begin{abstract}
A disconnected cut of a connected graph is a vertex cut that itself also induces a disconnected subgraph. The decision problem whether a graph has a disconnected cut is called {\sc Disconnected Cut}. This problem is closely related to several homomorphism and contraction problems, and fits in an extensive line of research on vertex cuts with additional properties. It is known that {\sc Disconnected Cut} is NP-hard on general graphs, while polynomial-time algorithms are known for several graph classes. However, the complexity of the problem on claw-free graphs remained an open question. Its connection to the complexity of the problem to 
contract a claw-free graph to the 4-vertex cycle $C_4$
led Ito et al.\ (TCS 2011) to explicitly ask to resolve this open question.

We prove that {\sc Disconnected Cut} is polynomial-time solvable on claw-free graphs, answering the question of Ito et al. 
The centerpiece of our result is a novel decomposition theorem for claw-free graphs of diameter~$2$, 
which we believe is of independent interest and expands the research line initiated by Chudnovsky and Seymour (JCTB 2007--2012) and Hermelin et al.\ (ICALP 2011). On our way to exploit this decomposition theorem, we characterize how disconnected cuts interact with certain cobipartite subgraphs, and prove two further novel algorithmic results, 
namely {\sc Disconnected Cut} is polynomial-time solvable on circular-arc graphs and line graphs.
\end{abstract}

\section{Introduction}\label{s-intro}

Graph connectivity is a crucial graph property studied in the context of network robustness. Well-studied notions of connectivity consider for example hamiltonicity, edge-disjoint spanning trees, edge cuts, vertex cuts, etc. In this paper, we study the notion of a \emph{disconnected cut}, which is a vertex set $U$ of a connected graph $G$ such that $G-U$ is disconnected and the subgraph $G[U]$ induced by $U$ is disconnected as well. Alternatively, we may say that $V(G)$ can be partitioned into sets $V_1,V_2,V_3,V_4$ such that no vertex of $V_1$ is adjacent to a vertex of~$V_3$ (that is, $V_1$ is anti-complete to $V_3$) and $V_2$ is anti-complete to $V_4$; then both $V_1 \cup V_3$ and $V_2 \cup V_4$ form a disconnected cut. See Figure~\ref{f-example} for an example. The {\sc Disconnected Cut} problem asks whether a given 
connected
 graph $G$ has a disconnected cut.

\begin{figure}[t]
 \centering
\includegraphics[scale=0.6]{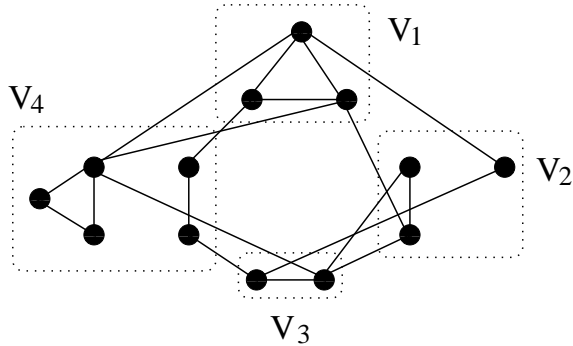}
\caption{Graph with disconnected cuts $V_1\cup V_3$ and $V_2\cup V_4$ (figure originally appeared in~\cite{IKPT11}).}
\label{f-example}
 \end{figure}

The {\sc Disconnected Cut} problem is intimately connected to at least five other problems studied in the literature. We give a brief overview here, and refer to the related work section for more details. The name {\sc Disconnected Cut} originates from Fleischner et al.~\cite{FMPS09}, who determined the complexity of partitioning the vertices of a graph into exactly $k$ {\it bicliques} 
(complete bipartite graphs with at least one edge),
except for the case $k=2$. For $k=2$, this problem is polynomially equivalent to {\sc Disconnected Cut} 
(by taking the complement of the input graph). The {\sc Disconnected Cut} problem can also be seen as an {\sc $H$-Partition} problem for appropriately defined 4-vertex graphs $H$. Dantas et al.~\cite{DFGK05} proved that {\sc $H$-Partition} is polynomial-time solvable for each 4-vertex graph $H$ except for the two cases equivalent to {\sc Disconnected Cut}. If the input graph has diameter~$2$, then {\sc Disconnected Cut} is equivalent to {\sc ${\cal C}_4$-Compaction}~\cite{FMPS09}, which 
asks for a homomorphism~$f$ from a graph~$G$ to the graph ${\cal C}_4$ (the 4-vertex-cycle with a self-loop in each vertex) such that for every $xy\in E(H)$ with $x\neq y$ there is an edge $uv\in E(G)$ with $f(u)=x$ and $f(v)=y$.
The diameter-$2$ case is also 
equivalent to testing if a graph can be modified to a biclique by a series of edge contractions~\cite{IKPT11}. The restriction to graphs of diameter~$2$ is natural, as {\sc Disconnected Cut} is trivial otherwise~\cite{FMPS09}.
Finally, {\sc Disconnected Cut} fits in the broad study of vertex cut problems with extra properties on the cut set,
such as $(k,\ell)$-cuts, $k$-cuts, (strict) $k$-clique cuts, stable cuts, matching cuts, etc.; see e.g.~\cite{KPST16} for an overview.

The above demonstrates that {\sc Disconnected Cut} is of central importance to understanding many different types of problems, ranging from cut problems to homomorphism and graph contractibility problems. Therefore, there has been broad interest to determine its computational complexity. Indeed, numerous papers~\cite{CDEFFK10,DFGK05,DMS12,Fi12,FMPS09,IKPTb11,IKPT11,TDF11} asked about its complexity on general graphs. \NP-completeness was proven independently in~\cite{MP15} and by Vikas, as announced in~\cite{Vi13}. The strong interest in {\sc Disconnected Cut} also led to a study on graph classes. We know polynomial-time algorithms for many classes, including graphs of bounded maximum degree, graphs not locally connected, graphs with a dominating edge (which include cobipartite graphs and $P_4$-free graphs)~\cite{FMPS09},
$2P_2$-free graphs, co-spiders, co-$P_4$-sparse graphs, co-circular arc graphs~\cite{CDEFFK10}, apex-minor-free graphs (which generalize planar graphs), chordal graphs~\cite{IKPT11},
$4P_1$-free graphs (graphs with independence number at most~4),  graphs of bounded treewidth, $(2P_1+P_2)$-free graphs (co-diamond-free graphs), $(C_5,\overline{P}_5)$-free graphs, co-planar graphs, co-$(q,q -4)$-graphs (for every fixed integer~$q$), and $(C_3+P_1)$-free graphs (which contains the class of triangle-free graphs)~\cite{DMS12}. The latter is the complement of the well-known class of claw-free graphs (graphs with no induced claw~$K_{1,3}$). 

Our interest in {\sc Disconnected Cut} on claw-free graphs is heightened by the close relation of this problem to $C_r$-{\sc Contractibility}, which is to decide if a graph~$G$ contains the 
$r$-vertex cycle $C_r$ as a contraction. This problem is \NP-complete if $r\geq 4$~\cite{BV87} and stays \NP-complete for 
claw-free graphs as long as $r\geq 6$~\cite{FKP13}. Given that the case $r \leq 3$ is polynomial-time solvable even for general graphs~\cite{BV87}, this leaves open on claw-free graphs the cases where $r\in \{4,5\}$. Ito et al.~\cite{IKPT11} showed that $C_4$-{\sc Contractibility} on claw-free graphs of diameter~$2$ is equivalent to {\sc Disconnected Cut}. As {\sc Disconnected Cut} is trivial if the input graph does not have diameter~$2$, this led Ito et al.~\cite{IKPT11} to explicitly ask the following:

\medskip
\noindent
{\it What is the computational complexity of {\sc Disconnected Cut} on claw-free graphs?}

\subsection{Our Contribution}

We answer the open question of Ito et al.~\cite{IKPT11} by giving a polynomial-time algorithm for {\sc Disconnected Cut} on claw-free graphs. This immediately implies that besides ${\cal C}_4$-{\sc Compaction}, also $C_4$-{\sc Contractibility} is polynomial-solvable on claw-free graphs of diameter~$2$, thus improving our understanding of these problems too. As claw-free graphs are not closed under edge contraction, the latter is certainly not expected beforehand.

Our result is grounded in a new graph-theoretic theorem that proves that claw-free graphs of diameter~$2$ belong to one of four basic graph classes after performing two types of elementary operations. We believe this novel structural theorem to be of independent interest. The theorem builds on one of the algorithmic decomposition theorems for claw-free graphs developed by Hermelin et al.~\cite{HMLW11-arxiv,HMLW11}, and relies on the pioneering works of Chudnovsky and Seymour (see~\cite{CS05}). Several other algorithmic decomposition theorems for claw-free graphs have been built on the ideas of Chudnovsky and Seymour, see e.g.~\cite{FOS14,Ki09}, which jointly have had a broad impact on our algorithmic understanding of claw-free graphs (see~\cite{HMLW11-arxiv} for an overview). Our structural theorem and resulting algorithm for {\sc Disconnected Cut} expand this line of research, and we hope it will prove similarly useful for future work.

The crux of the proof of our structural theorem is to exploit the extra structure offered by claw-free graphs of diameter~$2$ to show that the so-called strip-structures, which are central to the aforementioned decomposition theorems, only contain trivial strips. An important ingredient in the proof is to exclude not only 
twins (vertices $u,v$ for which $N[u] = N[v]$), but also vertices with nested neighbourhoods (vertices $u$ for which there exists a vertex $v$ such that $N(u) \setminus \{v\} \subseteq N(v) \setminus \{u\}$). Using this operation, one can simplify the decomposition theorem of Hermelin et al.~\cite{HMLW11-arxiv}, and thus we think this observation has an impact beyond this work. Indeed, our final decomposition for claw-free graphs of diameter~$2$ is much cleaner to state and easier to understand than the one for general claw-free graphs.

Using the structural theorem, {\sc Disconnected Cut} on claw-free graphs reduces to understanding its behavior under the elementary operations and on the basic graph classes. The crucial elementary operation is to remove certain cobipartite structures called W-joins. Intuitively, a W-join is a cobipartite induced subgraph such that each vertex of the rest of the graph is complete to one or two sides of the cobipartition of 
the W-join, or wholly anti-complete to the W-join. 
We develop the notion of unshatterable proper W-joins, which are essentially W-joins that cannot be broken into smaller W-joins, and exhibit how unshatterable proper W-joins interact with disconnected cuts. We then show that unshatterable proper W-joins can be removed from the graph by a simple operation. We complete our arguments by proving that all W-joins in the graph must be in fact be unshatterable proper W-joins, and that we can find unshatterable proper W-joins in polynomial time.

The main basic graph classes in the structural theorem 
 are line graphs and proper circular-arc graphs. Prior to our work, the complexity of {\sc Disconnected Cut} was unknown for these classes as well. We present a polynomial-time algorithm for line graphs and even for 
 general circular-arcs graphs 
(not only proper-circular arcs). 
 Both algorithms rely on  the existence of a small induced cycle passing through a disconnected cut in a highly structured matter. 
In addition, for line graphs, we prove that the pre-image of the line graph is $2P_2$-free, and thus has diameter at most~$3$. The hardest part of the proof is then to prove that if the pre-image has diameter exactly~$3$, then the line graph has in fact no disconnected cut.

\subsection{Related Work}

As mentioned, the name {\sc Disconnected Cut} stems from Fleischner et al.~\cite{FMPS09}, who studied how to partition the vertices of a graph into exactly $k$ bicliques, where {\sc Disconnected Cut} is equivalent to the case $k=2$. However, {\sc Disconnected Cut} originates from $H$-partitions, introduced in~\cite{DFGK05}. A model graph~$H$ on vertices $h_1,\ldots,h_k$ has solid and dotted edges. An {\it $H$-partition} of a graph $G$ is a
partition of $V(G)$ into $|V(H)|$ {\it nonempty} sets $V_1,\ldots,V_k$ such that for every pair of vertices $u\in V_i$ and $v\in V_j$: if $h_ih_j$ is a solid edge of $E(H)$, then $uv\in E(G)$; and if $h_ih_j$ is a dotted edge of $E(H)$, then $uv\notin E(G)$ (if $h_ih_j\notin E(H)$, then $uv\in E(G)$ or $uv\notin E(G)$ are both allowed). The corresponding decision problem is called $H$-{\sc Partition}.
 Dantas et al.~\cite{DFGK05} proved $H$-{\sc Partition} is polynomial-time solvable for every 4-vertex model graph~$H$ 
 except $H=2K_2$, which has solid edges $h_1h_3$, $h_2h_4$ and no dotted edges, and $H=2S_2$, which has dotted edges $h_1h_3, h_2h_4$ and no solid edges.
As a graph has a disconnected cut if and only if it has a $2S_2$-partition if and only if its complement has a $2K_2$-partition,
these two cases are polynomial-time equivalent to {\sc Disconnected Cut}. 
Hence, we now know that, as a matter of exception, $H$-{\sc Partition} is \NP-complete if $H\in \{2K_2,2S_2\}$~\cite{MP15}.

We can encode a model graph~$H$ as a matrix~$M$ in which every entry is either 0 (dotted edge), 1 (solid edge) or $*$ (no restriction). If we allow sets $V_i$ in a solution for $H$-{\sc Partition} to be empty, then we obtain the $M$-{\sc Partition} problem, introduced by Feder et al.~\cite{FHKM03}.
This well-known problem generalizes many classical problems involving vertex cuts and partitions, including {\sc $k$-Colouring} and {\sc $H$-Colouring}; 
see also~\cite{He14}.
An even more general variant is to give every vertex $u$ a list $L(u)\subseteq \{1,\ldots,k\}$ and to search for a solution, in which each vertex~$u$ may only belong to a set $V_i$ with $i\in L(u)$. This yields the {\sc List $M$-Partition} problem, which includes well-known cases, such as the {\sc Stubborn} problem, 
which turned out to be polynomial-time solvable~\cite{CPPW12}, in contrast to {\sc Disconnected Cut}. 
A homomorphism $f$ from $G$ to $H$ is a \emph{retraction} if $G$ contains $H$ as an induced subgraph and $f(u) = u$ for every $u\in V (H)$. The corresponding decision version is called {\sc $H$-Retraction}. Let ${\cal C}_4$ be the 4-cycle with a self-loop in each vertex. Then ${\cal C}_4$-{\sc Retraction} is a special case of {\sc List $2S_2$-Partition} where the input graph contains a cycle on four specified vertices $v_1,\ldots,v_4$ with $L(v_i)=\{i\}$ for $i=1,\ldots,4$ and $L(v)=\{1,2,3,4\}$ for $v\notin \{v_1,\ldots,v_4\}$. 
This problem is a generalization of {\sc Disconnected Cut}.
Feder and Hell~\cite{FH98} proved that {\sc ${\cal C}_4$-Retraction} is \NP-complete. Hence, {\sc List $2S_2$-Partition} and {\sc List $2K_2$-Partition} are \NP-complete. Note that this result is also implied by the \NP-completeness of $2K_2$-{\sc Partition}~\cite{MP15}.

Vikas~\cite{Vi02} solved an open problem of Winkler (see~\cite{FHKM03,Vi02}) by proving \NP-completeness of ${\cal C}_4$-{\sc Compaction}, the variant of the {\sc $2S_2$-Partition} problem with the extra constraint that there must be at least one edge $u_iu_j$ with $u_i\in V_i$ and $u_{i+1}\in V_{i+1}$ for $i=1,\dots,4$ (where $V_5=V_1$).  Generally, a homomorphism~$f$  from a graph~$G$ to a graph~$H$ is 
a {\it compaction} 
if $f$ is edge-surjective, i.e., for every $xy\in E(H)$ with $x\neq y$ there is an edge $uv\in E(G)$ with $f(u)=x$ and $f(v)=y$. The corresponding decision problem is called {\sc $H$-Compaction}.
If $H={\cal C}_4$, then the problem is equivalent to {\sc Disconnected Cut} 
when restricted to graphs of diameter~$2$~\cite{FMPS09}.
Hence, {\sc ${\cal C}_4$-Compaction} is \NP-complete  for graphs of diameter~$2$~\cite{MP15}
 (the result of~\cite{Vi02} holds for graphs of diameter at least~3).
Similarly, a homomorphism~$f$ from a graph~$G$ to a graph~$H$ is {\it (vertex-)surjective} if for every $x\in V(H)$ there is a vertex~$u\in V(G)$ such that $f(u)=x$. The decision problem is called {\sc Surjective $H$-Colouring} (or $H$-{\sc Vertex Compaction}, or {\sc Surjective $H$-Homomorphism}) and is equivalent to {\sc Disconnected Cut} if $H={\cal C}_4$. The complexity classifications of {\sc $H$-Compaction} and {\sc Surjective $H$-Colouring} are wide open despite many partial results; 
see~\cite{BKM12} for a survey and \cite{GJMPS17} for a more recent overview focussing on {\sc Surjective $H$-Colouring} .

\subsection{Overview}

In Section~\ref{s-known} we state several underlying structural observations for graphs of diameter~$2$. In Sections~\ref{s-circular} and~\ref{s-linegraphs}, respectively, we prove that {\sc Disconnected Cut} can be solved in polynomial time for circular-arc graphs and line graphs, respectively. In Section~\ref{s-claw} we prove our main result, and in particular, our new structural theorem for claw-free graphs of diameter~$2$. In Section~\ref{s-k4} we show that {\sc Disconnected Cut} is polynomial-time solvable on paw-free graphs, co-paw-free graphs and on diamond-free graphs. By combining these results with our result for claw-free graphs we prove that
 {\sc Disconnected Cut} is polynomial-time solvable for $H$-free
 graphs whenever $H$ is a graph on at most four vertices not isomorphic to the complete graph $K_4$. We pose the case where $H=K_4$ as an open problem in Section~\ref{s-con}, together with some other relevant open problems.

\section{Preliminaries and Basic Results}\label{s-known}

In the remainder of our paper,  graphs are finite, undirected, and have neither multiple edges nor self-loops unless explicitly stated otherwise.

Let $G=(V,E)$ be a graph.  For a set $S\subseteq V$, $G[S]$ is the subgraph of $G$ induced by $S$.
We say that $S$ is {\it connected} if~$G[S]$ is connected. 
We write $G-S=G[V\setminus S]$, and if $S=\{u\}$, we write $G-u$ instead.
For a vertex $u\in V$, let $N(u)=\{v \;|\; uv\in E\}$ be the neighbourhood of~$u$
and $N[u] = N(u) \cup\{u\}$. 
The {\it complement} $\overline{G}$ of $G$ has vertex set $V$ and edge set
$\{uv\; |\; uv\notin E\}$.
The \emph{contraction} of an edge $uv\in E$ is the operation that removes the vertices $u$ and $v$ from $G$, and replaces $u$ and $v$ by a new vertex that is made adjacent to precisely those vertices that were adjacent to $u$ or $v$ in $G$ (without introducing self-loops nor multiple edges).
The {\em distance} $d_G(u,v)$ between two vertices $u$ and $v$ of $G$ is the number of edges in a shortest path between them.
If $u$ and $v$ are in different connected components of $G$, then $d_G(u,v)=\infty$.
The {\em diameter} of $G$ is equal to $\max\{d_G(u,v)\; |\; u,v\in V\}$.

The following lemma was observed by Fleischner et al.

\begin{lemma}[\cite{FMPS09}]\label{l-diameter}
Let $G$ be a graph. If $G$ has diameter~$1$, then $G$ has no disconnected cut.
If $G$ has diameter at least~$3$, then $G$ has a disconnected cut.
\end{lemma}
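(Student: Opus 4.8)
The plan is to prove the two implications separately, using the characterization of a disconnected cut via the partition of $V(G)$ into sets $V_1,V_2,V_3,V_4$ where $V_1$ is anti-complete to $V_3$ and $V_2$ is anti-complete to $V_4$ (so that both $V_1\cup V_3$ and $V_2\cup V_4$ are disconnected cuts, as described in the introduction). Recall that such a cut requires both $G-U$ and $G[U]$ to be disconnected.

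First I would handle the diameter-$1$ case. A graph $G$ has diameter~$1$ precisely when $G$ is complete. The key observation is that in a complete graph every vertex set $U$ induces a complete (hence connected) subgraph, so $G[U]$ can never be disconnected. Thus no disconnected cut can exist, and this direction is immediate.

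The more substantial direction is the diameter-at-least-$3$ case, where I must \emph{construct} a disconnected cut. The plan is to pick two vertices $u,v$ with $d_G(u,v)\geq 3$, which exist by the diameter hypothesis. The idea is to separate $u$ from $v$ using the two ``layers'' of the breadth-first structure around $u$: let $V_1=\{u\}$, let $V_2=N(u)$, let $V_3$ be the set of remaining vertices, and note that $v\in V_3$ since $d_G(u,v)\geq 3$. Here $U=V_2=N(u)$ is the candidate cut. Because no vertex of $V_1$ can be adjacent to a vertex of $V_3$ (any such adjacency would put a vertex of $V_3$ within distance~$1$ of $u$), the set $N(u)$ separates $u$ from $V_3$, so $G-N(u)$ is disconnected. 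It remains to exhibit a disconnected cut, which I would take to be $N(u)\cup\{w\}$ or argue directly that $N(u)$ together with an isolated-in-$U$ vertex works; more cleanly, one verifies that $G[N(u)]$ can be forced disconnected by choosing the cut to also place $v$ on the far side and selecting a vertex at distance exactly~$2$ appropriately.

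The main obstacle I anticipate is ensuring that the cut set \emph{itself} induces a disconnected subgraph, not merely that its removal disconnects the graph; the separating set $N(u)$ disconnects $G$ but need not induce a disconnected graph on its own. To overcome this, I would exploit the full strength of $d_G(u,v)\geq 3$ by working with a shortest $u$--$v$ path and the distance layers $L_i=\{x:d_G(u,x)=i\}$. Setting the cut to be $U=\{u\}\cup(L_2\cup L_3\cup\cdots)$ appropriately, or symmetrically distributing $u$ and $v$ into opposite parts of the four-set partition, one gets that $U$ is disconnected because $u$ is anti-complete to the far layers while $U$ still separates the remaining vertices. The cleanest route is to produce the four-part partition directly: place $u$ in one part and $v$ in the ``opposite'' part so that the distance-$3$ condition guarantees the required anti-completeness relations, and then read off a disconnected cut from the partition. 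I expect that verifying these anti-completeness relations from $d_G(u,v)\geq 3$ is the crux, but it is a short argument once the layers are set up.
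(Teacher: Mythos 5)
Your diameter-$1$ argument is complete and correct: a graph of diameter~$1$ is complete, so every induced subgraph is connected and no cut set can induce a disconnected graph. The diameter-$\geq 3$ direction, however, has a genuine gap: you correctly identify the crux (the separator $N(u)$ disconnects $G$ but need not itself induce a disconnected subgraph), yet none of the ways you propose to resolve it actually works, and the construction that does work is never pinned down. Concretely: taking $U=\{u\}\cup(L_2\cup L_3\cup\cdots)$ fails because its complement is $N(u)$, which may induce a connected graph, so $G-U$ need not be disconnected; and placing $u$ and $v$ in ``opposite'' parts $V_1$ and $V_3$ of the four-part partition fails because it then requires splitting $V(G)\setminus\{u,v\}$ into two nonempty anticomplete parts, i.e., it requires $G-u-v$ to be disconnected, which is false, e.g., for the path $P_5$ with endpoints $u,v$ (diameter~$4$). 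Your phrase ``$N(u)$ together with an isolated-in-$U$ vertex'' gestures at the right cut, but you never say which vertex that is or why it has no neighbour in $N(u)$.

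The missing observation is a one-liner: since $d_G(u,v)\geq 3$, the vertex $v$ is anticomplete to $N(u)$ (a neighbour of $v$ inside $N(u)$ would give $d_G(u,v)\leq 2$). Hence take $V_1=\{u\}$, $V_2=N(u)$, $V_4=\{v\}$, and $V_3=V(G)\setminus(N[u]\cup\{v\})$: then $V_1$ is anticomplete to $V_3$, $V_2$ is anticomplete to $V_4$, and $V_3\neq\emptyset$ because a shortest $u$--$v$ path contains a vertex at distance exactly~$2$ from $u$. So $N(u)\cup\{v\}$ is a disconnected cut: it induces a disconnected graph ($v$ is isolated in it), and its removal leaves $u$ separated from $V_3$. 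Note that $u$ and $v$ must land in \emph{different} anticomplete pairs ($V_1$ and $V_4$), not in opposite parts of the same pair --- this is precisely where your ``cleanest route'' goes astray. (The paper itself cites this lemma from Fleischner et al.\ without proof; the above is the standard argument.)
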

 
A subset~$D\subseteq V$ is a {\it dominating} set of a graph~$G=(V,E)$ if every vertex of $V\setminus D$ is adjacent to at least one vertex of $D$. 
If $D=\{u\}$, then $u$ is a {\it dominating} vertex of $G$.
An edge $uv$ of a graph $G=(V,E)$ is {\it dominating} if $\{u,v\}$ is dominating.
A vertex~$u\in V$ has a {\it disconnected neighbourhood} if $N(u)$ induces a disconnected graph.

We need the following two lemmas, the first one of which is a straightforward observation.

\begin{lemma}\label{l-dominating}
If a graph~$G$ contains a dominating vertex, then $G$ has no disconnected cut.
\end{lemma}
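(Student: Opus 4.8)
The plan is to argue by contradiction. Let $u$ be a dominating vertex of $G$, so that $u$ is adjacent to every other vertex of $G$, and suppose for contradiction that $G$ admits a disconnected cut $U$; by definition this means that both $G[U]$ and $G-U$ are disconnected. I would then split into two cases according to whether or not $u$ lies in $U$, and show each case is impossible.

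First, suppose $u\in U$. Since $u$ is adjacent to all vertices of $G$, it is in particular adjacent to every other vertex of $U$, so $u$ is a dominating vertex of the induced subgraph $G[U]$. Hence $G[U]$ is connected, contradicting the assumption that $U$ is a disconnected cut. Second, suppose $u\notin U$, so that $u\in V\setminus U$. Again using that $u$ is adjacent to all other vertices of $G$, it is adjacent to every vertex of $V\setminus U$, so $u$ is a dominating vertex of $G-U=G[V\setminus U]$. Hence $G-U$ is connected, again contradicting that $U$ is a disconnected cut. Since both cases yield a contradiction, no such $U$ can exist and $G$ has no disconnected cut.

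There is no real obstacle here, which matches the paper's description of this as a straightforward observation: the two defining conditions of a disconnected cut are dispatched by the two cases, and the only facts needed are that a graph possessing a dominating vertex is connected (including the degenerate single-vertex case) and that $u$ remains dominating in whichever of $G[U]$ or $G-U$ it ends up in. The one point worth flagging is that Lemma~\ref{l-diameter} does not suffice on its own: a graph with a dominating vertex may still have diameter exactly~$2$, which is precisely the case that Lemma~\ref{l-diameter} leaves open, so the direct argument above is genuinely required.
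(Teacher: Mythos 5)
Your proof is correct: the paper states this lemma without proof, calling it a straightforward observation, and your two-case argument (the dominating vertex $u$ makes whichever of $G[U]$ or $G-U$ contains it connected, contradicting that both must be disconnected) is exactly the intended reasoning. Your closing remark that Lemma~\ref{l-diameter} alone would not suffice, since a graph with a dominating vertex can have diameter~$2$, is also accurate.
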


\begin{lemma}\label{l-disconnected}
If a graph~$G$ contains a non-dominating vertex~$u$ with a disconnected neighbourhood, then $G$ has a disconnected cut.
\end{lemma}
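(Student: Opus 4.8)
The plan is to exhibit an explicit disconnected cut, namely the neighbourhood of~$u$ itself. Concretely, I would set $U = N(u)$. By the definition of a vertex with a disconnected neighbourhood, the induced subgraph $G[U] = G[N(u)]$ is disconnected, so the first of the two conditions required of a disconnected cut holds immediately, using only the ``disconnected neighbourhood'' half of the hypothesis.

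It then remains to check that $G - U$ is disconnected. First observe that $u \notin N(u)$, so $u$ survives in $G - U$; moreover every neighbour of~$u$ lies in $U = N(u)$, so $u$ has no neighbour in $V \setminus U$ and is therefore an isolated vertex of $G - U$. This is exactly the point at which I would invoke the hypothesis that $u$ is non-dominating: it guarantees a vertex $v \neq u$ with $v \notin N[u]$, hence $v \in V \setminus U$ as well. Thus $G - U$ contains the isolated vertex~$u$ together with at least one further vertex~$v$, and is consequently disconnected. Since $U = N(u)$ is a nonempty proper subset of~$V$ (nonempty because a disconnected graph has at least two vertices, proper because $u \notin U$) and satisfies both disconnectedness conditions, it is a disconnected cut.

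I do not expect any genuine obstacle here: the argument is a single well-chosen cut followed by two elementary verifications. The only point demanding care is ensuring that the isolated vertex~$u$ actually witnesses disconnectedness of $G - U$, which would fail precisely if $G - U$ had no other vertex; the non-dominating hypothesis rules this out by supplying~$v$. This also clarifies why both halves of the hypothesis are needed, each driving exactly one of the two required conditions: the disconnected neighbourhood handles $G[U]$, and non-domination handles $G - U$.
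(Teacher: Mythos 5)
Your proof is correct and matches the paper's argument: the paper builds the disconnected partition $V_1=\{u\}$, $V_2=V(A_1)$, and the remaining components of $G[N(u)]$ together with $V(G)\setminus N[u]$, whose two resulting cuts are exactly $N(u)$ and $\{u\}\cup(V(G)\setminus N[u])$ --- the same cut you exhibit directly. Your verification (disconnected neighbourhood gives $G[N(u)]$ disconnected; non-domination supplies a vertex outside $N[u]$, so $u$ is isolated in $G-N(u)$ but not alone) is precisely the reasoning underlying the paper's one-line construction.
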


\begin{proof}
Let $A_1,\ldots,A_r$ be the connected components of $G[N(u)]$ for some $r\geq 2$. As $u$ is not dominating, $G-(N(u)\cup \{u\})$ is nonempty. We define $V_1=\{u\}$, $V_2=V(A_1)$, $V_3=V(A_2) \cup \cdots \cup  V(A_r)$ and $V_4=V(G)-(N(u)\cup \{u\})$ and find that $V_1\cup V_3$ (or $V_2\cup V_4$) is a disconnected cut of $G$.
\qed
\end{proof}

Two disjoint vertex sets $S$ and $T$ in a graph~$G=(V,E)$ are {\it complete} if there is an edge between every vertex of $S$ and
every vertex of $T$, and $S$ and $T$ are {\it anticomplete} if there is no edge between a vertex of $S$ and a vertex of $T$.
Recall that $G$ has a {\it disconnected cut} if $V$ can be partitioned into four (nonempty) sets $V_1$, $V_2$, $V_3$, $V_4$, such that $V_1$ is anticomplete to $V_3$ and $V_2$ is anticomplete to $V_4$. We say that $V_1,V_2,V_3,V_4$ form a {\it disconnected partition} of~$G$. 

\begin{lemma}\label{l-4cycle}
Let $V_1$, $V_2$, $V_3$, $V_4$ be a disconnected partition of a graph $G$ of diameter~$2$. Then $G$ has an induced cycle $C$
with $4\leq |V(C)|\leq 5$ such that $V(C)\cap V_i\neq \emptyset$ for $i=1,\ldots,4$.
\end{lemma}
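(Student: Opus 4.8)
The plan is to build the required cycle explicitly by repeatedly invoking the diameter-$2$ property, which supplies a common neighbour for every pair of distinct non-adjacent vertices. The first step is to record where such common neighbours can live. Since $V_1$ is anticomplete to $V_3$, no vertex adjacent to a vertex of $V_3$ can lie in $V_1$, and no vertex adjacent to a vertex of $V_1$ can lie in $V_3$; hence any common neighbour of a pair in $V_1\times V_3$ must lie in $V_2\cup V_4$. Symmetrically, any common neighbour of a pair in $V_2\times V_4$ lies in $V_1\cup V_3$, and any common neighbour of a pair in $V_3\times V_4$ lies in $V_3\cup V_4$. Throughout I would use that the statement is invariant under the symmetries of the labelling that preserve the two anticomplete pairs $\{V_1,V_3\}$ and $\{V_2,V_4\}$ (the dihedral symmetries of the square $V_1V_2V_3V_4$, in particular the swaps $V_1\leftrightarrow V_3$ and $V_2\leftrightarrow V_4$), which lets me pin down the location of each common neighbour without loss of generality.

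First I would fix $v_1\in V_1$ and $v_3\in V_3$; these are non-adjacent, so they have a common neighbour, which after a swap $V_2\leftrightarrow V_4$ I may assume to be some $v_2\in V_2$. Then $v_1-v_2-v_3$ is an induced path, as the only possible chord $v_1v_3$ is absent. To reach $V_4$, I would fix any $v_4\in V_4$; it is non-adjacent to $v_2$, so $v_2,v_4$ have a common neighbour $w\in V_1\cup V_3$, and after a swap $V_1\leftrightarrow V_3$ I may assume $w\in V_1$. Now $v_4-w-v_2-v_3$ has as its only possible chord the pair $v_3v_4$ (the pairs $v_4v_2$ and $wv_3$ are anticomplete). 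If $v_3\sim v_4$, then $w,v_2,v_3,v_4$ induce a $C_4$ meeting all four parts and we are done; otherwise $v_4-w-v_2-v_3$ is an induced $P_4$ through $V_4,V_1,V_2,V_3$ that must be closed up with a fifth vertex.

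In the remaining case $v_3\not\sim v_4$ I would take a common neighbour $y$ of $v_3,v_4$; it lies in $V_3\cup V_4$, so $v_4,w,v_2,v_3,y$ are five distinct vertices still meeting all four parts and $v_4-w-v_2-v_3-y-v_4$ is a cycle. Inspecting its chords, the anticomplete pairs eliminate all of them except a single one: the chord $wy$ when $y\in V_4$, or the chord $v_2y$ when $y\in V_3$. If that chord is also absent, the cycle is an induced $C_5$ and we are done. The main obstacle is precisely this last contingency, where the would-be $C_5$ carries one stray chord; I expect to dispatch it by rerouting. If $wy$ is present (so $y\in V_4$), then $w,v_2,v_3,y$ induce a $C_4$; and if $v_2y$ is present (so $y\in V_3$), then $w,v_2,y,v_4$ induce a $C_4$; in both cases the new quadruple again meets all four parts, and the remaining chords are anticomplete pairs. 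The bulk of the write-up is therefore the disciplined bookkeeping of these final chord cases, together with the routine verification that the chosen vertices are distinct, all of which follows once the three common-neighbour facts and the symmetry reductions are in place.
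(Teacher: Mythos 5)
Your proposal is correct and follows essentially the same construction as the paper: pick $v_1\in V_1$, $v_3\in V_3$, a common neighbour $v_2$ (WLOG in $V_2$), then $v_4\in V_4$ and a common neighbour $w$ of $v_2,v_4$ (WLOG in $V_1$), closing up with a $C_4$ if $v_3\sim v_4$ and otherwise with a fifth vertex $y$ adjacent to $v_3$ and $v_4$. In fact your final chord analysis is \emph{more} careful than the paper's own proof: the paper simply declares the $5$-cycle $u_1',u_2,u_3,w,u_4$ to be the desired induced cycle, overlooking exactly the stray chord you identify (namely $u_1'w$ when $w\in V_4$, or $u_2w$ when $w\in V_3$), whereas you show that if this chord is present it produces an induced $C_4$ still meeting all four parts. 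So your write-up both matches the paper's approach and quietly repairs a small gap in its argument.
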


\begin{proof}
Let $u_1\in V_1$ and $u_3\in V_3$. As $G$ has diameter~2, there exists a vertex $u_2$ in $V_2$ or $V_4$, say $V_2$, such
that $u_2$ is adjacent to $u_1$ and to $u_3$. Let $u_4\in V_4$.
As $G$ has diameter~2, there exists a vertex $u_1'$ in  $V_1$ or $V_3$, say $V_1$, such
that $u_1'$ is adjacent to $u_2$ and $u_4$. If $u_3$ and $u_4$ are adjacent, then we can take as $C$ the cycle on vertices 
$u_1'$, $u_2$, $u_3$, $u_4$ in that order. Otherwise, as $G$ has diameter~2, there exists a vertex $w\in V_3\cup V_4$, such
that $w$ is adjacent to $u_3$ and to $u_4$. In that case we can take as $C$ the cycle on vertices $u_1'$, $u_2$, $u_3$, $w$, $u_4$.
\qed
\end{proof}

Two adjacent vertices $u$ and $v$ of graph~$G=(V,E)$ have a {\it nested neighbourhood} if $N(u)\setminus \{v\}\subseteq N(v)\setminus \{u\}$ or $N(v)\setminus \{u\}\subseteq N(u)\setminus \{v\}$.
We say that $G$ has \emph{distinct neighbourhoods} if $G$ has no two vertices that have nested neighbourhoods.
In our proof we will apply the following lemma exhaustively.

\begin{lemma}\label{l-neighbourhood}
Let $G$ be a graph of diameter~$2$ that contains two vertices $u$ and $v$ such that $N(u)\setminus \{v\}\subseteq N(v)\setminus \{u\}$. Then $G$ has a disconnected cut if and only if $G-u$ has a disconnected cut. Moreover, $G-u$ has diameter at most~$2$.
\end{lemma}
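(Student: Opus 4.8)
The plan is to treat the three assertions separately: the diameter bound on $G-u$, the easy backward implication, and the forward implication, which is where the real work lies. Throughout I would use the hypothesis in the form $N(u)\setminus\{v\}\subseteq N(v)$, together with the fact that $u$ and $v$ are adjacent (they form a nested neighbourhood); thus $v\in N(u)$ and every neighbour of $u$ other than $v$ is also a neighbour of $v$.

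First I would prove that $G-u$ has diameter at most~$2$. Given $x,y\in V(G)\setminus\{u\}$, the diameter-$2$ hypothesis gives $d_G(x,y)\le 2$, so the only obstruction to $d_{G-u}(x,y)\le 2$ is that $x$ and $y$ are non-adjacent with $u$ as their unique common neighbour, which forces $x,y\in N(u)$. If neither equals $v$, then $x,y\in N(u)\setminus\{v\}\subseteq N(v)$, so $v\neq u$ is an alternative common neighbour in $G-u$; and if, say, $x=v$, then $y\in N(u)\setminus\{v\}\subseteq N(v)=N(x)$ makes $x$ and $y$ adjacent, contradicting $d_G(x,y)=2$. Either way the obstruction cannot occur, so $G-u$ has diameter at most~$2$.

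For the backward implication I would take a disconnected partition of $G-u$ and, after relabelling so that $v$ lies in the part $V_1$ of the anticomplete pair $(V_1,V_3)$, simply add $u$ to $V_1$. The only point to verify is that $u$ has no neighbour in $V_3$: since $N(u)\subseteq\{v\}\cup N(v)$, since $v\in V_1$ has no neighbour in $V_3$, and since $v\notin V_3$, no neighbour of $u$ can lie in $V_3$. As $V_2,V_3,V_4$ are unchanged and nonempty, this produces a disconnected partition of $G$.

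The hard part is the forward implication. Starting from a disconnected partition $V_1,V_2,V_3,V_4$ of $G$ with $u\in V_1$, if $|V_1|\ge 2$ I would just delete $u$: deleting a vertex creates no edges, so both anticomplete conditions survive and all four parts stay nonempty. The main obstacle is therefore to rule out $V_1=\{u\}$, and this is exactly where the adjacency of $u$ and $v$ (inherent in a nested neighbourhood) is used; the statement genuinely fails for a non-adjacent pair with $N(u)\subseteq N(v)$. Assuming $V_1=\{u\}$: since $u$ is anticomplete to $V_3$ and $v\in N(u)$, the vertex $v$ lies in $V_2$ or $V_4$, say $V_2$. I would first show that $u$ has a neighbour in $V_4$, for otherwise any $y\in V_4$ would, by diameter~$2$, have a common neighbour with $u$ inside $N(u)\subseteq V_2\cup V_4$, necessarily in $V_2$, contradicting that $V_2$ is anticomplete to $V_4$. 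Choosing such a neighbour $b\in N(u)\cap V_4$, we have $b\neq v$, hence $b\in N(u)\setminus\{v\}\subseteq N(v)$; but then $v\in V_2$ is adjacent to $b\in V_4$, again contradicting that $V_2$ is anticomplete to $V_4$. This contradiction shows $V_1\neq\{u\}$, so we are always in the case $|V_1|\ge 2$ and the deletion of $u$ succeeds. Combining both implications with the diameter bound then proves the lemma.
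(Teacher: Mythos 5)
Your proposal is correct and takes essentially the same route as the paper: the backward direction adds $u$ to $v$'s part, and the forward direction deletes $u$ from its part after showing, via diameter~$2$ and the containment $N(u)\setminus\{v\}\subseteq N(v)\setminus\{u\}$, that this part cannot thereby become empty --- the paper does this constructively (splitting on whether $u$ lies in $v$'s part and exhibiting a vertex that keeps $u$'s part nonempty), while you do it by contradiction from $V_1=\{u\}$, a purely organizational difference. Your explicit reliance on the adjacency of $u$ and $v$ (and your remark that the claim fails for non-adjacent pairs) is also faithful to the paper, whose proof invokes ``$u$ is adjacent to $v$'' even though adjacency is not restated in the lemma's hypothesis; it is built into the paper's definition of nested neighbourhoods.
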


\begin{proof}
As $G$ has diameter~2 and $N(u)\setminus \{v\}\subseteq N(v)\setminus \{u\}$, we find that $G-u$ has diameter at most~2.

First suppose that $G$ has a disconnected cut. Then $G$ has a disconnected partition $V_1,V_2,V_3,V_4$. We may assume without loss of generality that $v$ belongs to $V_1$. 
First assume that $u\in V_1$. 
Hence, $\{u,v\} \subseteq V_1$ and thus, 
$V_1\setminus \{u\}, V_2, V_3, V_4$ form a disconnected partition of $G-u$. Hence, $G-u$ has a disconnected cut.
Now assume that $u\notin V_1$. As $u$ is adjacent to $v$, we may assume without loss of generality that $u$ belongs to $V_2$. 
As $v$ belongs to $V_1$ and $V_1$ is anticomplete to $V_3$, we find that $v$ has no neighbours in $V_3$.
As $N(u)\setminus \{v\}\subseteq N(v)\setminus \{u\}$, this means that $u$ has no neighbours in $V_3$. As $G$ has diameter~2, this means that $V_2$ contains a vertex~$w\neq u$ that has a neighbour in $V_3$, thus $V_2\setminus \{u\}$ is nonempty. As a consequence, $V_1,V_2\setminus \{u\}, V_3, V_4$ form a disconnected partition of $G-u$. Hence, $G-u$ has a disconnected cut.

Now suppose that $G-u$ has a disconnected cut. Then $G-u$ has a disconnected partition $V_1',V_2',V_3',V_4'$. We may assume without loss of generality that $v$ belongs to $V_1'$.
Then, as $N(u)\setminus \{v\}\subseteq N(v)\setminus \{u\}$, we find that $V_1'\cup \{u\}, V_2', V_3', V_4'$ form a disconnected partition of $G$. Hence, $G$ has a disconnected cut.
\qed
\end{proof}

A pair of vertices $u$ and $v$ of a graph $G=(V,E) $ is a {\it universal pair} if $\{u,v\}$ is a dominating set and there exist distinct
vertices $x$ and $y$ in $V\setminus \{u,v\}$, such that $x\in N(u)$ and $y\in N(v)$;
note that this implies that $|V|\geq 4$ and $u,v$ have at least one neighbour in $V-\{u,v\}$.
Let $H$ be a graph.
Then $G$ is {\it $H$-free} if $G$ contains no induced subgraph isomorphic to $H$. 
The {\em disjoint union} $G+\nobreak H$ of two vertex-disjoint graphs~$G$ and~$H$ is the graph $(V(G)\cup V(H), E(G)\cup E(H))$. The disjoint union of~$r$ copies of a graph~$G$ is denoted by~$rG$.
The graphs $C_r$ and $P_r$ denote the cycle and path on~$r$ vertices, respectively. The graph $K_r$ denotes the complete graph on $r$ vertices.
The {\it independence number} $\alpha(G)$ of a graph $G$ is the largest $k$ such that $G$ contains an induced subgraph isomorphic to $kP_1$.

\begin{lemma}[\cite{CDEFFK10}]\label{l-2p2}
A $2P_2$-free graph has a disconnected cut if and only if its complement has a universal pair.
\end{lemma}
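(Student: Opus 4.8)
The plan is to pass to the complement and use the standard reformulation of a disconnected cut in terms of anticomplete parts. Since two sets $S$ and $T$ are anticomplete in $G$ exactly when they are complete in $\overline G$, a disconnected partition $V_1,V_2,V_3,V_4$ of $G$ is precisely a partition of $V(\overline G)$ into four nonempty sets with $V_1$ complete to $V_3$ and $V_2$ complete to $V_4$ in $\overline G$. Moreover $\overline{2P_2}=C_4$, so $G$ is $2P_2$-free if and only if $\overline G$ is $C_4$-free. Hence it suffices to prove that the $C_4$-free graph $H:=\overline G$ admits such a partition if and only if $H$ has a universal pair, and I will argue the two implications separately.

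For the (easier) backward direction, suppose $H$ has a universal pair $\{u,v\}$, witnessed by distinct vertices $x\in N_H(u)$ and $y\in N_H(v)$ in $V\setminus\{u,v\}$. I would set $V_1=\{u\}$ and $V_2=\{v\}$, place $x$ in $V_3$ and $y$ in $V_4$, and assign every remaining vertex $w$ to $V_3$ if $w\in N_H(u)$ and to $V_4$ otherwise; since $\{u,v\}$ dominates $H$, in the latter case $w\in N_H(v)$. This keeps $V_3\subseteq N_H(u)$ and $V_4\subseteq N_H(v)$, so $V_1$ is complete to $V_3$ and $V_2$ is complete to $V_4$, while $x\in V_3$ and $y\in V_4$ ensure all four parts are nonempty. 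This direction uses neither $C_4$-freeness nor any property beyond domination.

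The forward direction is where $C_4$-freeness enters and is the main step. Suppose $H$ has the required partition $V_1,V_2,V_3,V_4$. The key observation is that in each complete pair at least one side induces a clique: if $V_1$ contained a non-edge $aa'$ and $V_3$ a non-edge $cc'$, then, since $V_1$ is complete to $V_3$, the four vertices $a,a',c,c'$ would induce a $C_4$ in $H$, a contradiction; the same argument applies to $V_2,V_4$. After relabeling within each pair (swapping $V_1\leftrightarrow V_3$ and/or $V_2\leftrightarrow V_4$, which preserves the completeness conditions), I may assume $V_1$ and $V_2$ are cliques of $H$. Then any $u\in V_1$ is adjacent to all other vertices of $V_1\cup V_3$ and any $v\in V_2$ to all other vertices of $V_2\cup V_4$, so $\{u,v\}$ dominates $H$; picking $x\in V_3$ and $y\in V_4$ yields distinct witnesses lying in $N_H(u)$ and $N_H(v)$ and outside $\{u,v\}$, so $\{u,v\}$ is a universal pair.

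I expect the only delicate points to be the bookkeeping of the ``without loss of generality'' relabeling and the verification that the chosen witnesses $x,y$ are genuinely distinct and avoid $\{u,v\}$; both follow immediately from the pairwise disjointness of $V_1,V_2,V_3,V_4$. The conceptual crux is the single clique observation extracted from $C_4$-freeness, and this is precisely the place where the $2P_2$-freeness hypothesis is indispensable, since without it the complete pairs could consist of two independent sets and no dominating pair need exist.
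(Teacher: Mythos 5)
Your proof is correct. Note that the paper does not actually prove Lemma~\ref{l-2p2}: it is imported verbatim from \cite{CDEFFK10}, so there is no in-paper argument to compare against. Your route---complementing so that a disconnected partition of $G$ becomes a partition of $\overline{G}$ into nonempty sets $V_1,V_2,V_3,V_4$ with $V_1$ complete to $V_3$ and $V_2$ complete to $V_4$, then using $C_4$-freeness of $\overline{G}$ to force one side of each complete pair to be a clique, and finally picking $u,v$ from the two clique sides and witnesses $x\in V_3$, $y\in V_4$---is the natural (and, as far as the cited source goes, the standard) argument for this equivalence, and all the delicate points you flag (the relabeling, the distinctness of $x,y$ and their avoidance of $\{u,v\}$, and the fact that the backward direction needs no $C_4$-freeness) check out.
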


\begin{lemma}[\cite{DMS12}]\label{l-4p1}
{\sc Disconnected Cut} is $O(n^3)$-time solvable for $4P_1$-free graphs.
\end{lemma}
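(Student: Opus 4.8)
The plan is to first dispose of the easy diameter cases and then exploit the bounded independence number to reduce the problem to \textsc{2-Satisfiability}. Recall that a graph is $4P_1$-free exactly when it has no independent set on four vertices, that is, when $\alpha(G)\le 3$. By Lemma~\ref{l-diameter} we may assume $G$ has diameter exactly~$2$: graphs of diameter~$1$ (or on fewer than four vertices) have no disconnected cut, and graphs of diameter at least~$3$ always have one. Now suppose $G$ has a disconnected cut and fix a disconnected partition $V_1,V_2,V_3,V_4$, so that $V_1$ is anticomplete to $V_3$ and $V_2$ is anticomplete to $V_4$. The key structural observation is that, since $V_1$ is anticomplete to $V_3$, the union of a maximum independent set of $G[V_1]$ with one of $G[V_3]$ is independent in $G$; hence $\alpha(G[V_1])+\alpha(G[V_3])\le\alpha(G)\le 3$. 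As both parts are nonempty, one of $G[V_1],G[V_3]$ is a clique, and symmetrically one of $G[V_2],G[V_4]$ is a clique. Since relabelling $V_1\leftrightarrow V_3$ and $V_2\leftrightarrow V_4$ preserves both anticomplete relations, we may assume $V_1$ and $V_2$ are cliques.

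The crux is that a single representative from each of the two clique-parts pins down the options of every other vertex to just two. I would guess an ordered pair $(a_1,a_2)$ of vertices with the intended meaning $a_1\in V_1$ and $a_2\in V_2$; there are $O(n^2)$ such guesses and the correct one is among them. As $V_1$ is a clique we have $V_1\subseteq N[a_1]$, and as $V_1$ is anticomplete to $V_3$ no neighbour of $a_1$ lies in $V_3$; symmetrically $V_2\subseteq N[a_2]$ and no neighbour of $a_2$ lies in $V_4$. Classifying each remaining vertex $v$ according to whether it is adjacent to $a_1$ and whether it is adjacent to $a_2$ therefore forbids exactly two of the four labels: a common neighbour of $a_1,a_2$ must lie in $V_1\cup V_2$, a vertex adjacent to $a_1$ but not $a_2$ in $V_1\cup V_4$, a vertex adjacent to $a_2$ but not $a_1$ in $V_2\cup V_3$, and a vertex adjacent to neither in $V_3\cup V_4$. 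Thus every vertex has a list of exactly two admissible parts, encoded by one Boolean variable, and the only remaining requirements---that $V_1$ be anticomplete to $V_3$ and $V_2$ anticomplete to $V_4$---translate, for each edge $uv$, into a constant number of $2$-clauses forbidding the label-pairs $\{1,3\}$ and $\{2,4\}$ on $u$ and $v$. The resulting \textsc{2-Satisfiability} formula is solvable in time linear in its size; any satisfying assignment yields a partition of $V$ obeying both anticomplete relations, and conversely the partition fixed above is such an assignment for the correct guess $(a_1,a_2)$. Note that we do not (and need not) force $V_1,V_2$ to remain cliques in the output: the clique property is used only to guarantee that a good guess exists.

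It remains to guarantee that all four parts are nonempty. The parts $V_1,V_2$ are nonempty by construction, so one only needs a solution in which some vertex receives label~$3$ and some vertex receives label~$4$. I expect this bookkeeping, together with pinning down the precise $O(n^3)$ bound, to be the main obstacle, since ``at least one vertex in $V_3$'' is not itself a $2$-clause. A clean way to handle it is to analyse the implication graph of the $2$-SAT instance to decide simultaneously whether labels $3$ and $4$ can both be realised; alternatively one may additionally guess a non-neighbour of $a_1$ to be placed in $V_3$ and a non-neighbour of $a_2$ to be placed in $V_4$, at the cost of extra polynomial factors that a more careful analysis removes. Combining the $O(n^2)$ guesses for $(a_1,a_2)$ with a linear-time solution of each $2$-SAT instance, and using the $\alpha(G)\le 3$ structure to keep the nonemptiness test cheap, yields the claimed $O(n^3)$-time algorithm.
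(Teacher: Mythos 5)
This lemma is never proved in the paper: it is imported verbatim from~\cite{DMS12}, so there is no in-paper argument to compare against and your attempt must be judged on its own merits. The structural half of your argument is correct: since $V_1$ and $V_3$ are anticomplete, $\alpha(G[V_1])+\alpha(G[V_3])\le\alpha(G)\le 3$ forces one of them to be a clique (similarly for $V_2,V_4$); the two independent swaps $V_1\leftrightarrow V_3$ and $V_2\leftrightarrow V_4$ justify the normalisation; and once $a_1\in V_1$ and $a_2\in V_2$ are guessed, the four adjacency types do pin every other vertex to exactly two admissible labels, with the two anticompleteness requirements becoming 2-clauses. This ``guess representatives, then 2-SAT'' scheme is also the standard technique in this literature (it is essentially how~\cite{DFGK05} handles the tractable $H$-partition cases), so the route is a reasonable one.

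The genuine gap is exactly the one you flag and then wave away: nonemptiness of $V_3$ and $V_4$, together with the running time. ``Some vertex receives label~$3$ and some vertex receives label~$4$'' is a disjunction over up to $n$ literals, so your instance is 2-SAT plus two clauses of unbounded width --- not a 2-SAT instance --- and neither of your suggested repairs is carried out. The brute-force repair (also guessing $a_3\in V_3$ and $a_4\in V_4$ as unit clauses) is correct but gives $O(n^4)$ guesses, each needing an $O(n+m)=O(n^2)$ 2-SAT solve (note that $4P_1$-free graphs have $m=\Theta(n^2)$, by Tur\'an's theorem applied to the complement), i.e., $O(n^6)$ overall. The implication-graph repair requires deciding, for each guess $(a_1,a_2)$, whether some label-$3$ literal $\ell_3$ and some label-$4$ literal $\ell_4$ can be \emph{simultaneously} true; this amounts to checking, over pairs $(\ell_3,\ell_4)$, the absence of the paths $\ell_3\to\neg\ell_3$, $\ell_4\to\neg\ell_4$, $\ell_3\to\neg\ell_4$ and $\ell_4\to\neg\ell_3$, i.e., an all-pairs reachability computation costing roughly another factor of $n$ per guess. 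Either way you obtain a polynomial algorithm, but nowhere near the claimed bound: even ignoring nonemptiness entirely, $O(n^2)$ guesses times $O(n^2)$ per 2-SAT instance is already $O(n^4)$. The assertion that ``a more careful analysis removes'' these factors is precisely what would have to be proved; as written, your proposal establishes polynomial-time solvability but not the $O(n^3)$ bound of the statement, which is what Theorem~\ref{t-mainmain} invokes to obtain its $O(n^3m)$ running time.
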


A graph $G$ is {\it bipartite} if $V(G)$ can be partitioned into two classes $A$ and $B$ such that every edge of $G$ has an endpoint in $A$ and an endpoint in $B$. If $A$ is complete to $B$, then $G$ is a {\it complete} bipartite graph.  The graph $K_{s,t}$ denotes the complete bipartite graph wit partition classes of size~$s$ and~$t$, respectively.
The graph $(\{u,v_1,v_2,v_2\},\{uv_1,uv_2,uv_3\})$
is the {\it claw} $K_{1,3}$. A {\it cobipartite} graph is the complement of a bipartite graph.

The {\it line graph} of a graph $G$ with edges $e_1,\ldots,e_p$ is the graph
$L(G)$ with vertices $u_1,\ldots,u_p$ such that there is an edge between any
two vertices $u_i$ and $u_j$ if and only if $e_i$ and $e_j$ have a common endpoint in $G$. 
Note that every line graph is claw-free.
We call $G$ the {\it preimage} of $L(G)$.  Every connected line graph except $K_3$ has a unique 
preimage~\cite{Ha69}.

A {\it circular-arc graph} is a graph that
has a representation in which each vertex 
corresponds to an arc of a circle, such that two vertices 
are adjacent if and only if their corresponding arcs intersect.
An {\it interval graph} is a graph that has representation in which each vertex corresponds to an interval of the line, such that two vertices are adjacent if and only if their corresponding intervals intersect. Note that circular-arc graphs generalize interval graphs.
A circular-arc or interval graph is \emph{proper} if it has a representation where the arcs respectively intervals are such that no one is contained in another.
A {\it chordal graph} is a graph in which every induced cycle is a triangle; note that every interval graph is chordal, but that there exist circular-arc graphs, such as cycles, that are not chordal.

\section{Circular-Arc Graphs}\label{s-circular}

In this section we prove that {\sc Disconnected Cut} is polynomial-time solvable for circular-arc graphs. 
This result is known already for interval graphs, as it follows from the result that {\sc Disconnected
Cut} is polynomial-time solvable for the class of chordal graphs~\cite{IKPT11}, which contains the class of interval graphs. In fact, we have
an $O(n^2)$-time algorithm for interval graphs. Due to Lemma~\ref{l-4cycle} and the fact that interval graphs are chordal, no interval graph of diameter~2 has a disconnected cut. Consequently, an interval graph has a disconnected cut if and only if its diameter is at least~3 due to Lemma~\ref{l-diameter}. To show that {\sc Disconnected Cut} is polynomial-time solvable for circular-arc graphs requires significant additional work. 

Let $G$ be a circular-arc graph. For each vertex $u\in V(G)$ we can associate an arc $[l_u,r_u]$ where we say that $l_u$ is the clockwise left endpoint of $u$ and $r_u$ is the clockwise right endpoint of $u$. 
The following result of McConnell shows that we may assume that all left and right endpoints of the vertices of $G$ are unique.

\begin{lemma}[\cite{Mc03}]\label{l-mcc}
A circular-arc graph $G$ on $n$ vertices and $m$ edges can be recognized in $O(n+m)$ time. In the same time, a representation of $G$ can be constructed with distinct arc endpoints that are
clockwise enumerated as $1,\ldots,2n$.   
\end{lemma}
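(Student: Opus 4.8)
The plan is to split the statement into its two assertions and treat them quite differently. The linear-time \emph{recognition} of circular-arc graphs together with the construction of \emph{some} arc representation is the genuinely deep part, and I would not attempt to reprove it: I would invoke McConnell's algorithm and use its output as a black box. At a high level that algorithm avoids the quadratic cost of earlier approaches through an intricate combination of a geometric transformation of the arc model, reductions to interval-graph--type subproblems that are solvable in linear time (e.g.\ via $PQ$-trees or Lex-BFS), and combinatorial bookkeeping to handle the wrap-around of arcs around the circle and the containment-versus-overlap distinction. I would take for granted that in $O(n+m)$ time it either certifies that $G$ is not a circular-arc graph or returns a valid family $\{[l_u,r_u] : u\in V(G)\}$ of arcs, delivered together with the clockwise order of their $2n$ endpoints.

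The part I would prove directly is that such a representation can be \emph{normalized} to have distinct endpoints clockwise enumerated by $1,\ldots,2n$. Starting from the family above, the only obstruction to distinctness is a point $p$ of the circle at which several endpoints coincide; I would resolve each such $p$ by blowing it up into an arbitrarily short clockwise interval and inserting the endpoints that met at $p$ into this interval in a carefully chosen local order, leaving every other endpoint untouched. Since an arbitrarily small perturbation cannot change whether two arcs that were disjoint, or that already overlapped on an open subarc, intersect, the adjacency relation is preserved away from the coincidence points; it therefore suffices to choose, at each $p$, a local order of the endpoints meeting there that reproduces the intersections dictated by $E(G)$. Once all coincidences are resolved every endpoint is unique, and a single clockwise sweep assigning the labels $1,\ldots,2n$ in the order encountered yields the desired enumeration; because the representation arrives with its endpoints already in clockwise order, the whole normalization is one linear pass and stays within $O(n+m)$.

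The step I expect to be the main obstacle is precisely the local ordering at a coincidence point $p$, where one must rule out a cyclic conflict among the pairwise order constraints. I would dispose of it with the following observation: under the closed-arc convention matching the definition of adjacency as arc intersection, every arc that has an endpoint at $p$, or that contains $p$ in its interior, contains the point $p$, so all such arcs pairwise intersect and hence form a clique of $G$. Consequently no pair meeting at $p$ is required to become non-adjacent, and the only constraint is to keep each such pair overlapping; this is achieved uniformly by placing, inside the blown-up interval, every left endpoint clockwise-before every right endpoint (left endpoints may be ordered arbitrarily among themselves, as may right endpoints, since two arcs sharing a left, respectively right, endpoint overlap on a whole neighbourhood just clockwise, respectively counter-clockwise, of $p$). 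This rule is manifestly acyclic, so a consistent local order always exists and needs no per-pair adjacency test, keeping the pass linear. Making this argument airtight across all wrap-around configurations, rather than the deferred recognition step, is where the remaining care is needed.
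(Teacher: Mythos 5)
This lemma is imported wholesale in the paper as a citation to McConnell~\cite{Mc03}, with no proof given, and your proposal takes essentially the same route: the genuinely hard content (linear-time recognition and construction of an arc model) is deferred to that same reference. Your added perturbation argument for normalizing to distinct endpoints --- blow up each coincidence point, use the fact that all arcs meeting a common point form a clique, and place left endpoints clockwise-before right endpoints --- is correct (modulo routine care with degenerate arcs such as single points or near-full circles) and simply makes explicit a normalization the paper implicitly attributes to~\cite{Mc03}.
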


We need the following lemma.

\begin{lemma}\label{l-circular}
Let $G$ be a circular-arc graph of diameter~$2$ with a disconnected cut. Then $G$ has a disconnected partition $V_1$, $V_2$, $V_3$, $V_4$ such that each $V_i$ is connected.
\end{lemma}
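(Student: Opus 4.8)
The plan is to start from an arbitrary disconnected partition $V_1,V_2,V_3,V_4$ of $G$ and transform it into one in which every class is connected, exploiting the circular-arc representation guaranteed by Lemma~\ref{l-mcc}. First I fix a representation with distinct endpoints $1,\dots,2n$ and recall the defining constraints: $V_1$ is anticomplete to $V_3$ and $V_2$ is anticomplete to $V_4$. Geometrically, anticompleteness of two sets means that no arc from one set intersects any arc from the other. The key observation I would use is that if two arcs are disjoint on the circle, then together they leave two ``gaps'' (two complementary circular segments), and every other arc must either be contained in one of these gaps or cross one of the two boundary regions. I would try to show that the union of the arcs in $V_1$ occupies a region of the circle that is ``separated'' from the union of the arcs in $V_3$ by the arcs of $V_2\cup V_4$, and symmetrically for $V_2$ and $V_4$. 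Because $G$ has diameter~$2$, I can invoke Lemma~\ref{l-4cycle} to obtain a short induced cycle $C$ with $4\le |V(C)|\le 5$ meeting all four classes; this cycle pins down a concrete cyclic arrangement of representatives $u_1\in V_1,\dots,u_4\in V_4$ around the circle that I can use as anchors.

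The core of the argument is a ``rerouting'' or ``reassignment'' step. I would argue that the disconnected partition induces a coarse circular structure: reading around the circle, the arcs naturally cluster, and I want to show that each $V_i$ can be taken to be one contiguous cluster. Concretely, I expect that the two anticomplete pairs $(V_1,V_3)$ and $(V_2,V_4)$ force the circle to split into (at most) four circular intervals, one ``dominated'' by each class, with the cross pairs $V_1,V_3$ sitting on opposite sides and $V_2,V_4$ sitting on the remaining two opposite sides (this is exactly the cyclic order $V_1,V_2,V_3,V_4$ suggested by the induced $C_4$/$C_5$ from Lemma~\ref{l-4cycle}). If a class $V_i$ is currently disconnected, it has an arc lying ``out of place'' relative to this cyclic order; I would move such an arc to whichever class owns the circular region where that arc actually lives. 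I must verify that any such reassignment preserves both anticompleteness conditions — this is where the geometry does the work, since an arc placed in the wrong cluster cannot intersect the arcs of the class it is anticomplete to precisely because those arcs are confined to the opposite circular region. I would also need to ensure no class is emptied during this process; here I again lean on the anchor vertices $u_1,\dots,u_4$ from Lemma~\ref{l-4cycle}, which remain in their respective classes and keep each class nonempty.

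The main obstacle I anticipate is twofold. First, the clean ``four circular intervals'' picture may not hold verbatim: arcs can be long, wrapping far around the circle, and a single long arc might intersect the regions of several classes, so I must handle overlapping or nested arcs carefully and argue that after reassignment no long arc violates an anticompleteness constraint. Second, and more delicate, is guaranteeing connectivity of each resulting class \emph{and} maintaining the partition constraints \emph{simultaneously}: fixing the connectivity of $V_1$ by absorbing stray arcs could in principle break the connectivity of, say, $V_2$, so the reassignment must be shown to terminate and to produce all four classes connected at once. I would control this by defining a fixed target region for each class from the anchors and assigning every arc to the unique class whose region it belongs to (breaking ties via the endpoint order from Lemma~\ref{l-mcc}), then proving in one shot that (i) the assignment is a refinement compatible with the two anticomplete pairs, (ii) each class is connected because its arcs all meet a common sweep region, and (iii) each class is nonempty because it contains its anchor. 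Establishing step~(ii) — that arcs assigned to the same circular region genuinely induce a connected subgraph — is the part I expect to require the most careful case analysis on how arcs overlap near the region boundaries.
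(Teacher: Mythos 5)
Your high-level plan --- anchor the four classes with the short cycle from Lemma~\ref{l-4cycle}, carve the circle into four regions in the cyclic order $V_1,V_2,V_3,V_4$, and re-assign arcs to the class owning the region where they live --- is indeed the skeleton of the paper's proof, but the two issues you defer as ``anticipated obstacles'' are the entire mathematical content of the lemma, and the mechanism you propose for them does not work. You define the target regions from the anchors $u_1,\ldots,u_4$ alone and then assign \emph{every} vertex ``to the unique class whose region it belongs to, breaking ties via the endpoint order''. Uniqueness genuinely fails (an arc can meet two or even three consecutive anchor arcs), and no tie-breaking rule that ignores the original partition can be correct. Concretely, take two adjacent vertices $v,w\in V_4$ such that $v$'s arc meets only the anchor arcs of $u_3$ and $u_4$, while $w$'s arc meets only those of $u_4$ and $u_1$; nothing in your argument excludes this configuration, and a partition-oblivious tie-break (say, preferring the smaller index) sends $v$ to class~$3$ and $w$ to class~$1$, destroying the anticompleteness of the new classes $1$ and $3$ even though both vertices could simply have been kept in class~$4$. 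Note that your step~(ii), which you expected to be the hard one, is actually fine under your scheme (every vertex assigned to class $i$ meets $u_i$'s arc, hence is adjacent to $u_i$, so each class is a connected ``star'' around its anchor); it is step~(i) that is unproven and false as stated. This is no accident: in the paper's \emph{algorithm} (Theorem~\ref{t-circulararc}) exactly this residual assignment of arcs meeting two consecutive regions is handled by a 2-SAT instance, not by a local rule.

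The missing idea, which resolves both obstacles at once, is to define the regions from the \emph{connected components} rather than from the anchors: let $D_i$ be the connected component of $G[V_i]$ containing $u_i$, and let region $i$ be the union $[l_i,r_i]$ of the arcs of the vertices of $D_i$. These four regions cover the circle (they contain the arcs of the cycle $C$), regions $1$ and $3$ (respectively $2$ and $4$) are disjoint because $D_1,D_3$ (respectively $D_2,D_4$) are anticomplete, and only the vertices outside $D_1\cup\cdots\cup D_4$ need to be re-assigned. For such a stray vertex $v$ one shows its arc meets \emph{at most one} region: if it met all four, $v$ would be adjacent to all four classes, contradicting the disconnected partition; if it met three, say regions $1,2,3$, then $v$ is forced into $V_2$, but $v$ is adjacent to $D_2$ and so would already belong to the component $D_2$, contradicting that $v$ is stray; if it met exactly two, they must be consecutive (an arc meeting the disjoint regions $1$ and $3$ would have to pass through points covered only by regions $2$ or $4$), and the same absorption argument gives a contradiction. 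Since the regions cover the circle, each stray's arc is therefore \emph{contained} in a single region $[l_i,r_i]$, and adding it to $V(D_i)$ preserves anticompleteness (the opposite region is disjoint from $[l_i,r_i]$) and connectivity (every point of $[l_i,r_i]$ lies in an arc of $D_i$, so the stray is adjacent to $D_i$). This maximal-component device is exactly what replaces your tie-breaking and makes your claims (i)--(iii) provable; without it, the proof does not go through.
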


\begin{proof}
By Lemma~\ref{l-mcc} we may assume that $G$ has a  representation with distinct arc endpoints clockwise enumerated as $1,\ldots,2n$.
Let $V_1$, $V_2$, $V_3$, $V_4$ be a disconnected partition of $G$.
By Lemma~\ref{l-4cycle}, $G$ contains a cycle~$C$
 with vertices $u_i$ for $i=1,\ldots, j$ (with $u_{j+1}=u_1$) and $j\in \{4,5\}$,
such that $V(C)\cap V_i\neq \emptyset$ for $i=1,2,3,4$. We may assume without loss of generality that if $j=5$, then $u_5\in V_4$.

Let~$D_i$ be the connected component of $G[V_i]$ that contains $u_i$ for $i=1,\ldots,4$. Note that if $j=5$, then
$u_5\in D_4$.
We let $[l_i,r_i]$ denote the arc corresponding to the arc covered by the vertices of $D_i$ (or equivalently, the arc associated with 
the vertex obtained after contracting the connected graph~$D_i$ into a single vertex). 
Note that the arcs $[l_1,r_1]$, \ldots, $[l_4,r_4]$ cover the whole circle,
as they contain the arcs corresponding to the vertices of $C$.
Moreover,
$[l_i,r_i]$ intersects with $[l_{i-1},r_{i-1}]$ and $[l_{i+1},r_{i+1}]$ for $i=1,\ldots,4$ (where $[l_0,r_0]=[l_5,r_5]=[l_1,r_1]$).

We set $V_i':=V(D_i)$. If $V_i'=V_i$ for $i=1,\ldots, 4$, then $V_1$, $V_2$, $V_3$, $V_4$ is our desired partition of $G$. 
Assume that this is not the case. Then $V_1'\cup \cdots \cup V_4'\subsetneq V$. Consider an arbitrary vertex~$v\in V(G)$ that is not in any~$V_i'$. Let $p$ be 
the number of arcs $[l_i,r_i]$ that intersect with $[l_v,r_v]$. We claim that $p\leq 1$.

First suppose that $p=4$. Then $v$ would be adjacent to a vertex of every $V_i$. This is not possible, because $V_1$, $V_2$, $V_3$, $V_4$ is a disconnected partition of $G$.
Now suppose that $p=3$, say $[l_v,r_v]$ intersects with $[l_1,r_1]$, $[l_2,r_2]$ and $[l_3,r_3]$. Then $v$ must be in $V_2$, that is, $v$ is in the same set
as the vertices of $D_2$. As $v$ intersects with $[l_2,r_2]$ we would have put $v$ in $D_2$, a contradiction.
Now suppose that $p=2$. By construction, $[l_v,r_v]$ intersects with two consecutive arcs, say with arcs $[l_1,r_1]$ and $[l_2,r_2]$.
Then $v\in V_1$ or $v\in V_2$. If $v\in V_1$ then we would have put $v$ in $D_1$, and if $v\in V_2$ then we would have put
$v\in D_2$. Hence, this situation is not possible either.

From the above, we conclude that $p\leq 1$, that is,
for each $v\notin V_1'\cup \cdots V_4'$, we have that $[l_v,r_v]$ intersects with at most one~$[l_i,r_i]$. As 
the arcs $[l_1,r_1],\ldots,[l_4,r_4]$ cover the whole circle, this means that $[l_v,r_v]$ is contained in some $[l_i,r_i]$, and we can 
safely put $v$ in $V_i'$ without destroying the connectivity or the anticompleteness of $V_1,V_3$, or of $V_2,V_4$.
Afterwards, the
four sets $V_1'$, $V_2'$, $V_3'$, $V_4'$   form a disconnected partition of $G$, such that each $V_i'$ is connected.
\qed
\end{proof}

We are now ready to prove the main result of this section.

\begin{theorem}\label{t-circulararc}
{\sc Disconnected Cut} is $O(n^2)$-time solvable for circular-arc graphs.
\end{theorem}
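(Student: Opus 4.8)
The plan is to first dispose of the easy diameter cases and then reduce to a purely geometric search on the arc representation. Using Lemma~\ref{l-mcc}, I would compute in $O(n+m)$ time a representation with distinct endpoints clockwise enumerated $1,\dots,2n$, and determine whether the (connected, by assumption) input graph has diameter $1$, $2$, or at least~$3$. By Lemma~\ref{l-diameter} the answer is immediate unless the diameter is exactly~$2$, so the whole problem collapses to the diameter-$2$ case. Here Lemma~\ref{l-circular} is the key lever: a disconnected cut exists \emph{iff} there is a disconnected partition $V_1,V_2,V_3,V_4$ in which every part is connected. I would therefore only search for partitions whose parts are connected, which on a circular-arc representation means each $V_i$ occupies a single contiguous arc $A_i=[l_i,r_i]$, the union of the arcs of its vertices.

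The second step is to state and prove the geometric characterization that drives the search. If $V_1,V_2,V_3,V_4$ is a connected disconnected partition, then the four arcs $A_1,A_2,A_3,A_4$ cover the whole circle and (after relabelling) occur in this cyclic order. Crucially, $A_1$ and $A_3$ are \emph{point-disjoint}: any common point would be covered both by an arc of $V_1$ and by an arc of $V_3$, forcing an edge between $V_1$ and $V_3$ and contradicting that $V_1$ is anticomplete to $V_3$; likewise $A_2\cap A_4=\emptyset$. Conversely, any four arcs covering the circle in cyclic order with $A_1\cap A_3=A_2\cap A_4=\emptyset$ into which every vertex-arc fits (i.e.\ each $\mathrm{arc}(v)\subseteq A_{i(v)}$ for some $i(v)$) yield a valid disconnected partition. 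Thus deciding \textsc{Disconnected Cut} on a diameter-$2$ circular-arc graph is exactly deciding whether the circle admits such a four-arc cover consistent with the given vertex-arcs, and Lemma~\ref{l-4cycle} guarantees that, when it exists, the arrangement is witnessed by a short induced cycle meeting all four parts.

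The third and hardest step is to turn this characterization into an $O(n^2)$-time search. The naive degrees of freedom are the eight boundary points $l_1,r_1,\dots,l_4,r_4$, which is far too many; the point of the analysis is to show that a \emph{canonical} solution is pinned down by only two guessed boundaries. Concretely, I would iterate over the $O(n^2)$ ordered pairs of endpoints that can serve as the two clockwise ends $r_1,r_3$ of the two opposite (disjoint) regions $A_1,A_3$; given such a pair I would grow each region greedily as far clockwise as the disjointness constraints permit, assign every remaining vertex to the unique region whose arc contains its arc, and verify in (amortised) linear time that the result is a genuine four-part partition. Correctness would rest on an exchange argument: starting from any connected disconnected partition (guaranteed by Lemma~\ref{l-circular}), one shifts each vertex lying in an overlap region $A_i\cap A_{i+1}$ to the canonical side without creating a forbidden edge, showing that the canonical partition keyed to some pair $(r_1,r_3)$ is also valid. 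The main obstacle is precisely this step: proving that two boundary guesses suffice (that the other boundaries may be taken extremal), that the greedy assignment of the ambiguous overlap vertices never destroys a solution, and that the bookkeeping can be organised — using the sorted endpoint list and precomputed coverage/intersection data — to spend only $O(1)$ amortised work per pair rather than $O(n)$, so that the total, including the diameter-class test, stays within $O(n^2)$.
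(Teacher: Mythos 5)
Your first two steps match the paper's proof: the reduction to diameter~$2$ via Lemma~\ref{l-diameter}, the use of Lemma~\ref{l-circular} to restrict attention to partitions with connected parts, and the geometric picture of four arcs in cyclic order covering the circle with $A_1\cap A_3=A_2\cap A_4=\emptyset$ are all correct and are exactly what underlies the paper's argument. The genuine gap is in your third step, and it sits precisely where the real difficulty of the theorem lies. After the four regions are seeded, the vertices whose arcs fit into two \emph{consecutive} regions $A_i, A_{i+1}$ admit a binary choice, and these choices interact: if $uv\in E(G)$ with $u$ assignable to $V_i$ or $V_{i+1}$ and $v$ assignable to $V_{i+1}$ or $V_{i+2}$, then the assignment $u\in V_i$, $v\in V_{i+2}$ is forbidden (it puts an edge between anticomplete parts), and similar constraints arise for edges between vertices straddling $\{V_i,V_{i+1}\}$ and $\{V_{i+2},V_{i+3}\}$. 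This residual problem is a genuine system of interacting binary constraints; the paper proves (using diameter~$2$) that every unplaced vertex straddles exactly two consecutive regions and then solves the system exactly as a {\sc 2-Satisfiability} instance in $O(n^2)$ time. Your proposal replaces this with a greedy growth from two guessed endpoints plus an ``exchange argument'' that overlap vertices can be shifted to a canonical side ``without creating a forbidden edge'' --- but this is exactly the claim that needs proof, and you give no argument for it; a local shift of one overlap vertex can create a forbidden edge with another overlap vertex, which is why the paper resorts to 2-SAT rather than greedy.

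A second, compounding gap is the complexity accounting. You enumerate $\Theta(n^2)$ endpoint pairs and then need amortised $O(1)$ work per pair to stay within $O(n^2)$; you acknowledge this bookkeeping as ``the main obstacle'' but offer nothing to discharge it, and it is far from routine (even verifying that a candidate partition is anticomplete naively costs more than constant time per guess). The paper needs no such enumeration at all: it first rules out pairs/triples of arcs covering the circle, then finds a \emph{single} short induced cycle by a greedy right-most-endpoint sweep in $O(n)$ time (Lemma~\ref{l-4cycle} guarantees one exists when a cut does), uses its vertices to seed the four regions (with only a constant number of guesses in the $5$-cycle case), forces the placement of vertices meeting opposite regions, and finishes with one 2-SAT solve. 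Without a proof of your exchange argument and of the amortised bound, your proposal is a plausible plan rather than a proof; if you want to salvage its structure, the cleanest repair is to keep your seeding idea but hand the overlap vertices to 2-SAT as the paper does.
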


\begin{proof}
Let $G=(V,E)$ be a circular-arc graph on $n$ vertices. We compute the diameter of $G$ in $O(n^2)$ time using the 
(more general) $O(n^2)$-time 
algorithm of~\cite{CLSS98} (or of~\cite{ACL95} or the linear-time algorithm of~\cite{MPP2007}).
Lemma~\ref{l-diameter} tells us that if $G$ has diameter~1, then $G$ has no disconnected cut, and if $G$ has diameter at least~3, then $G$ has a disconnected cut.
Assume that $G$ has diameter~$2$. Lemma~\ref{l-circular} tells us that if $G$ has a disconnected cut, then $G$ has a disconnected partition $V_1$, $V_2$, $V_3$, $V_4$ such that $V_i$ is connected for $i=1,\ldots,4$. 
We say that the \emph{arc of a set $V_i$} is the union of all the arcs of the vertices in $V_i$. As $G$ has diameter~2, the union of the arcs of the sets $V_i$ cover the whole circle.
Moreover, the arcs of $V_1$ and $V_3$ are disjoint and the arcs of $V_2$ and $V_4$ are disjoint.

We now compute, in linear time, a representation of $G$ with distinct arc endpoints clockwise enumerated as $1,\ldots,2n$ via Lemma~\ref{l-mcc}. 
After sorting the arcs in $O(n\log n)$ time, we apply the following procedure for each $v\in V$.
We take a neighbour $v'$ with right-most right endpoint amongst all neighbours of $v$.
We then take a neighbour $v''$ with right-most right endpoint amongst all neighbours of $v'$.
We check if $v''$ is also a neighbour of $v$. 
If so, then $v,v',v''$ form a triangle if $v\neq v''$ and an edge if $v=v''$ such that the corresponding arcs cover the whole circle.
In that case $G$ has no disconnected partition $V_1$, $V_2$, $V_3$, $V_4$ such that $V_i$ is connected for $i=1,\ldots,4$
(as the arcs of the three vertices of the triangle must be placed in the corresponding arcs of the sets $V_i$), and then, by  Lemma~\ref{l-circular}, we find that $G$ has no disconnected cut. 
Note that this procedure takes $O(n^2)$ time in total.

Suppose $G$ has no pair or triple of vertices whose arcs cover the whole circle. 
We perform the following procedure exhaustively. We pick a vertex $v_1$ of $G$ and take a neighbour $v_2$ with right-most right endpoint amongst all neighbours of $v_1$. We then do the same for $v_2$, and so on.
If this procedure ends without yielding an induced cycle, then $G$ has no induced cycle on four or more vertices. Hence, 
$G$ has no disconnected cut due to Lemma~\ref{l-4cycle}. 
Otherwise, we have found in $O(n)$ time, an induced cycle $C$
with vertices $v_1,\ldots,v_k$, in that order, for some $k\geq 4$. As $G$ has diameter~2, we find that $k\in \{4,5\}$. Moreover, 
by construction and because $G$ is circular-arc, the arcs corresponding to the vertices $v_i$ must cover the whole circle. 

If $G$ has a disconnected partition $V_1$, $V_2$, $V_3$, $V_4$ such that $V_i$ is connected for $i=1,\ldots,4$, then the above implies the following. If $k=4$, we may assume without loss of generality that $v_i\in V_i$ for $i=1,\ldots,4$.
If $k=5$, two vertices $v_i,v_{i+1}$ belong to the same set $V_h$, whereas the other sets $V_i$ with $i \not= h$ each contain a single vertex from $C$. If $k=5$, then we guess which two vertices $v_i,v_{i+1}$ will be put in the same set, say $v_1,v_5$; this does not influence the asymptotic running time.
Now we build up the sets $V_i$ from scratch by putting in the vertices from $\{v_1,\ldots,v_k\}$.

We will always maintain that each $V_i$ induces a connected graph, and thus, the union of the arcs of the vertices in $V_i$ indeed always form an arc.
We say that a vertex $u$ \emph{intersects} a set $V_i$ if the arc of $u$ intersects the arc of $V_i$. Note that, since the arcs corresponding to $\{v_1,\ldots,v_k\}$ cover the entire circle, so do the arcs of the sets $V_i$ that we are constructing. 
If there is a vertex that intersects each of the sets $V_i$ constructed so far, then there is no disconnected cut with each $V_i$ connected. If $k=4$, this mean that $G$ has no disconnected cut due to Lemma~\ref{l-circular}. If $k=5$, our guess of vertices $v_1,v_5$ to belong to $V_1$ may have been incorrect, and we need to put two other consecutive vertices of $C$ in the same set $V_h$ before concluding that $G$ has no disconnected cut.

Otherwise, we do as follows.
Note that any vertex $u$ that intersects two sets $V_i$ and $V_{i+2}$ for some $i$ (say, $i \in \{1,2\}$ without loss of generality), also intersects $V_{i+1}$ or $V_{i+3}$ (where $V_5 = V_1$). 
We now put a vertex $u$ that intersects two sets $V_i$ and $V_{i+2}$ for some $i$ into set $V_{i+1}$ if $u$ intersects $V_{i+1}$ as well; otherwise, $u$ intersects $V_{i+3}$ and we put $u$ in $V_{i+3}$.

Let $T$ be the set of vertices of $G$ that we have not placed in some set $V_i$ yet. We claim that each vertex of $T$ must intersect with exactly two sets $V_i$ and $V_j$ such that, in addition, $j=i+1$ holds. 
For contradiction, assume this does not hold for $u\in T$. Then $u$ intersects exactly one set $V_i$, say $V_1$. There must exist a path from $u$ to $v_3$ of length~2, as $G$ has diameter~2. Hence, there exists a vertex $w$ that is adjacent to both $u$ and $v_3$. This means that the arc corresponding to $w$ must intersect $V_1$ and $V_3$. Hence, $w$ has already been placed in $V_2$ or $V_4$ and thus the arc of $u$ intersects two sets, namely $V_1$ and $V_2$, or $V_1$ and $V_4$, a contradiction.
The claim follows.

As every vertex in $T$ intersects two sets $V_i$ and $V_{i+1}$, we can model the remaining instance as an instance of {\sc 2-Satisfiability} as follows. Let $u\in T$, where $u$ must be placed in, say, $V_i$ or $V_{i+1}$. We introduce two variables 
$x_u^i$ and $x_u^{i+1}$ with clauses $x_u^i\vee x_u^{i+1}$ and $\bar{x}_u^i\vee \bar{x}_u^{i+1}$.
For each edge $uv$ where $u$ must be placed in $V_i$ or $V_{i+1}$ and $v$ must be placed in $V_{i+1}$ or $V_{i+2}$, we introduce the clause $\bar{x}_u^i\vee \bar{x}_v^{i+2}$. 
For each edge $uv$ where $u$ must be placed in $V_i$ or $V_{i+1}$ and $v$ must be placed in $V_{i+2}$ or $V_{i+3}$, we  introduce the clauses $\bar{x}_u^i\vee \bar{x}_v^{i+2}$ and $\bar{x}_u^{i+1}\vee \bar{x}_v^{i+3}$. 

The correctness of our algorithm follows from the above description.
Assigning the vertices to the sets $V_i$ takes $O(n^2)$ time, whereas solving the corresponding instance of {\sc 2-Satisfiability} takes $O(n^2)$ time as well.
As computing the diameter takes $O(n^2)$ time and all other steps take $O(n^2)$ time as well, the total running time is $O(n^2)$.
\qed
\end{proof}

\section{Line Graphs}\label{s-linegraphs}

In this section we prove that {\sc Disconnected Cut} is polynomial-time solvable for line graphs. 
We start with the following lemma due to Ito et al.~\cite{IKPT11}.

\begin{lemma}[\cite{IKPT11}]\label{l-linegraph}
Let  $G$ be a graph with diameter~$2$ whose line graph $L(G)$ also has diameter~$2$. 
Then $G$ has a disconnected cut if and only if $L(G)$ has a disconnected cut.
\end{lemma}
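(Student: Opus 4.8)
The plan is to translate disconnected partitions back and forth between $G$ and $L(G)$ using the dictionary that vertices of $L(G)$ are edges of $G$, with two of them adjacent exactly when they share an endpoint. The single fact that drives both directions is that \emph{the edges of $G$ incident to a fixed vertex form a clique in $L(G)$}. Consequently, in any disconnected partition $W_1,W_2,W_3,W_4$ of $L(G)$ (indices mod~$4$, with $W_i$ anticomplete to $W_{i+2}$), such a clique can never meet both $W_1$ and $W_3$, nor both $W_2$ and $W_4$; hence at every vertex of $G$ the classes used by its incident edges are confined to two cyclically consecutive classes $\{\ell,\ell+1\}$. I will use this local observation in both implications, and the two diameter-$2$ hypotheses (on $G$ and on $L(G)$, respectively) exclusively to guarantee that the four parts one constructs are all nonempty.

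First I would prove the forward implication. Given a disconnected partition $V_1,V_2,V_3,V_4$ of $G$ with $V_i$ anticomplete to $V_{i+2}$, every edge of $G$ either lies inside some $V_i$ or joins $V_i$ to $V_{i+1}$, since antipodal parts are anticomplete; I place each such edge into $W_i$. Then every edge of $W_1$ has both endpoints in $V_1\cup V_2$ and every edge of $W_3$ has both endpoints in $V_3\cup V_4$. As these two vertex sets are disjoint, no edge of $W_1$ shares an endpoint with an edge of $W_3$, so $W_1$ is anticomplete to $W_3$ in $L(G)$, and symmetrically $W_2$ is anticomplete to $W_4$. The one remaining point is that each $W_i$ is nonempty, and here I would invoke Lemma~\ref{l-4cycle}: as $G$ has diameter~$2$, the partition yields an induced $4$- or $5$-cycle meeting all four parts, and, being induced, its consecutive vertices lie in equal or cyclically consecutive parts, so its edges realise all four arc types $\{i,i+1\}$ and therefore hit every $W_i$.

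For the converse I would read off a vertex labelling of $G$ from a disconnected partition $W_1,\dots,W_4$ of $L(G)$. By the local observation, the set $L(v)$ of classes used by the edges incident to $v$ is contained in some consecutive pair $\{\ell,\ell+1\}$; I set $f(v)=\ell+1$ when $L(v)=\{\ell,\ell+1\}$ and $f(v)=\ell$ when $L(v)=\{\ell\}$ (note $L(v)\neq\emptyset$, as $G$ is connected and so has no isolated vertex). A short case check then shows that every edge in $W_\ell$ has both endpoints with $f$-value in $\{\ell,\ell+1\}$, so no edge of $G$ joins antipodal classes; hence $V_i=f^{-1}(i)$ satisfy $V_1$ anticomplete to $V_3$ and $V_2$ anticomplete to $V_4$. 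The main obstacle is again the nonemptiness of all four $V_i$, and this is precisely where the diameter-$2$ hypothesis on $L(G)$ is essential: applying Lemma~\ref{l-4cycle} to $L(G)$ produces an induced $4$- or $5$-cycle $C'$ of $L(G)$ meeting all four $W_i$. Consecutive edges of $C'$ share a vertex, so they lie in equal or cyclically consecutive classes; since $C'$ meets all four classes, the class sequence runs once around the class-cycle, and the shared vertices of $C'$ (which form a $4$-cycle of $G$ when $|C'|=4$) are forced by the definition of $f$ to take all four values $1,2,3,4$. I expect the $|C'|=5$ case to need the same argument with one extra, repeated class, and to be the only genuinely fiddly point; everything else is routine verification of the anticompleteness conditions.

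In summary, the two implications are built from the same local clique observation, the anticompleteness checks are elementary, and the substance of the proof lies entirely in securing four nonempty parts. That step is handled symmetrically by Lemma~\ref{l-4cycle}, applied to $G$ in one direction and to $L(G)$ in the other, which is exactly why both diameter-$2$ assumptions appear in the hypothesis.
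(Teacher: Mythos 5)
Your proposal is correct, and it is essentially the approach the paper takes — with the caveat that the paper never proves Lemma~\ref{l-linegraph} itself (it is imported from~\cite{IKPT11}). Your converse direction coincides almost word-for-word with the construction the paper gives in Case~3 of the proof of Theorem~\ref{t-linegraphs}: there too, a vertex $u$ of $G$ incident to edges of more than one class is placed in $V_{i+1}$ when its incident edges lie in $V_i'\cup V_{i+1}'$, the anticompleteness check is the same two-line argument, and nonemptiness comes from the cycle of Lemma~\ref{l-4cycle} read back into $G$; your forward direction is the natural inverse of that translation. The one step you flagged as fiddly, $|C'|=5$, does go through exactly as you expect: consecutive vertices of $C'$ share an endpoint, so their classes are equal or cyclically consecutive, and a closed walk of length~$5$ on the four-class cycle that visits all four classes must consist of four unit steps in one direction plus one stationary step; the shared endpoints then receive $f$-values $2,3,4,4,1$ (up to rotation and reflection), so all four parts of the partition of $G$ are nonempty.
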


We need a lemma on graphs whose line graph has diameter~2.

\begin{lemma}\label{l-2k2}
Let $G$ be a graph that is neither a triangle nor a star. Then $L(G)$ has diameter~$2$ if and only if $G$ is $2P_2$-free.
\end{lemma}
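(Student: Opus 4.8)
The plan is to translate the statement into the language of edges of $G$ and reason directly about distances in $L(G)$. Recall that a vertex of $L(G)$ is an edge of $G$, and two such vertices are adjacent exactly when the corresponding edges share an endpoint. Thus two non-adjacent vertices of $L(G)$ correspond to two edges $ab$ and $cd$ of $G$ with four distinct endpoints, and a common neighbour of them in $L(G)$ is precisely an edge $g$ of $G$ incident with one of $a,b$ and with one of $c,d$; since $a,b,c,d$ are distinct, such a $g$ must be a ``crossing'' edge with one endpoint in $\{a,b\}$ and the other in $\{c,d\}$, and in particular $g\notin\{ab,cd\}$. Hence the distance between $ab$ and $cd$ in $L(G)$ equals~$2$ if and only if such a crossing edge exists. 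I would also note at the outset that $G$ has at least two edges (a graph with at most one edge is a star), so $L(G)$ has at least two vertices and ``diameter~$2$'' is meaningful.

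For the forward direction I would argue by contraposition. Suppose $G$ is not $2P_2$-free, so $G$ has an induced $2P_2$ on vertices $a,b,c,d$ with edges $ab,cd$ and no edge between $\{a,b\}$ and $\{c,d\}$. By the observation above, $ab$ and $cd$ are non-adjacent in $L(G)$ and have no common neighbour there, so their distance in $L(G)$ is at least~$3$ (or they lie in different components). Either way $L(G)$ does not have diameter~$2$.

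For the converse, assume $G$ is $2P_2$-free and that $G$ is neither a triangle nor a star. First I would establish diameter $\le 2$: any two non-adjacent vertices of $L(G)$ are edges $ab,cd$ with distinct endpoints, and $2P_2$-freeness supplies a crossing edge $g$, which by the first paragraph is a common neighbour, so their distance is exactly~$2$; adjacent vertices are trivially at distance~$1$. This also shows $L(G)$ is connected. It then remains to rule out diameter $\le 1$, i.e.\ that $L(G)$ is complete. A complete $L(G)$ means every two edges of $G$ share an endpoint, and the (standard) claim is that a graph in which every two edges intersect is a star or a triangle: if some vertex meets all edges we get a star, and otherwise, taking two edges $vx,vy$ through a common vertex $v$ together with a third edge avoiding $v$, that third edge is forced to be $xy$, after which one checks no further edge can be added, forcing $G=K_3$. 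Since $G$ is neither, $L(G)$ is not complete, and therefore has diameter exactly~$2$.

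The only genuinely non-trivial ingredient is this characterisation of graphs in which every two edges intersect, which is needed to push the diameter up to at least~$2$ using the triangle/star exclusion; everything else reduces to the single edge-counting observation of the first paragraph. I therefore expect that small case analysis, rather than either implication itself, to be the main (though modest) obstacle.
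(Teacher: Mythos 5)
Your proof is correct and follows essentially the same route as the paper's: both directions hinge on the observation that an induced $2P_2$ in $G$ corresponds exactly to a pair of vertices of $L(G)$ that are non-adjacent and share no common neighbour, and the diameter-$1$ case is excluded because $L(G)$ is complete only when $G$ is a triangle or a star. The only differences are cosmetic --- you argue the converse directly rather than by contradiction, and you spell out the standard proof that a graph with pairwise-intersecting edges is a star or triangle, a fact the paper invokes without proof.
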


\begin{proof}
Let $G$ be a graph that is neither a triangle nor a star.
First suppose that $L(G)$ has diameter 2. In order to obtain a contradiction, assume that
$G$ is not $2P_2$-free. Then $G$ contains an induced subgraph $H$ with vertices $s,t,u,v$ and edges $e_1=st$ and $e_2=uv$.
As $H$ is an induced subgraph of $G$, we find that $e_1$ and $e_2$ are non-adjacent vertices in $L(G)$. Then, because $L(G)$ has diameter 2, $L(G)$ contains a vertex $e_3$ that is adjacent to $e_1$ and $e_2$. However, then $e_3$ is an edge with one endvertex in $\{s,t\}$ and the other one in $\{u,v\}$. This means that $H$ is not induced, a contradiction.

Now suppose that $G$ is $2P_2$-free. In order to obtain a contradiction, assume that $L(G)$ has diameter not equal to~2.
If the diameter of $L(G)$ is~1, then $L(G)$ is a complete graph implying that $G$ is a triangle or star, which is not what we assume.
If the diameter of $L(G)$ is at least~3, then $L(G)$ contains two vertices $e_1$ and $e_2$ that are of distance at least~3. 
This means that $e_1$ and $e_2$ form an induced $2P_2$ in $G$, a contradiction.
\qed
\end{proof}

We are now ready to prove the main result of this section.

\begin{theorem}\label{t-linegraphs}
{\sc Disconnected Cut} is $O(n^4)$-time solvable for line graphs of $n$-vertex~graphs.
\end{theorem}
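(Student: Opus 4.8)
The plan is to reduce everything to the preimage of $L(G)$. Since $L(G)$ is connected and every connected line graph other than $K_3$ has a unique preimage, I would first compute the preimage $G$ in polynomial time (the exceptional graph $K_3$ has diameter~$1$ and hence no disconnected cut by Lemma~\ref{l-diameter}). If $G$ is a triangle or a star, then $L(G)$ is complete and has no disconnected cut by Lemma~\ref{l-dominating}, so assume otherwise. By Lemma~\ref{l-2k2}, $L(G)$ then has diameter~$2$ exactly when $G$ is $2P_2$-free, and diameter at least~$3$ otherwise; testing for an induced $2P_2$ takes $O(n^4)$ time, and if one exists then $L(G)$ has a disconnected cut by Lemma~\ref{l-diameter}. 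The only surviving case is that $G$ is $2P_2$-free, so $L(G)$ has diameter~$2$; moreover a $2P_2$-free graph has no two vertices at distance~$4$ (a shortest such path would induce a $2P_2$), so $G$ has diameter at most~$3$.

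It remains to decide, for a connected $2P_2$-free graph $G$ of diameter~$2$ or~$3$, whether $L(G)$ has a disconnected cut. When $G$ has diameter~$2$, both $G$ and $L(G)$ have diameter~$2$, so Lemma~\ref{l-linegraph} says $L(G)$ has a disconnected cut if and only if $G$ does; as $G$ is $2P_2$-free, Lemma~\ref{l-2p2} reduces this to checking whether $\overline{G}$ has a universal pair, which I would do in $O(n^3)$ time over all vertex pairs. (If $G$ has diameter~$1$ it is complete, $L(G)$ is itself $2P_2$-free, and Lemma~\ref{l-2p2} applies to $L(G)$ directly.) Together with the preimage computation these steps stay within the claimed $O(n^4)$ bound, where $n=|V(G)|$.

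The case where $G$ has diameter exactly~$3$ is the heart of the matter. I would first expose the skeleton of $G$: fixing $u,v$ with $d_G(u,v)=3$ and writing $A=N(u)$, $B=N(v)$, one has $A\cap B=\emptyset$, and applying $2P_2$-freeness to the disjoint edges $uu'$ and $vv'$ (for $u'\in A$, $v'\in B$) forces $u'v'\in E(G)$; hence $A$ is complete to $B$, and a similar analysis controls how the remaining vertices attach to this complete bipartite core. It is convenient to reformulate the target: a disconnected cut of $L(G)$ is precisely a partition of $E(G)$ into two nonempty sets $U$ and $E(G)\setminus U$ such that each of them, regarded as a subgraph of $G$, is disconnected (its edges split into at least two parts with no common vertex). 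By Lemma~\ref{l-4cycle}, any disconnected cut of the diameter-$2$ graph $L(G)$ is moreover witnessed by an induced $C_4$ or $C_5$, which (being of length at least~$4$) pulls back to an induced $C_4$ or $C_5$ of $G$ meeting all four classes.

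The main obstacle is to pin down \emph{exactly} when such a partition exists, and the complete bipartite core between $A$ and $B$ is what makes this delicate. My analysis indicates that the optimistic guess ``diameter~$3$ forces no disconnected cut'' is in fact false: attaching $u$ and $v$ to the two colour classes of an induced $C_4$ already yields a diameter-$3$ graph $G$ whose line graph has a disconnected cut, obtained by splitting the two perfect matchings of the core and routing the edges at $u$ and $v$ accordingly. I would therefore aim at a dichotomy of roughly the following shape: either $G$ has a dominating edge $xy$, whose corresponding vertex of $L(G)$ is dominating so that Lemma~\ref{l-dominating} forbids a disconnected cut, or $G$ contains a suitable induced $C_4$ from which a disconnected cut of $L(G)$ can be assembled by distributing the core edges and the edges incident to $u$ and $v$ into the four classes. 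Proving that exactly one of these alternatives always holds, and that the deciding structure can be located in $O(n^4)$ time---in the spirit of the structured short-cycle search used for circular-arc graphs in Theorem~\ref{t-circulararc}---is the crux I expect to be hardest.
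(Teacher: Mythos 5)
Your proposal matches the paper on all the easy steps (the triangle/star check, Lemma~\ref{l-2k2}, deciding the diameter of $L(G)$, and the diameter-$2$ case via Lemmas~\ref{l-linegraph} and~\ref{l-2p2}), and your counterexample to ``diameter~$3$ forces no disconnected cut'' is in fact correct: for the graph $G$ obtained by attaching $u$ to $\{a_1,a_2\}$ and $v$ to $\{b_1,b_2\}$ of an induced $C_4$ on $a_1b_1a_2b_2$, the partition $V_1=\{ua_1\}$, $V_2=\{ua_2\}$, $V_3=\{vb_1,vb_2,a_2b_1,a_2b_2\}$, $V_4=\{a_1b_1,a_1b_2\}$ is a disconnected partition of $L(G)$. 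But this is exactly where the genuine gap lies: you leave the diameter-$3$ case---which you yourself call the crux---as an unproven and rather vague conjectured dichotomy (``dominating edge vs.\ suitable induced $C_4$''). The missing idea is Lemma~\ref{l-disconnected}, applied \emph{before} any case analysis on the diameter of $G$: the paper first checks whether $L(G)$ has a dominating vertex (then there is no cut by Lemma~\ref{l-dominating}), and then whether $L(G)$ has a vertex with a disconnected neighbourhood, i.e., whether $G$ has an edge $xy$ with both endpoints of degree at least~$2$ and no common neighbour (then there is a cut by Lemma~\ref{l-disconnected}). Your counterexample is precisely of this second type (take the edge $a_1b_1$), so it is dispatched before the diameter cases are ever reached. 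Under these two extra assumptions the paper then proves that diameter~$3$ really does force no disconnected cut: the key Claim is that every $4$- or $5$-cycle of $G$ meets at most three classes of any disconnected partition of $G$; its proof uses $2P_2$-freeness to force two consecutive cycle vertices $u_2,u_3$ into singleton components of their classes, whence the vertex $u_2u_3$ of $L(G)$ would have a disconnected neighbourhood---a contradiction with the preprocessing. Then any hypothetical disconnected partition of $L(G)$ is pulled back, via Lemma~\ref{l-4cycle}, to a disconnected partition of $G$ together with a short cycle meeting all four classes, contradicting the Claim. Without this reorganization your dichotomy is both unproven and not obviously correct, so the theorem is not established.

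A secondary error: your parenthetical treatment of the case where $G$ is complete is wrong. $L(K_n)$ is \emph{not} $2P_2$-free for $n\geq 6$ (the edges $12$, $13$, $45$, $46$ of $K_6$ induce a $2P_2$ in $L(K_6)$), so Lemma~\ref{l-2p2} cannot be applied to $L(G)$ directly. The paper instead proves this case outright: if $L(G)$ had a disconnected partition, Lemma~\ref{l-4cycle} would give an induced cycle on vertices $u_1u_2,u_2u_3,u_3u_4,\ldots$ of $L(G)$ meeting all four classes, but since $G$ is complete, $u_1u_3$ is an edge of $G$ and hence a vertex of $L(G)$ adjacent to vertices in all four classes, which is impossible. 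So the diameter-$1$ case also needs repair in your write-up, though this is a far smaller issue than the diameter-$3$ gap.
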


\begin{proof}
Let $G$ be a graph on $n$ vertices and $m$ edges. 
We will show how to decide in $O(n^4)$ time if $L(G)$ has a disconnected cut.
We first check in $O(n)$ time if $G$ is a triangle or star.
If so, then $L(G)$ is a complete graph and thus $L(G)$ has no disconnected cut.
From now on suppose that $G$ is neither a triangle nor a star.
By Lemma~\ref{l-2k2} we find that $L(G)$ has diameter~2 if and only if $G$ is $2P_2$-free. Hence we can check in 
$O(n^4)$ time, via checking if $G$ has an induced $2P_2$ by brute force, if $L(G)$ has diameter~2. 

First assume that $L(G)$ does not have diameter~2. As $G$ is not a triangle or a star, $L(G)$ has diameter at least~3.
By Lemma~\ref{l-diameter} we find that $L(G)$ has a disconnected cut.
Now assume that $L(G)$ has diameter~2.
We check in $O(n^3)$ time if $G$ has an edge $uv$ such that every vertex of $V(G)\setminus \{u,v\}$ is adjacent to at least one of
$u,v$. If so, then $uv$ is a dominating vertex of $L(G)$, and $L(G)$ has no disconnected cut due to Lemma~\ref{l-dominating}.
If not, then $L(G)$ has no dominating vertices, and we proceed as follows. First we check if $L(G)$ has a vertex $uv$ with a disconnected neighbourhood, or equivalently, if $G$ contains an edge $uv$ such that $u$ and $v$ have degree at least~2 and no common neighbours.
This takes $O(n^3)$ time. If $L(G)$ has a vertex with a disconnected neighbourhood, then $L(G)$ has a disconnected
cut by Lemma~\ref{l-disconnected}.
From now on assume that $L(G)$ has no vertex with a disconnected neighbourhood.
As $G$ is neither a triangle nor a star, $G$ is $2P_2$-free by Lemma~\ref{l-2k2}.
Hence, $G$ has diameter at most~3.  
We can determine in $O(n^3)$ time the diameter of $G$ and consider each case separately.

\medskip
\noindent
{\bf Case 1.} $G$ has diameter~1.\\
We claim that $L(G)$ has no disconnected cut. For contradiction, assume that $L(G)$ has a disconnected cut. Let $V_1'$, $V_2'$, $V_3'$, $V_4'$ be a disconnected partition of $L(G)$.
By Lemma~\ref{l-4cycle}, $L(G)$ contains a cycle~$C'$
 with vertices $u_iu_{i+1}$ for $i=1,\ldots, j$ (with $u_{j+1}=u_1$) and $j\in \{4,5\}$,
such that $V(C')\cap V_i'\neq \emptyset$ for $i=1,2,3,4$. 
Then we may assume without loss of generality that $u_iu_{i+1}\in V_i'$ for $i=1,\ldots,4$ and $u_ju_{j+1}\in V_4'$.
As $G$ has diameter~1, $u_1u_3$ is an edge of $G$ and thus a vertex of $L(G)$. In $L(G)$, $u_1u_3$ is adjacent 
to every vertex in $\{u_1u_2,u_2u_3,u_3u_4,u_ju_{j+1}\}$, and thus to a vertex in $V_i'$ for $i=1,\ldots,4$, a contradiction.

\medskip
\noindent
{\bf Case 2.} $G$ has diameter~2.\\
Then $G$ has a disconnected cut if and only if $L(G)$ has a disconnected cut due to Lemma~\ref{l-linegraph}.
By Lemma~\ref{l-2p2} it suffices to check if $\overline{G}$ has a universal pair. This takes $O(n^3)$ time.

\medskip
\noindent
{\bf Case 3.} $G$ has diameter~3.\\
We will prove that $L(G)$ has no disconnected cut.
As $G$ has diameter~3, $G$ does have a disconnected cut by Lemma~\ref{l-diameter}. We need the following claim.

\medskip
\noindent
{\it Claim.
Let $V_1, V_2, V_3, V_4$ be a disconnected partition of $G$. Then every cycle~$C$ of $G$ with $4\leq |V(C)|\leq 5$ contains
vertices of at most three distinct sets from $\{V_1, V_2, V_3, V_4\}$.}

\medskip
\noindent
We prove the Claim as follows. For contradiction, assume that $G$ has a cycle $C$ with vertices $u_1,\ldots,u_j$ for $j\in \{4,5\}$,
such that $V(C)\cap V_i\neq \emptyset$ for $i=1,\ldots,4$. We may assume without loss of generality that $u_i\in V_i$ for
$i=1,\ldots,4$ and $u_j\in V_4$. 
As $G$ is $2P_2$-free, we may assume without loss of generality that $u_3$ is in a singleton connected component of 
$G[V_3]$. If $j=4$, then we may also assume without loss of generality that $u_2$ is in a singleton connected component of $G[V_2]$.
If $j=5$, then $u_2$ must be in a singleton connected component of $G[V_2]$ due to the edge $u_4u_5$, which is contained
in $G[V_4]$. 
This means that the sets $E(u_2)=\{u_2w \; |\; w\in N_G(u_2)\setminus \{u_3\}\}$ and $E(u_3)=\{u_3w\; |\; w\in N_G(u_3)\setminus \{u_2\}\}$ are disjoint.
As $u_1u_2$ and $u_3u_4$ are edges of $G$, both $E(u_2)$ and $E(u_3)$ are nonempty.
Hence, the vertex $u_2u_3$ has a disconnected neighbourhood in $L(G)$, a contradiction. This proves the Claim.

\medskip
\noindent
Now, for contradiction, assume that $L(G)$ has a disconnected cut. 
 Let $V_1'$, $V_2'$, $V_3'$, $V_4'$ be a disconnected partition of $L(G)$. By Lemma~\ref{l-4cycle}, $L(G)$ contains a cycle~$C'$
 with vertices $u_iu_{i+1}$ for $i=1,\ldots, j$ (with $u_{j+1}=u_1$) and $j\in \{4,5\}$,
such that $V(C')\cap V_i'\neq \emptyset$ for $i=1,2,3,4$. 
Then we may assume without loss of generality that $u_iu_{i+1}\in V_i'$ for $i=1,\ldots,4$ and $u_ju_{j+1}\in V_4'$. 

We define the following partition $V_1$, $V_2$, $V_3$, $V_4$ of $V(G)$. Let $u\in V(G)$. If $u$ is incident to only edges from one set  $V_i'$, then we put $u$ in $V_i$. Suppose $u$ is incident to edges from more than one set $V_i'$. As $V_1'$, $V_2'$, $V_3'$, $V_4'$ is a disconnected partition of $L(G)$, we find that $u$ is incident to edges from $V_i'$ and $V_{i+1}'$ for some $1\leq i\leq 4$
(where $V_5=V_1$) and to no other sets $V_j'$. In that case we put $u$ into $V_{i+1}$. 

We now prove that $V_1$ is anticomplete to $V_3$. For contradiction, suppose that $V_1$ contains a vertex $u$ and $V_3$ contains a vertex $v$ such that $uv\in E(G)$. As $u\in V_1$, we find that $u$ is incident to edges only in $V_4'$ and $V_1'$. Hence, $uv\in V_1'\cup V_4'$. As $v\in V_3$, we find that $v$ is incident to edges only in $V_2'$ and $V_3'$. Hence $uv\in V_2'\cup V_3'$, a contradiction.
By the same argument we can show that $V_2$ is anticomplete to $V_4$.
Let $C$ be the cycle with vertices $u_1,\ldots, u_j$ in $G$. Then $V(C)\cap V_i\neq \emptyset$, and thus $V_i\neq \emptyset$, for $i=1,\ldots,4$.
Hence, $V_1$, $V_2$, $V_3$, $V_4$ is a disconnected partition of $G$, and $C$ is a cycle in $G$ with $V(C)\cap V_i\neq \emptyset$
for every $i$.
This is not possible due to the Claim. We conclude that $L(G)$ has no disconnected cut.

\smallskip
\noindent
The correctness of our algorithm follows from the above description. 
If $G$ has diameter~1 (Case 1) or diameter~3 (Case 3), no additional running time is required, as 
we showed that 
$L(G)$ has no disconnected cut in both these cases. Hence,
only executing Case~2 takes additional time, namely time $O(n^3)$.
Hence the total running time of our algorithm is~$O(n^4$), as desired.
\qed
\end{proof}

\section{Claw-Free Graphs}\label{s-claw}
In this section, we prove that {\sc Disconnected Cut} is polynomial-time solvable on claw-free graphs. The proof consists of two parts. In Section~\ref{s-1} we show how to get rid of certain cobipartite structures in the graph, called W-joins. We remark that {\sc Disconnected Cut} can be solved in polynomial time on cobipartite graphs~\cite{FMPS09}. Although this is a necessary condition for {\sc Disconnected Cut} to be solvable in polynomial time on claw-free graphs, the algorithm for cobipartite graphs is not sufficient to deal with W-joins.
In Section~\ref{s-2} we present our new decomposition theorem for claw-free graphs of diameter~2 and
combine this theorem with the results from the previous sections and Section~\ref{s-1} to show our main result.

\subsection{Cobipartite Structures versus Disconnected Cuts}\label{s-1}

We consider the following cobipartite structures that might be present in claw-free graphs~\cite{CS05,HMLW11-arxiv,HMLW11}. 
A pair $(A,B)$ of disjoint non-empty sets of vertices is a \emph{W-join} in graph $G$ if $|A|+|B| > 2$, $A$ and $B$ are cliques, $A$ is neither complete nor anticomplete to $B$, and every vertex of $V(G) \setminus (A \cup B)$ is either complete or anticomplete to $A$ and either complete or anticomplete to $B$. A W-join is a \emph{proper W-join} if each vertex in $A$ is neither complete nor anticomplete to $B$ and each vertex in $B$ is neither complete nor anticomplete to $A$. Observe that for a proper W-join $(A,B)$, it must hold that $|A|,|B| \geq 2$. For any W-join $(A,B)$, it holds that $G[A \cup B]$ is a cobipartite induced subgraph in $G$.

We assume that an input graph $G$ of {\sc Disconnected Cut} has diameter~$2$ and that $G$ has distinct neighbourhoods, by Lemmas~\ref{l-diameter} and~\ref{l-neighbourhood} respectively. We show how to use these assumptions to remove all W-joins in a claw-free graph and obtain an equivalent instance of {\sc Disconnected Cut}.
As a first step, we show that we can focus on proper W-joins.

\begin{lemma}\label{l-mustbeproper}
Let $G$ be a graph with distinct neighbourhoods. If $G$ admits a W-join $(A,B)$, then $(A,B)$ is a proper W-join.
\end{lemma}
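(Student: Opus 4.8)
The plan is to analyze the neighbourhood of a vertex on one side of the join and reduce everything to a single inclusion test. Fix a W-join $(A,B)$ and let $W_A$ denote the set of vertices in $V(G)\setminus(A\cup B)$ that are complete to $A$; by the W-join definition every outside vertex is either complete or anticomplete to $A$, so for each $a\in A$ we have $N(a)=(A\setminus\{a\})\cup(N(a)\cap B)\cup W_A$, where $W_A$ does not depend on $a$. Consequently, for two distinct $a,a'\in A$ (which are adjacent, as $A$ is a clique), the sets $N(a)\setminus\{a'\}$ and $N(a')\setminus\{a\}$ share the common part $(A\setminus\{a,a'\})\cup W_A$, which is disjoint from $B$, and differ only in their intersection with $B$. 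Hence $N(a)\setminus\{a'\}\subseteq N(a')\setminus\{a\}$ if and only if $N(a)\cap B\subseteq N(a')\cap B$, and symmetrically with $a,a'$ interchanged.

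The core observation is then immediate: since $G$ has distinct neighbourhoods, no two adjacent vertices have nested neighbourhoods, so for all distinct $a,a'\in A$ the sets $N(a)\cap B$ and $N(a')\cap B$ must be incomparable under inclusion (neither contains the other). By the symmetry of the W-join definition in $A$ and $B$, the same argument on the other side gives: for all distinct $b,b'\in B$, the sets $N(b)\cap A$ and $N(b')\cap A$ are incomparable.

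With this in hand I would finish in two steps. First I would rule out a singleton side. If $|A|=1$, say $A=\{a\}$, then $N(b)\cap A\subseteq\{a\}$ for every $b\in B$, so any two such sets are comparable; but the W-join condition that $A$ is neither complete nor anticomplete to $B$ forces some $b$ with $N(b)\cap A=\{a\}$ and some $b'$ with $N(b')\cap A=\emptyset$, and since $|B|\ge 2$ (as $|A|+|B|>2$) these are distinct, contradicting the symmetric core observation. Hence $|A|\ge 2$, and symmetrically $|B|\ge 2$. Second, I would establish properness directly: if some $a\in A$ were complete to $B$, then $N(a)\cap B=B\supseteq N(a')\cap B$ for every other $a'\in A$, and since $A$ is not complete to $B$ such an $a'\ne a$ exists, yielding comparable $B$-neighbourhoods — a contradiction; the anticomplete case is symmetric, using that $A$ is not anticomplete to $B$ to supply a witness $a'$ with $N(a')\cap B\neq\emptyset$. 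Applying the identical argument to $B$ shows every vertex of $B$ is neither complete nor anticomplete to $A$, so $(A,B)$ is a proper W-join.

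The only real obstacle is the singleton-side case: a naive attempt at properness would stall because when $|A|=1$ the properness condition on the $B$-side can never hold on its own, so one must first invoke the distinct-neighbourhoods hypothesis (through its symmetric form on $B$) to exclude $|A|=1$ and $|B|=1$ before the clean inclusion argument applies. The remainder is routine, resting only on the fact that the W-join axioms guarantee both an edge and a non-edge between $A$ and $B$, which furnish the witnesses $a'$ and $b'$ used throughout.
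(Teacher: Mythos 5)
Your proof is correct and rests on essentially the same mechanism as the paper's: since $A$ and $B$ are cliques and every outside vertex is complete or anticomplete to each side, neighbourhoods of two vertices in $A$ (or in $B$) differ only in their traces on the opposite side, so the distinct-neighbourhoods hypothesis forbids comparable traces. The paper organizes the cases differently (ruling out anticomplete vertices first, then handling a complete vertex via a cascade through $|A|=1$ and $|B|\geq 2$, rather than excluding singleton sides up front as you do), but this is only a reordering of the same elementary argument.
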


\begin{proof}
We need to show that no vertex of $A$ (respectively $B$) is complete or anticomplete to $B$ (respectively $A$). Suppose there exists a vertex $a \in A$ such that $a$ is anticomplete to $B$. Since $A$ is not anticomplete to $B$, there exists a vertex $a' \in A \setminus\{a\}$ such that $a'$ is adjacent to some vertex of $B$. This implies that $N(a)\setminus\{a'\} \subseteq N(a') \setminus\{a\}$, which is a contradiction to the fact that $G$ has distinct neighbourhoods. Hence, no vertex of $A$ is anticomplete to $B$. Similarly, no vertex of $B$ is anticomplete to $A$.

Suppose there exists a vertex $a \in A$ such that $a$ is complete to $B$. Then for every $a' \in A \setminus\{a\}$, it holds that $N(a')\setminus\{a\} \subseteq N(a) \setminus\{a'\}$. Since $G$ has distinct neighbourhoods, no such $a'$ exists, and thus $|A| = 1$. Hence, $|B| \geq 2$ by the definition of a W-join. However, no vertex of $B$ is anticomplete to $A$. Since $|A| = 1$, this implies that each vertex of $B$ is complete to $A$. Following the same reasoning as before, this implies that $|B| = 1$, a contradiction. Hence, no vertex of $A$ is complete to $B$. Similarly, no vertex of $B$ is complete to $A$.
\qed
\end{proof}

We now argue that we can focus on a specific type of proper W-joins.
A W-join $(A,B)$ is \emph{partitionable} if there are partitions of $A$ into non-empty sets $A',A''$ and of $B$ into non-empty sets $B',B''$ such that $A'$ is anticomplete to $B''$ and $B'$ is anticomplete to $A''$. A proper W-join $(A,B)$ is \emph{shatterable} if it is partitionable with sets $A',A'',B',B''$ and one of $(A',B'), (A'',B'')$ is also a proper W-join; we say it is \emph{unshatterable} otherwise.

\begin{lemma}\label{l-shatterc4}
Let $G$ be a graph with distinct neighbourhoods and let $(A,B)$ be a proper W-join in $G$. If $(A,B)$ is partitionable and unshatterable, then $G[A \cup B]$ is isomorphic to $C_4$.
\end{lemma}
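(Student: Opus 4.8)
The plan is to fix, using partitionability, a partition of $A$ into non-empty sets $A',A''$ and of $B$ into non-empty sets $B',B''$ with $A'$ anticomplete to $B''$ and $A''$ anticomplete to $B'$, and then to show that both $(A',B')$ and $(A'',B'')$ reduce to single edges; this forces $|A|=|B|=2$ with a perfect matching between them, which is exactly $C_4$. Throughout I will use that, since $A$ and $B$ are cliques, so are $A',A'',B',B''$, and that since $(A,B)$ is a proper W-join each vertex of $A'$ has a neighbour in $B$; as $A'$ is anticomplete to $B''$, this neighbour lies in $B'$. Hence the bipartite graph between $A'$ and $B'$ has no isolated vertices, and in particular $A'$ is not anticomplete to $B'$ (and symmetrically for $A''$ and $B''$).

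The first key step is to observe that $(A',B')$ automatically satisfies the external condition of a W-join. Indeed, every $a''\in A''$ is complete to $A'$ (as $A$ is a clique) and anticomplete to $B'$ (by the partition); every $b''\in B''$ is complete to $B'$ and anticomplete to $A'$; and every $v\in V(G)\setminus(A\cup B)$ is complete or anticomplete to $A\supseteq A'$ and to $B\supseteq B'$ by the W-join property of $(A,B)$. Now unshatterability tells us that $(A',B')$ is not a proper W-join, and Lemma~\ref{l-mustbeproper} says that any W-join in a graph with distinct neighbourhoods is proper; hence $(A',B')$ is not even a W-join. Since the external condition holds and $A'$ is not anticomplete to $B'$, the only ways for $(A',B')$ to fail to be a W-join are that $|A'|+|B'|\le 2$ (that is, $|A'|=|B'|=1$) or that $A'$ is complete to $B'$.

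The heart of the argument is to eliminate the complete case using distinct neighbourhoods. Suppose $A'$ is complete to $B'$ and, for contradiction, $|A'|\ge 2$; pick distinct $a_1,a_2\in A'$. Each of $a_1,a_2$ is adjacent to all of $A$ except itself, to all of $B'$, to none of $B''$, and, since external vertices are complete or anticomplete to $A$, to exactly the same external vertices. Computing neighbourhoods then gives $N(a_1)\setminus\{a_2\}=N(a_2)\setminus\{a_1\}$, contradicting distinct neighbourhoods; hence $|A'|=1$, and the symmetric computation forces $|B'|=1$. So in every case $|A'|=|B'|=1$, and applying the same reasoning to $(A'',B'')$ yields $|A''|=|B''|=1$.

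Finally I assemble the pieces: $|A|=|A'|+|A''|=2$ and $|B|=|B'|+|B''|=2$. Writing $A=\{a_1,a_2\}$ and $B=\{b_1,b_2\}$ with $A'=\{a_1\}$ and $B'=\{b_1\}$, the no-isolated-vertex remark gives $a_1b_1,a_2b_2\in E(G)$, while the partition gives $a_1\not\sim b_2$ and $a_2\not\sim b_1$. Together with the clique edges $a_1a_2$ and $b_1b_2$, this is precisely the $4$-cycle on $a_1,a_2,b_2,b_1$ in that cyclic order, so $G[A\cup B]\cong C_4$. I expect the main obstacle to be the complete case of the third paragraph: one must verify carefully that the two candidate vertices have literally identical neighbourhoods outside themselves, which is exactly where distinct neighbourhoods (rather than merely the absence of true twins) is needed.
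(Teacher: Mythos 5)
Your proof is correct and follows essentially the same route as the paper's: both hinge on unshatterability saying $(A',B')$ is not a proper W-join, Lemma~\ref{l-mustbeproper} upgrading ``not a proper W-join'' to ``not a W-join at all'', and the distinct-neighbourhoods assumption eliminating the case that $A'$ is complete to $B'$, forcing $|A'|=|B'|=|A''|=|B''|=1$. The only differences are presentational: the paper compresses your explicit checks (the external condition, non-anticompleteness from properness, and the twin argument in the complete case) into the single assertion that $|A'|+|B'|>2$ would make $(A',B')$ a W-join, and uses a without-loss-of-generality size comparison instead of treating both pairs symmetrically.
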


\begin{proof}
Suppose $(A,B)$ is partitionable with sets $A',A'',B',B''$. By definition, we have $|A'|, |A''|, |B'|, |B''| \geq 1$. Without loss of generality, $|A'| + |B'| \geq |A''| + |B''|$. Since $(A,B)$ is unshatterable, $(A',B')$ is not a proper W-join. 
Suppose $|A'|+|B'| > 2$. Then, as $G$ has distinct neighbourhoods, $(A',B')$ is a W-join. Consequently, $(A',B')$ is a proper W-join by Lemma~\ref{l-mustbeproper}, which is a contradiction. Hence, $|A'|=|B'| =1$. 
It follows that $|A'|=|B'|=|A''|=|B''| = 1$. Since $(A,B)$ is proper, $G[A \cup B]$ is isomorphic to $C_4$.
\qed
\end{proof}

\begin{lemma}\label{l-unshatterable}
Let $G$ be a claw-free graph that is not cobipartite, has distinct neighbourhoods, and has diameter~$2$. Let $(A,B)$ be a proper W-join in $G$ that is unshatterable. If $G$ admits a disconnected cut, then there exists a disconnected partition $V_1,V_2,V_3,V_4$ of $G$ such that $V_i \cap (A \cup B) = \emptyset$ for some $i \in \{1,2,3,4\}$.
\end{lemma}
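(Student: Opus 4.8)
The plan is to start from an arbitrary disconnected partition $V_1,V_2,V_3,V_4$ of $G$ (which exists since $G$ has a disconnected cut) and either observe that it already avoids $A\cup B$ in some part or massage it into one that does. First I would note that, since $A$ and $B$ are cliques while $V_1$ is anticomplete to $V_3$ and $V_2$ is anticomplete to $V_4$, neither $A$ nor $B$ can meet two ``opposite'' parts; hence each of $A$ and $B$ meets at most two parts, and these are consecutive in the cyclic order $V_1,V_2,V_3,V_4$. If $A\cup B$ misses some $V_i$ we are immediately done, so I would assume $A\cup B$ meets all four parts. Then $A$ and $B$ must occupy complementary consecutive pairs, say $A\subseteq V_1\cup V_2$ and $B\subseteq V_3\cup V_4$ with all four intersections non-empty, and reading off the anticompleteness of $V_1,V_3$ and of $V_2,V_4$ shows exactly that $(A,B)$ is partitionable. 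Since $(A,B)$ is unshatterable, Lemma~\ref{l-shatterc4} forces $G[A\cup B]\cong C_4$. Thus $A=\{a,a'\}$ and $B=\{b,b'\}$ with one vertex in each part, say $a\in V_1,\ a'\in V_2,\ b\in V_3,\ b'\in V_4$, where $ab$ and $a'b'$ are the non-edges dictated by the anticompleteness, so the remaining pairs $ab',a'b$ are the two ``cross'' edges of the $C_4$.

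The heart of the argument is then to re-partition so that $A\cup B$ occupies only three parts. Because $(A,B)$ is a W-join, every vertex outside $A\cup B$ is complete to both $A$ and $B$ (call this class $C_{AB}$), complete to exactly one of them ($C_A$ or $C_B$), or anticomplete to both ($D$). I would record several structural facts. A vertex of $C_{AB}$ would be adjacent into all four parts, which is impossible, so $C_{AB}=\emptyset$. Claw-freeness forces $C_A$ and $C_B$ to be cliques: two non-adjacent vertices of $C_A$, together with the cross-neighbour $b'$ of $a$, would form a claw centred at $a$, and symmetrically for $C_B$. Hence $A\cup C_A$ and $B\cup C_B$ are cliques, so if $D=\emptyset$ then $G$ is a union of two cliques, i.e.\ cobipartite, contradicting the hypothesis; thus $D\neq\emptyset$. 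Finally, using diameter~$2$, every $d\in D$ has a common neighbour with $a$ and with $b$, and since $C_{AB}=\emptyset$ (and $d$ is anticomplete to $A\cup B$) these common neighbours must lie in $C_A$ and in $C_B$ respectively; in particular $C_A,C_B\neq\emptyset$ and every $D$-vertex has a neighbour in both $C_A$ and $C_B$.

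With these facts in hand I would exhibit two candidate re-partitions, each ``hinged'' on one cross edge. Moving $a'$ into $V_1$ and $b$ into $V_4$ (hinge $ab'$) places $A$ in $V_1$ and $B$ in $V_4$ and keeps $A\cup B$ out of $V_2$ and $V_3$; a short check using the anticompleteness relations between $A,B,C_A,C_B,D$ shows the conditions $V_1$ anticomplete to $V_3$ and $V_2$ anticomplete to $V_4$ are preserved (the moved vertices $a',b$ are anticomplete to the opposite clique and to $D$, and $C_A,C_B$ are anticomplete to $B,A$ respectively). Symmetrically, moving $a$ into $V_2$ and $b'$ into $V_3$ (hinge $a'b$) also preserves all conditions. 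The only thing that can go wrong is that one of the now-reduced parts becomes empty.

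The main obstacle, and the delicate part of the proof, is precisely this non-emptiness bookkeeping. The hard part will be to show that the configurations in which \emph{both} re-partitions leave an empty part are all impossible. I would argue this by contradiction: each such configuration pins $C_A$ or $C_B$ entirely into one part and confines $D$ to a pair of anticomplete parts, and then a $D$-vertex stranded on the wrong side of an anticomplete pair would have no neighbour in $C_A$ or in $C_B$ — contradicting $C_A\neq\emptyset$, $C_B\neq\emptyset$, or $D\neq\emptyset$ together with the fact that every $D$-vertex has neighbours in both $C_A$ and $C_B$. Hence at least one of the two hinged re-partitions is a valid disconnected partition, and it avoids $A\cup B$ in (in fact two of) its parts, which proves the lemma.
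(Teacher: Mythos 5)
Your proposal is correct and follows essentially the same route as the paper's proof: reduce to the case where all four parts meet $A\cup B$, deduce via Lemma~\ref{l-shatterc4} that $G[A\cup B]\cong C_4$, partition the remaining vertices into the classes complete to $A$, complete to $B$, complete to both (shown empty), and anticomplete to both (your $C_A,C_B,C_{AB},D$ are exactly the paper's $P,Q,M,R$), and then rescue the partition with one of the same two vertex moves. The only difference is cosmetic: the paper handles the non-emptiness bookkeeping by a WLOG assumption ($P\cap V_1\neq\emptyset$) plus a sub-case analysis on which part $R$ intersects, whereas you establish the symmetric helper fact that every $D$-vertex has neighbours in both $C_A$ and $C_B$ and then kill the four bad singleton configurations directly --- your sketch of that step is compressed but completes exactly as you describe.
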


\begin{proof}
Let $V_1,V_2,V_3,V_4$ be any disconnected partition of $G$ and suppose that $V_i \cap (A \cup B) \not= \emptyset$ for each $i \in \{1,2,3,4\}$. Note that $V_1$ is anticomplete to $V_3$ and $V_2$ is anticomplete to $V_4$ by definition. 
Then we may assume without loss of generality that $V_1 \cap A \not= \emptyset$ and $V_2\cap A\not= \emptyset$. The former implies that $V_3 \cap (A \cup B) \subseteq B$ and thus $V_1 \cap (A\cup B) \subseteq A$. The latter implies that $V_4 \cap (A \cup B) \subseteq B$ and thus $V_2 \cap (A\cup B) \subseteq A$. It follows that $(A,B)$ is partitionable with sets $V_1 \cap A, V_2 \cap A, V_3 \cap B, V_4 \cap B$. Then, by Lemma~\ref{l-shatterc4} and the assumption that $(A,B)$ is unshatterable, $G[A\cup B]$ is isomorphic to a $C_4$. For $i \in\{1,2,3,4\}$, let $v_i$ be the single vertex in $V_i \cap (A \cup B)$, 
so, for $i\in \{1,2,3,4\}$, $v_i$ is adjacent to $v_{i+1}$ with $v_5=v_1$, and moreover, $A=\{v_1,v_2\}$ and $B=\{v_3,v_4\}$.

Let $P = N(A) \setminus N[B]$, $Q = N(B) \setminus N[A]$, $M = N[A \cup B] \setminus (P \cup Q)$, and $R = V(G) \setminus (P \cup Q \cup M)$.
Note that $P$ is complete to $A$ and anticomplete to $B$, whereas $Q$ is complete to $B$ and anticomplete to $A$. Moreover, $M$ is complete to $A\cup B$, whereas $R$ is anticomplete to $A\cup B$.
This means that $P \subseteq V_1 \cup V_2$, because its neighbourhood includes $v_1$ and $v_2$. Moreover, $Q \subseteq V_3 \cup V_4$, because its neighbourhood includes $v_3$ and $v_4$, whereas $M = \emptyset$, because its neighbourhood includes $v_1$, $v_2$, $v_3$, and $v_4$. Moreover, $G[P]$ is a clique, because two non-adjacent vertices $u,v \in P$ with $v_1$ and $v_4$ yield a claw. Similarly, $G[Q]$ is a clique.

Observe that one of the following moves yields the requested disconnected partition of $G$:
\begin{itemize}
\item moving $v_1$ to $V_2$ and $v_4$ to $V_3$, unless $|V_1| = 1$ or $|V_4| = 1$;
\item moving $v_2$ to $V_1$ and $v_3$ to $V_4$, unless $|V_2| = 1$ or $|V_3| = 1$.
\end{itemize}
The crux is to show that $|V_1|, |V_4|> 1$ or $|V_2|, |V_3|> 1$. We now go through several cases.

Suppose that $R = \emptyset$. Since $M = \emptyset$, it follows that $P \cup A$ and $Q \cup B$ forms a cobipartition of $G$, a contradiction. Hence, $R \not= \emptyset$.

Suppose that $P = \emptyset$. Since $R \not= \emptyset$, let $v \in R$. Note that every path from $v$ to $v_1$ must intersect both $Q$ and $B$, two disjoint sets. This contradicts the assumption that $G$ has diameter~$2$. Hence, $P \not= \emptyset$. Similarly, we derive that $Q \not= \emptyset$.

Without loss of generality, assume that $P \cap V_1 \not= \emptyset$ and let $x \in P \cap V_1$. Suppose that $|V_4| = 1$, and thus $V_4 = \{v_4\}$. Then $Q \subseteq V_3$. Now note that, because $R \not= \emptyset$, there exists a $j \in \{1,2,3\}$ such that $R \cap V_j \not= \emptyset$. We consider all three cases.

Suppose that $R \cap V_1 \not= \emptyset$. Since $V_1$ is anticomplete to $V_3$, any path from a vertex in $R \cap V_1$ to $v_4$ must contain either at least one internal vertex in $R \cup P$ and one in $Q$, or at least one
internal vertex in $P$ in addition to $v_1$. This contradicts the assumption that $G$ has diameter~$2$.

Suppose that $R \cap V_2 \not= \emptyset$. Then $|V_2| > 1$. Since $Q \not=\emptyset$ and $Q \subseteq V_3$, it follows that $|V_3| > 1$. Then moving $v_2$ to $V_1$ and $v_3$ to $V_4$ yields a disconnected partition as requested.

Suppose that $R \cap V_3 \not= \emptyset$. Let $v \in R \cap V_3$. Then a shortest path from $v$ to $v_1$, which has length~$2$, cannot intersect both $Q$ and $B$. Hence, any such shortest path must intersect~$P$. Since $V_3$ is anticomplete to $V_1$, it follows that $V_2 \cap P \not= \emptyset$. Hence, $|V_2|> 1$ and $|V_3|>1$, and then moving $v_2$ to $V_1$ and $v_3$ to $V_4$ yields a disconnected partition as requested.

We may thus assume that $|V_4| > 1$. Since $|V_1| > 1$, by moving $v_1$ to $V_2$ and $v_4$ to $V_3$, we obtain a disconnected partition as requested.
\qed
\end{proof}

Let $(A,B)$ be a proper W-join of a graph $G$.
For any two adjacent vertices $a \in A$ and $b \in B$, let $G_{ab}$ be the graph obtained from $G$ by removing $A \setminus\{a\}$ and $B \setminus\{b\}$.  Observe that the graph $G_{ab}$ is the same regardless of the choice of $a,b$.

\begin{lemma}\label{l-noproper}
Let $G$ be a 
claw-free graph that is not cobipartite, has distinct neighbourhoods, and has diameter~$2$. Let $(A,B)$ be a proper W-join of $G$ that is unshatterable. Then $G$ admits a disconnected cut if and only if $G_{ab}$ admits a disconnected cut for any two adjacent vertices $a \in A$ and $b \in B$.
\end{lemma}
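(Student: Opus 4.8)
The plan is to prove the two implications separately, reusing the sets $P = N(A)\setminus N[B]$, $Q = N(B)\setminus N[A]$, $M = N[A\cup B]\setminus(P\cup Q)$ and $R = V(G)\setminus(P\cup Q\cup M)$ from the proof of Lemma~\ref{l-unshatterable}. Recall that $P$ is complete to $A$ and anticomplete to $B$, that $Q$ is complete to $B$ and anticomplete to $A$, that $M$ is complete to $A\cup B$, that $R$ is anticomplete to $A\cup B$, and that $A$, $B$, $P$ and $Q$ are cliques (the latter two by claw-freeness). I will also use that $G_{ab}$ is, up to isomorphism, independent of the choice of $a$ and $b$, so that in each direction it suffices to exhibit a disconnected cut for one convenient choice.

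For the direction from $G_{ab}$ to $G$, I would start from a disconnected partition $W_1,W_2,W_3,W_4$ of $G_{ab}$ and extend it by dumping all of $A\setminus\{a\}$ into the part containing $a$ and all of $B\setminus\{b\}$ into the part containing $b$. Since every vertex of $A$ has exactly the same neighbours outside $A\cup B$ as $a$ (namely $P\cup M$), and all of $B$ is confined to the single part that contains $b$, no vertex of $A\setminus\{a\}$ creates an edge across an anticomplete pair that $a$ did not already witness; the argument for $B\setminus\{b\}$ is symmetric. The one point to verify is that $a$ and $b$ lie in parts that are \emph{not} an anticomplete pair, which is immediate because $ab$ is an edge. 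Nonemptiness is trivial, as we only enlarge existing parts.

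For the reverse direction I would invoke Lemma~\ref{l-unshatterable} to obtain a disconnected partition $V_1,V_2,V_3,V_4$ of $G$ with $V_4\cap(A\cup B)=\emptyset$, and then restrict it to $G_{ab}$ by deleting $A\setminus\{a\}$ and $B\setminus\{b\}$. As $G_{ab}$ is an induced subgraph of $G$, the required anticompleteness of $V_1,V_3$ and of $V_2,V_4$ is inherited for free, so the \emph{entire} difficulty is to guarantee that all four restricted parts remain nonempty. Because $A$ and $B$ are cliques, each meets at most one part of each anticomplete pair $\{V_1,V_3\}$ and $\{V_2,V_4\}$; combined with $V_4\cap(A\cup B)=\emptyset$, this confines each of $A$ and $B$ to two of $V_1,V_2,V_3$. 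The only way a part can disappear under the restriction is if it is a \emph{pure-blob} part, i.e.\ entirely contained in $(A\cup B)\setminus\{a,b\}$; the aim is thus to choose the adjacent pair $a,b$ so that every pure-blob part contains $a$ or $b$.

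The main obstacle is exactly this nonemptiness argument, since in principle up to three of $V_1,V_2,V_3$ could be pure-blob parts, which two representatives cannot all save. I expect to resolve it by showing such configurations cannot arise. The engine is the completeness of $A$ to $P\cup M$ and of $B$ to $Q\cup M$: if a part $V_k$ with $k\in\{1,2,3\}$ is pure blob, then the anticomplete constraint on $V_k$ forces $P\cup M$ (if $V_k$ meets $A$) or $Q\cup M$ (if $V_k$ meets $B$) to avoid the part anticomplete to $V_k$. Pushing this through the cases where two or three parts are pure blob forces $P=Q=M=\emptyset$, which makes $R$ anticomplete to $A\cup B$ and hence disconnects $G$, contradicting the diameter~$2$ assumption; the borderline case, where $A$ and $B$ split exactly as a partition witnessing that $(A,B)$ is partitionable, is ruled out by Lemma~\ref{l-shatterc4}, which forces $G[A\cup B]\cong C_4$ and then yields the same connectivity contradiction. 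What survives the analysis is that at most two parts are pure blob, and that, since $(A,B)$ is a proper W-join, there is an edge between $A$ and $B$ whose endpoints cover these pure-blob parts; choosing this edge as $ab$ keeps all four restricted parts nonempty and completes the proof.
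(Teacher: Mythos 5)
Your proposal is correct, and while it rests on the same pillars as the paper's proof (the forward direction is identical, and the reverse direction invokes Lemma~\ref{l-unshatterable} to empty one part of $A\cup B$, Lemma~\ref{l-shatterc4} for the partitionable case, and the $P,Q,M,R$ decomposition plus diameter-$2$ contradictions), your organization of the reverse direction is genuinely different. The paper fixes $A\subseteq V_1\cup V_2$ and, crucially, chooses among all such partitions one \emph{minimizing} $|A\cap V_1|$; it then splits into cases $A\subseteq V_1$, $A\subseteq V_2$, and $A$ split across both, using minimality to rule out $B\cap V_3\neq\emptyset$ in the split case. You instead isolate nonemptiness as the only issue (anticompleteness being inherited by induced subgraphs) and reduce everything to which parts are ``pure blob''; this makes the paper's first two cases and much of the third evaporate, and removes the need for any extremal argument. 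Your impossibility claims do check out: two anticomplete pure-blob parts force, up to symmetry, $V_1\subseteq A$ and $V_3\subseteq B$, whence $P\cup Q\cup M\subseteq V_2$ and $V_4\subseteq R$, so a vertex of $V_4$ has no path of length~$2$ to $V_1$; two adjacent pure-blob parts with no covering edge force (via properness of the W-join) the partitionable configuration, then $G[A\cup B]\cong C_4$ by Lemma~\ref{l-shatterc4}, then $P=Q=M=\emptyset$ and $G$ disconnected; and three pure-blob parts contain the first, impossible, pair. Two small imprecisions in your sketch, neither fatal: the $\{V_1,V_3\}$ case does not literally force $P=Q=M=\emptyset$ (it confines $P\cup Q\cup M$ to $V_2$, and the contradiction is the path argument), and the covering edge for two adjacent pure-blob parts is not guaranteed by properness alone---it needs exactly the borderline analysis you describe. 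A final contrast: in the paper's $C_4$ endgame the non-cobipartite hypothesis is needed to produce a vertex of $P\cup M\cup Q$ that keeps $V_1$ nonempty, whereas in your framing pure-blobness makes $P\cup M\cup Q$ empty outright and connectivity finishes the job; you use non-cobipartiteness only through the call to Lemma~\ref{l-unshatterable}.
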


\begin{proof}
First suppose that $G_{ab}$ admits a disconnected partition $V_1,V_2,V_3,V_4$ for any two vertices $a,b$. Let $a \in V_i$ and $b \in V_j$ for $i,j \in \{1,2,3,4\}$. Then the sets $V_1',V_2',V_3',V_4'$ obtained from $V_1,V_2,V_3,V_4$ by adding $A$ to $V_i$ and $B$ to $V_j$ is a disconnected partition of $G$.

Now suppose that $G$ admits a disconnected cut. Let $V_1,V_2,V_3,V_4$ be a disconnected partition of $G$. By Lemma~\ref{l-unshatterable}, we may assume without loss of generality that $V_4 \cap (A \cup B) = \emptyset$. Note that $A$ is a clique in $G$ and $V_1$ is anticomplete to $V_3$, and thus $A \subseteq V_1 \cup V_2$ or $A \subseteq V_2 \cup V_3$. We assume the former without loss of generality. Among all such disconnected partitions, we will assume that $V_1,V_2,V_3,V_4$ was chosen to minimize $|A \cap V_1|$. 

We consider several cases. In each of these cases we find two vertices $a,b$ for which we can construct a disconnected partition of $G_{ab}$.
Note that this suffices to prove the statement, as the graph $G_{ab}$ is the same regardless of the choice of $a,b$.

First assume that $A \subseteq V_1$. Since no vertex of $B$ is anticomplete to $A$ by the definition of a proper W-join and $V_1$ is anticomplete to $V_3$, it follows that $B \subseteq V_1 \cup V_2$. Now if $B \subseteq V_1$, then let $a \in A$ and $b\in B$ be arbitrary adjacent vertices (these exist by the definition of a W-join) and $V_1 \setminus ((A \setminus\{a\}) \cup (B \setminus\{b\})),V_2,V_3,V_4$ is a disconnected partition of $G_{ab}$. Otherwise, let $b \in B \cap V_2$ and let $a$ be an arbitrary vertex of $A$ that is adjacent to $b$ (which exists by the definition of a proper W-join). Then $V_1 \setminus ((A \setminus\{a\}) \cup B),V_2 \setminus (B \setminus\{b\}),V_3,V_4$ is a disconnected partition of $G_{ab}$. 

Now assume that $A \subseteq V_2$. Note that $B \subseteq V_1 \cup V_2 \cup V_3$. Since $B$ is a clique and $V_1$ is anticomplete to $V_3$, it follows that $B \subseteq V_1 \cup V_2$ or $B \subseteq V_2 \cup V_3$. First, assume that $B \subseteq V_2$. Let $a \in A$ and $b \in B$ be arbitrary adjacent vertices; note that $a,b \in V_2$. Then $V_1,V_2 \setminus ((A \setminus\{a\}) \cup (B \setminus\{b\})),V_3,V_4$ is a disconnected partition of $G_{ab}$. So we may assume that $B \not\subseteq V_2$. Then $B \cap V_1 \not= \emptyset$ or $B \cap V_3 \not= \emptyset$. Without loss of generality, we assume is the former. Let $b \in B \cap V_1$ and let $a \in A$ be any neighbour of $b$. Note that $a \in V_2$. Then $V_1 \setminus (B \setminus\{b\}),V_2 \setminus ((A \setminus\{a\}) \cup B),
V_3,V_4$ is a disconnected partition of $G_{ab}$.

It remains to consider the case where $A \cap V_1 \not= \emptyset$ and $A \cap V_2 \not= \emptyset$. Let $P = N(A) \setminus N[B]$, $Q = N(B) \setminus N[A]$, $M = N[A \cup B] \setminus (P \cup Q)$, and $R = V(G) \setminus (P \cup Q \cup M)$.
Note that $P$ is complete to $A$ and anticomplete to $B$, whereas $Q$ is complete to $B$ and anticomplete to~$A$. Moreover, $M$ is complete to $A\cup B$, whereas $R$ is anticomplete to $A\cup B$.
Then, by the assumptions of the case, we have that $P \subseteq V_1 \cup V_2$. Note that $B \subseteq V_1 \cup V_2 \cup V_3$. Since $B$ is a clique and $V_1$ is anticomplete to $V_3$, it follows that $B \subseteq V_1 \cup V_2$ or $B \subseteq V_2 \cup V_3$.
Moreover, as $A\cap V_1\not= \emptyset$, it follows from the definition of a proper W-join that $B \not\subseteq V_3$.
We now prove that $B \subseteq V_1 \cup V_2$.

For contradiction, assume 
$B \cap V_3 \not= \emptyset$ and thus $B \cap V_2 \not= \emptyset$. 
As $M$ is complete to $A$ and~$B$ and $A \cup B$ has a nonempty intersection with each of $V_1,V_2,V_3$, it follows from the definition of a disconnected partition that $M \subseteq V_2$. Similarly, we derive that
$Q \subseteq V_2 \cup V_3$; recall also that $P \subseteq V_1 \cup V_2$. Suppose $V_1 \setminus A \not= \emptyset$. Then $V_1 \setminus A, V_2 \cup A, V_3, V_4$ is also a disconnected partition of $G$, contradicting our choice of the disconnected partition $V_1,V_2,V_3,V_4$. Hence, $V_1 \setminus A = \emptyset$ and thus, $V_1 \subseteq A$. Then $P \subseteq V_2$. By the definition of a W-join, any path of length~$2$ from a vertex in $R$ to a vertex in $A$ must intersect $P$ or~$M$. As $M \cup P \subseteq V_2$ and $V_4$ is anticomplete to $V_2$, we obtain $R \cap V_4 = \emptyset$. Since $A \cup B \cup P \cup M \cup Q \cup R = V(G)$ and none of $A,B,P,M,Q,R$ intersects $V_4$, it follows that $V_4 = \emptyset$, a contradiction.

We may thus assume that $B \subseteq V_1 \cup V_2$. First, assume that there exist adjacent vertices $a \in A$ and $b \in B$ such that $|V_1 \cap \{a,b\}| = 1$ (and thus $|V_2 \cap \{a,b\}| = 1$). Then $V_1 \setminus ((A \cup B)\setminus\{a,b\}),V_2 \setminus ((A \cup B)\setminus\{a,b\})$ is a disconnected partition of $G_{ab}$. Hence, we may assume that no such two vertices exist. It follows that neither $B \subseteq V_1$ nor $B \subseteq V_2$; otherwise, such $a$ and $b$ would exist by the definition of a proper W-join. Then we may conclude that $(A,B)$ is partitionable with sets $A \cap V_1, A \cap V_2, B \cap V_1, B \cap V_2$. Since $(A,B)$ is unshatterable, it follows from Lemma~\ref{l-shatterc4} that $G[A \cup B]$ is isomorphic to $C_4$. 
Hence $|A|=|B|=2$ and $V_1,V_2$ each contain exactly one vertex of $A$ and exactly one vertex of $B$.
Since $G$ is not cobipartite and 
$G$ is connected (as $G$ has diameter~2), it follows that one of $P,M,Q$ is non-empty. However, each vertex in $P \cup M \cup Q$ is adjacent to a vertex of $V_1$ and a vertex of $V_2$. Hence, $P \cup M \cup Q \subseteq V_1 \cup V_2$. Without loss of generality, $(P \cup M \cup Q) \cap V_1 \not= \emptyset$. Let $a$ be the single vertex of $A \cap V_2$ and let $b$ be the single vertex of $B \cap V_2$. Then $V_1 \setminus (A \cup B),V_2,V_3,V_4$ is a disconnected partition of $G_{ab}$. The lemma follows.
\qed
\end{proof}

In Section~\ref{s-2} we will show that by iterating the above lemma, we can remove all W-joins from an input claw-free graph of diameter~2. However, to this end, it is crucial to have
a polynomial-time algorithm that actually finds an unshatterable proper W-join (if it exists). 
Our algorithm for this problem relies on the $O(n^2m)$-time algorithm by King and Reed~\cite{KingR2008} to find a proper W-join $(A,B)$. We test in linear time whether the proper W-join is partitionable by considering the graph $H$ obtained from $G[A \cup B]$ by removing all edges with both endpoints in $A$ or in $B$. We argue that we can recurse on a smaller proper W-join if $H$ has two or more connected components, and that $(A,B)$ is unshatterable otherwise.

\begin{lemma}\label{l-shatterfind}
Let $G$ be a graph with distinct neighbourhoods. Then in $O(n^2m)$ time, we can find an unshatterable proper W-join in $G$, or report that $G$ has no proper W-join.
\end{lemma}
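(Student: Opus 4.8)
The plan is to use the $O(n^2m)$-time algorithm of King and Reed to either find a proper W-join $(A,B)$ or to correctly report that none exists. If no proper W-join is found, we are done. Otherwise, we obtain a proper W-join $(A,B)$ and must test whether it is unshatterable; if it is shatterable, we recurse on a strictly smaller proper W-join. The key structural observation is the one flagged in the surrounding text: form the auxiliary graph $H$ from $G[A \cup B]$ by deleting every edge with both endpoints in $A$ and every edge with both endpoints in $B$. Since $A$ and $B$ are both cliques in $G$, the graph $H$ records exactly the $A$-$B$ adjacencies, so $H$ is a bipartite graph on parts $A$ and $B$. I claim that the connected components of $H$ correspond precisely to the coarsest partition witnessing partitionability.

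First I would prove the two directions of this correspondence. If $H$ is disconnected, say with a connected component $K$, then setting $A' = A \cap V(K)$, $B' = B \cap V(K)$, $A'' = A \setminus A'$, and $B'' = B \setminus B'$ gives a partition of $A$ into $A',A''$ and of $B$ into $B',B''$. By construction of $H$, there are no $A$-$B$ edges between $K$ and the rest of $H$, so $A'$ is anticomplete to $B''$ and $A''$ is anticomplete to $B'$; moreover each part is non-empty because $(A,B)$ is a \emph{proper} W-join, so every vertex has a neighbour and a non-neighbour across the join, forcing both $A'$ and $B'$ (and their complements) to be non-empty when $H$ splits. Hence $(A,B)$ is partitionable. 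Conversely, if $H$ is connected, then no such partition can separate the $A$-$B$ adjacencies, so $(A,B)$ is not partitionable. Since $H$ has $O(n)$ vertices and $O(m)$ edges, computing its components and performing this test takes linear time.

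Next I would handle the recursion. Suppose $H$ is disconnected, so $(A,B)$ is partitionable with sets $A',A'',B',B''$. I want to produce a strictly smaller proper W-join to recurse on. By the anticompleteness of $A'$ to $B''$ and of $A''$ to $B'$, and because every external vertex is complete or anticomplete to each of $A$ and $B$ (hence to each of $A',A'',B',B''$, which are subsets), the pair $(A',B')$ is itself a W-join provided $|A'|+|B'|>2$, and symmetrically for $(A'',B'')$. Choosing the larger of the two pairs, at least one satisfies $|A'|+|B'|>2$ unless all four parts are singletons, in which case $|A|=|B|=2$ and $G[A\cup B] \cong C_4$, so that $(A,B)$ is unshatterable by Lemma~\ref{l-shatterc4} and we simply return it. Otherwise, by Lemma~\ref{l-mustbeproper} the smaller W-join is automatically proper (since $G$ has distinct neighbourhoods), and we recurse on it. Each recursive step strictly decreases $|A|+|B|$, so there are $O(n)$ iterations, each running King and Reed's algorithm in $O(n^2m)$ time on the subgraph; but since the recursion is on the \emph{same} graph $G$ restricted to a shrinking cobipartite piece, I would argue the total work is dominated by the $O(n^2m)$ bound, either by amortizing or by re-invoking the finder at most a linear number of times. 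If $H$ is connected, then $(A,B)$ is not partitionable; a proper W-join that is not partitionable is vacuously unshatterable (the shatterable condition requires partitionability), so we return $(A,B)$ directly.

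The main obstacle I anticipate is not the structural correspondence, which is clean, but controlling the running time of the recursion so that the total remains $O(n^2m)$. Naively re-running the King--Reed finder inside each of the $O(n)$ recursive calls would give $O(n^3 m)$. The right fix is to observe that once we have a proper W-join in hand, we need only test \emph{its} partitionability and descend into a component — no fresh global search is required — so the recursion manipulates the already-found join rather than calling the finder again. Thus the finder is invoked only once, and the remaining $O(n)$ refinement steps each cost only the linear time to recompute $H$'s components on a shrinking vertex set, for an overall bound of $O(n^2 m)$. I would take care to verify that the W-join obtained after descending into a component still has every external vertex complete or anticomplete to each side, which follows immediately since this property is inherited by subsets of $A$ and $B$.
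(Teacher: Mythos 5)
Your proposal is correct and follows essentially the same route as the paper's own proof: a single call to the King--Reed finder, the bipartite auxiliary graph $H$ whose connected components characterize partitionability (using properness to exclude singleton components), the $C_4$ base case when all four parts are singletons, recursion into the larger piece, and the same amortized accounting (one global search plus at most $n$ linear-time refinement steps) giving $O(n^2m)$. The only imprecision is your claim that $(A',B')$ is a W-join follows from the anticompleteness of $A'$ to $B''$ (and $A''$ to $B'$) together with the external-vertex conditions: these give everything \emph{except} the requirement that $A'$ be neither complete nor anticomplete to $B'$, and the ``not complete'' half is not automatic --- a component of $H$ that is a star $K_{1,2}$ would make one side complete to the other; it must instead be derived from the distinct-neighbourhoods hypothesis (if $A'$ were complete to $B'$ with $|A'|+|B'|>2$, two vertices of $A'$ or of $B'$ would have identical, hence nested, neighbourhoods), which is exactly where the paper invokes that hypothesis, whereas your appeal to Lemma~\ref{l-mustbeproper} cannot repair this since that lemma already presupposes W-join status.
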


\begin{proof}
King and Reed~\cite{KingR2008} proved that in $O(n^2m)$ time, one can find a proper W-join in $G$, or report that $G$ does not admit a proper W-join. In the former case, let $(A,B)$ be the proper W-join that is found. If $G[A \cup B]$ is isomorphic to $C_4$, which can be checked in constant time, then $(A,B)$ is unshatterable, and we return $(A,B)$. So assume otherwise. We can test in linear time whether $(A,B)$ is partitionable as follows. Let $H$ be the graph obtained from $G[A \cup B]$ by removing all edges with both endpoints in $A$ or in $B$. Note that $H$ is bipartite. Observe also that $H$ cannot have any singleton connected components, because such a vertex (say $a \in A$) would be anticomplete to the other side ($B$ in this case), a contradiction to the definition of a proper W-join. Now note that $(A,B)$ is partitionable if and only if $H$ has two or more connected components. If $H$ has one connected component, which can be tested in linear time, then $(A,B)$ is not partitionable and thus unshatterable. If $H$ has three or more connected components, then let $C$ be any such connected component. Then $(A,B)$ is partitionable with $V(C) \cap A, V(C)\cap B, (V(H) \setminus V(C)) \cap A, (V(H) \setminus V(C)) \cap B$, and since $H$ does not have any singleton components, $|V(H) \setminus V(C)| > 3$, and $G$ has distinct neighbourhoods, $(V(H) \setminus V(C)) \cap A, (V(H) \setminus V(C)) \cap B$ is a W-join in $G$. It follows from Lemma~\ref{l-mustbeproper} that it must be a proper W-join. Now repeat the algorithm for $(V(H) \setminus V(C)) \cap A, (V(H) \setminus V(C)) \cap B$.  If $H$ has exactly two connected components, then let $C$ be a connected component of $H$ with the most vertices. Since $G[A \cup B]$ is not isomorphic to $C_4$, it follows that $C$ has at least three vertices. Then, using that $G$ has distinct neighbourhoods, $(V(C) \cap A, V(C)\cap B)$ is a W-join in $G$, which by Lemma~\ref{l-mustbeproper} must be a proper W-join. Now repeat the algorithm for $(V(C) \cap A, V(C)\cap B)$. This gives the required algorithm.

Observe that the algorithm can recurse at most $n$ times, since in each recursion step it considers a strictly smaller proper W-join. Each recursion step takes linear time by performing a breadth-first search on the graph $H$. Hence, the running time of our algorithm is dominated by the initial call to the algorithm, which takes $O(n^2m)$ time.
\qed
\end{proof}

\subsection{Structure of Claw-Free Graphs and Solving Disconnected Cut}\label{s-2}

We now show a decomposition of claw-free graphs of diameter~2.
In order to dos we need a number of definitions, which all originate in Chudnovsky and Seymour~\cite{CS05}, 
but are reformulated in the style of Hermelin et al.~\cite{HermelinML14,HMLW11-arxiv,HMLW11}. A \emph{trigraph} is defined by a set of vertices and an adjacency relation where any two vertices are either strongly adjacent, semi-adjacent, or strongly anti-adjacent, and every vertex is semi-adjacent to at most one vertex. One may think of a trigraph as a normal graph where some edges are simultaneously present and non-present. In particular, a trigraph without semi-adjacent pairs of vertices is just a graph. We call vertices $u,v$ of a trigraph \emph{adjacent} if they are strongly adjacent or semi-adjacent, and \emph{anti-adjacent} if they are strongly anti-adjacent or semi-adjacent. We call two sets $X,Y$ \emph{(strongly) complete} if each pair of vertices $v \in X, w \in Y$ is (strongly) adjacent, and \emph{(strongly) anti-complete} if each pair of vertices $v \in X, w \in Y$ is (strongly) anti-adjacent.

A graph $H$ is a \emph{thickening} of a trigraph $G$ if there is a partition of $V(H)$ into non-empty sets $X_v$ for each $v \in V(G)$ such that:
\begin{itemize}
\item $X_v$ is a clique for each $v \in V(G)$;
\item if $v,w$ are strongly adjacent in $G$, then $X_v$ is complete to $X_w$;
\item if $v,w$ are strongly anti-adjacent in $G$, then $X_v$ is anti-complete to $X_w$;
\item if $v,w$ are semi-adjacent in $G$, then $X_v$ is neither complete nor anti-complete to $X_w$.
\end{itemize}
A pair of vertices $v,w$ in a graph $H$ form \emph{twins} if $N[v] = N[w]$.

Let $H$ be a thickening of a trigraph $G$. If $v \in V(G)$ is not semi-adjacent to another vertex, then the vertices in $X_v$ form twins. Furthermore, if $v,w$ are semi-adjacent, then $(X_v,X_w)$ is a W-join in $G$ (recall that $v$ and $w$ are not semi-adjacent to any other vertices in $G$). In particular, if $H$ contains neither twins nor W-joins, 
then $G$ and $H$ are isomorphic.

A {\it strip-structure} of a connected graph $G$ consists of a connected multigraph $H$ (with parallel edges and self-loops), a nonempty set $X_e \subseteq V(G)$ for each $e \in E(H)$, and a nonempty set $X_{e,y} \subseteq X_e$ for each $e \in E(H)$ and $y \in V(H)$ such that $e$ is incident to $y$, such that
\begin{itemize}
\item the sets $X_e$ partition $V(G)$;
\item for each $e \in E(H)$ incident with two vertices $y,y' \in V(H)$, each vertex in $X_{e,y} \cap X_{e,y'}$ is anti-complete to $X_{e} \setminus (X_{e,y} \cup X_{e,y'})$;
\item for each $y \in V(H)$, the graph induced by the union, over all $e \in E(H)$ incident to $y$, of the sets $X_{e,y}$ is a clique;
\item if $v,w$ are adjacent in $G$, then either $v,w \in X_e$ for some $e \in E(H)$ or there exist $e,e' \in E(H)$ incident with the same vertex $y \in V(H)$ for which $v \in X_{e,y}$ and $w \in X_{e',y}$.
\end{itemize}
For each $e \in E(H)$, the \emph{strip corresponding to $e$} is a pair $(J,Z)$, where $Z$ is a set of new vertices, one for each vertex $y \in V(H)$ incident with $e$, and the graph $J$ is obtained from $G[X_e]$ by adding $Z$ and for each $z \in Z$, making $z$ complete to $X_{e,y}$, where $y \in V(H)$ is the vertex corresponding to $z$. The definition of a strip-structure implies that each 
strip~$(J,Z)$ has
$|Z|=1$ (if $e$ is a self-loop) or $|Z|=2$ (otherwise).  
We may think of $Z$ as being `representatives' of the rest of the graph, but note that the vertices of $Z$ are not part of $G$.

\tikzset{vertex/.style={minimum size=2mm,circle,fill=black,draw,inner sep=0pt},
        decoration={markings,mark=at position .5 with
{\arrow[black,thick]{stealth}}}}

\begin{figure}[t]
 \centering
 \begin{tikzpicture}[scale=0.5]
   \draw[fill=green,fill opacity=0.2] (0,-4.5) ellipse (3.2cm and 1.4cm);
   \draw[fill=green] (2,-4.5) ellipse (0.5cm and 1cm);
   \draw[fill=green] (-2,-4.5) ellipse (0.5cm and 1cm);
   \draw[fill=red, fill opacity=0.2] (-3,-2) ellipse (0.6cm and 0.6cm);
   \draw[fill=red] (-3,-2) ellipse (0.35cm and 0.35cm);
   \draw[fill=orange, fill opacity=0.2] (0,-2) ellipse (1.8cm and 0.6cm);
   \draw[fill=orange] (-1,-2) ellipse (0.35cm and 0.35cm);
   \draw[fill=orange] (1,-2) ellipse (0.35cm and 0.35cm);
   \draw[fill=purple, fill opacity=0.2] (3,-2) ellipse (0.6cm and 0.6cm);
   \draw[fill=purple] (3,-2) ellipse (0.35cm and 0.35cm);
   \draw[fill=blue, fill opacity=0.2] (-3,0) ellipse (0.6cm and 0.6cm);
   \draw[fill=blue] (-3,0) ellipse (0.35cm and 0.35cm);
   \draw[fill=gray,fill opacity=0.2] (-0.5,1.5) ellipse (1.2cm and 1.2cm);
   \draw[fill=gray] (-1,1) ellipse (0.45cm and 0.45cm);
   \draw[fill=gray] (0,1) ellipse (0.45cm and 0.45cm);
  \node (a) at (-3.95,-2){$a$};
  \node (b) at (-3.95,0){$b$};
  \node (c) at (-2.3,1.75){$c$};
  \node (d) at (0,-1){$d$};
  \node (g) at (3.95,-2){$f$};
  \node (h) at (0,-6.5){$g$};
   \node (6) at (-2,-4) [vertex]{};
   \node (7) at (2,-4) [vertex]{};
   \node (8) at (-2,-5) [vertex]{};
   \node (9) at (2,-5) [vertex]{};
   \draw[thick] (7)--(9);
   \draw[thick] (6)--(8);
   \node (10) at (-1,-4) [vertex]{};
   \node (11) at (-0.5,-5) [vertex]{};
   \node (12) at (0.5,-4) [vertex]{};
   \node (12a) at (0.3,-5) [vertex]{};
   \draw[thick] (6)--(10);
   \draw[thick] (6)--(11);
   \draw[thick] (8)--(11);
   \draw[thick] (11)--(12);
   \draw[thick] (10)--(11);
   \draw[thick] (10)--(12);
   \draw[thick] (12)--(7);
   \draw[thick] (12)--(9);
   \draw[thick] (12a)--(10);
   \draw[thick] (12a)--(12);
   \draw[thick] (12a) -- (7);
   \draw[thick] (12a) -- (9);
   \node (13) at (3,-2) [vertex]{};
   \node (14) at (-3,-2) [vertex]{};
   \draw[thick] (13)--(7);
   \draw[thick] (13)--(9);
   \draw[thick] (14)--(6);
   \draw[thick] (14)--(8);
   \node (15a) at (-1,-2) [vertex]{};
   \draw[thick] (15a)--(13);
   \draw[thick] (15a)--(14);
   \node (15b) at (1,-2) [vertex]{};
   \node (15c) at (0,-2) [vertex]{};
   \node (16) at (-1,1) [vertex]{};
   \node (17) at (0,1) [vertex]{};
   \draw[thick] (16)--(17);
   \draw[thick] (17)--(13);
   \draw[thick] (17)--(15b);
   \draw[thick] (16)--(15a);
   \draw[thick] (16)--(14);
   \node (19) at (-3,0) [vertex]{};
   \draw[thick] (19)--(14);
   \draw[thick] (19)--(16);
   \draw[thick] (13)--(14);
   \node (20) at (-1,2) [vertex]{};
   \node (21) at (0,2) [vertex]{};
   \node (21a) at (-0.5,2) [vertex]{};
   \draw[thick] (21)--(17);
   \draw[thick] (20)--(16);
   \draw[thick] (20)--(21);
   \draw[thick] (15a)--(19);
 \end{tikzpicture}
 \tikzset{every loop/.style={min distance=20mm,looseness=10}}
 \begin{tikzpicture}
  \node (0) at (-3,2) [vertex]{};
  \node (1) at (0,2) [vertex]{};
  \node (2) at (-3,3.5) [vertex]{};
  \node (3) at (0,3.5) [vertex]{};
  \draw[thick,green] (0)--(1);
  \draw[thick,purple] (1)--(3);
  \draw[thick,red] (2)--(0);
  \draw[thick,orange] (2)--(3);
  \node (hh) at (-4,4.5){};
  \draw[thick,blue] (2) -- (hh);
  \draw[thick,gray] (2) to[bend right=-50] (3);
  \node[red] (a) at (-3.25,2.75){$a$};
  \node[blue] (b) at (-3.5,3.75){$b$};
  \node[gray] (c) at (-1.5,4.45){$c$};
  \node[orange] (d) at (-1.5,3.25){$d$};
  \node[purple] (g) at (0.25,2.75){$f$};
  \node[green] (h) at (-1.5,1.75){$g$};
  \node at (-3,1){};
 \end{tikzpicture}
 \qquad
 \begin{tikzpicture}[scale=0.5]
   \draw[fill=red, fill opacity=0.2] (0,0) ellipse (1cm and 1cm);
   \draw[fill=red] (0,0) ellipse (0.5cm and 0.5cm);
   \node (1) at (0,0) [vertex] {};
   \node (2) at (-2,0) [vertex] {};
   \node (3) at (2,0) [vertex] {};
   \draw[thick] (1)--(2);
   \draw[thick] (1)--(3);
   \node at (0,-1.5){trivial line graph strip};
   \node at (-2,-0.8){$Z$};
   \node at (2,-0.8){$Z$};
   \draw[fill=green, fill opacity=0.2] (0,-4.5) ellipse (2.5cm and 2cm);
   \draw[fill=green] (2,-4.5) ellipse (0.5cm and 1cm);
   \draw[fill=green] (-2,-4.5) ellipse (0.5cm and 1cm);
   \node (4) at (-3,-4.5) [vertex]{};
   \node (5) at (3,-4.5) [vertex]{};
   \node (6) at (-2,-4) [vertex]{};
   \node (7) at (2,-4) [vertex]{};
   \node (8) at (-2,-5) [vertex]{};
   \node (9) at (2,-5) [vertex]{};
   \draw[thick] (7)--(9);
   \draw[thick] (6)--(8);
   \draw[thick] (4)--(6);
   \draw[thick] (4)--(8);
   \draw[thick] (5)--(7);
   \draw[thick] (5)--(9);
   \node (10) at (-1,-4) [vertex]{};
   \node (11) at (-0.5,-5) [vertex]{};
   \node (12) at (0.5,-4) [vertex]{};
   \node (12a) at (0.3,-5) [vertex]{};
   \draw[thick] (6)--(10);
   \draw[thick] (6)--(11);
   \draw[thick] (8)--(11);
   \draw[thick] (11)--(12);
   \draw[thick] (10)--(11);
   \draw[thick] (10)--(12);
   \draw[thick] (12)--(7);
   \draw[thick] (12)--(9);
   \draw[thick] (12a)--(10);
   \draw[thick] (12a)--(12);
   \draw[thick] (12a) -- (7);
   \draw[thick] (12a) -- (9);
   \node at (0,-7){stripe};
   \node at (-3.2,-5.3){$Z$};
   \node at (3.0,-5.3){$Z$};
 \end{tikzpicture}
 \quad
 \quad
\caption{This figure is inspired by~\cite[Fig.~1]{HermelinML14}. The left figure is a claw-free graph $G$. A strip-structure is indicated by the coloured ovals. The middle figure shows the graph $H$ underlying the strip-structure; the edge $b$ indicates a self-loop. The lighter-coloured ovals in the left figure show the sets $X_e$ for $e \in E(H)$. The darker-coloured ovals show the sets $X_{e,y}$ for each $e \in E(H)$ and each vertex $y \in V(H)$ to which $e$ is incident. 
The right figure shows an example of the trivial line graph strip with $|Z|=1$ above a stripe with $|Z|=2$. The lighter-coloured ovals show $X_e = V(J) \setminus Z$ and the darker-coloured ovals show $X_{e,y}$ for each $y \in V(H)$ to which $e$ is incident. Note that strips $a$, $b$, and $f$ in the left figure are trivial line graph strips. Strips $c$, $d$, and $g$ are stripes and might look different depending on $X_e$; the stripe in the right panel corresponds to $f$ (the stripes corresponding to $c$ and $d$ are not pictured). Observe that indeed the set $Z \subseteq V(J)$ is not part of $G$.}
\label{fig:stripstructure}
\end{figure}
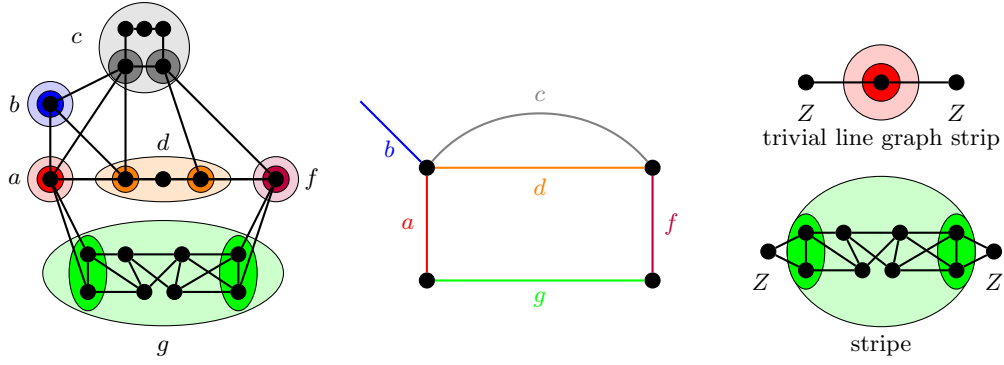

A strip $(J,Z)$ is a \emph{trivial line graph strip} if $|Z|=1$ and $J$ is a 2-vertex path, or if $|Z|=2$ and $J$ is a 3-vertex path. A strip $(J,Z)$ is a \emph{stripe} if no vertex in $V(J) \setminus Z$ is adjacent to more than one vertex of $Z$. 
In particular $(J,Z)$ is a stripe if $|Z|=1$, and moreover, if $(J,Z)$ is a stripe with $|Z|=2$ for $e=yy'$, then $X_{e,y}$ and $X_{e,y'}$ are disjoint.
A \emph{thickening of a stripe} $(J,Z)$ is defined as usual, except that it must hold that $|X_z|=1$ for each $z \in Z$.

We now define an XX-trigraph, XX-graphs, and XX-trigraph stripes. The details are actually unimportant, because we will only use the fact that XX-(tri)graphs (stripes) have at most $13$~vertices.
Let $G$ be a trigraph on vertices $v_1,\ldots,v_{13}$ such that $v_i$ is strongly adjacent to $v_{i+1}$ for $i=1,2,3,4,5$; $v_6$ is strongly adjacent to $v_1$; $v_i$ is anti-adjacent to $v_j$ for $1 \leq i < j \leq 6$ with $1 < j-i < 5$; $v_{7}$ is strongly adjacent to $v_{1}$ and $v_{2}$; $v_{8}$ is strongly adjacent to $v_{4}$, $v_{5}$, and possibly adjacent to $v_{7}$; $v_{9}$ is strongly adjacent to $v_{1}$, $v_{2}$, $v_{3}$, and $v_{6}$; $v_{10}$ is strongly adjacent to $v_{3}$, $v_{4}$, $v_{5}$, and $v_{6}$, and adjacent to $v_{9}$; $v_{11}$ is strongly adjacent to $v_{1}$, $v_{3}$, $v_{4}$, $v_{6}$, $v_{9}$, and $v_{10}$; $v_{12}$ is strongly adjacent to $v_{2}$, $v_{3}$, $v_{5}$, $v_{6}$, $v_{9}$, and $v_{10}$; $v_{13}$ is strongly adjacent to $v_{1}$, $v_{2}$, $v_{4}$, $v_{5}$, $v_{7}$, and $v_{8}$. Then we call $G - X$ for any $X \subseteq \{v_7,v_{11},v_{12},v_{13}\}$ an \emph{XX-trigraph}. We call $G-X$ an \emph{XX-graph} if it has no semi-adjacent vertices. We call a stripe $(J,Z)$ of a strip-structure an XX-(tri)graph stripe if $J$ is an XX-(tri)graph with vertex set $\{v_1,\ldots,v_{13}\} \setminus X$ for some $X \subseteq \{v_7,v_{11},v_{12},v_{13}\}$ such that $v_7$ and $v_8$ are strongly anti-adjacent and $Z = \{v_7,v_8\} \setminus X$. Note that for any (thickening of an) XX-(tri)graph stripe, it holds that $V(J) \setminus N[Z] \not= \emptyset$ (take $v_6$ or $X_{v_6}$).

\begin{theorem}[{\cite[Theorem 6.8, simplified]{HMLW11-arxiv}; see also \cite{HMLW11}}] \label{thm:orig-claw-decomposition}
Let $G$ be a connected claw-free graph with $\alpha(G) > 3$ such that $G$ does not admit twins or proper W-joins. Then
\begin{itemize}
\item $G$ is a thickening of an XX-trigraph,
\item $G$ is a proper circular-arc graph, or
\item $G$ admits a strip-structure such that for each strip $(J,Z)$
\begin{itemize}
\item $(J,Z)$ is a trivial line graph strip, or
\item $(J,Z)$ is a stripe for which $J$ is connected and
\begin{itemize}
\item $\alpha(J) \leq 3$, and $V(J) \setminus N_J[Z] \not= \emptyset$,
\item $|Z|=1$ and $J$ is a proper circular-arc graph,
\item $|Z|=2$ and $J$ is a proper interval graph, or
\item $(J,Z)$ is a thickening of an XX-trigraph stripe.
\end{itemize}
\end{itemize}
\end{itemize}
Moreover, we can distinguish the cases and find the strip-structure in polynomial time.
\end{theorem}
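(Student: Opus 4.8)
The plan is to obtain this statement as a specialization of the general algorithmic decomposition theorem for claw-free graphs proved by Hermelin et al.~\cite{HMLW11-arxiv}. Their Theorem~6.8 applies to any connected claw-free graph of independence number greater than~$3$ and produces one of the same three high-level outcomes---a thickening of a bounded trigraph, a proper circular-arc structure, or a strip-structure whose strips are drawn from a fixed list---together with a polynomial-time procedure identifying which outcome holds. I would first invoke that theorem essentially verbatim, and then argue that the two extra hypotheses imposed here, namely that $G$ has neither twins nor proper W-joins, let us simplify each of its outputs into the cleaner form stated above.

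The observation driving the simplification is the fact recorded just before the theorem: a thickening $H$ of a trigraph is isomorphic to its underlying trigraph once $H$ has neither twins nor W-joins, because twins force each clique $X_v$ to be a single vertex and a semi-adjacent pair in the trigraph would manifest as a W-join in $H$. I would apply this in two places. Globally, it pins the first outcome down to (a thickening of) an XX-trigraph with exactly the listed adjacency pattern and rules out any genuinely non-trivial thickening that the original theorem might otherwise permit. Locally, I would push the same reasoning through the strip-structure: for each strip $(J,Z)$ produced by Hermelin et al., the absence of twins and W-joins in $G$ forces the strip to be either a trivial line graph strip or one of the listed stripe types, since a non-trivial thickening inside a strip would again create a twin or a W-join in $G$. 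The hypothesis $\alpha(G)>3$ is what guarantees that the dense exceptional structures of the original theorem cannot occur and that each surviving stripe either satisfies $\alpha(J)\le 3$ or falls into the proper circular-arc ($|Z|=1$), proper interval ($|Z|=2$), or XX-trigraph-stripe case.

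I expect the main obstacle to be the careful case analysis matching the strip list of Hermelin et al.\ to the four stripe types above, and in particular verifying that the side conditions survive the passage to a graph without twins and W-joins: that $J$ remains connected, that $V(J)\setminus N_J[Z]\ne\emptyset$ still holds, and that $X_{e,y}$ and $X_{e,y'}$ stay disjoint when $|Z|=2$. One has to check, vertex by vertex, that collapsing the thickening inside each strip does not inadvertently merge the two sides $X_{e,y},X_{e,y'}$ or erase the vertex witnessing $V(J)\setminus N_J[Z]\ne\emptyset$; these are precisely the points where the semi-adjacency/W-join correspondence must be tracked with care. The polynomial-time claim then follows immediately, since the theorem of Hermelin et al.\ already distinguishes the cases and constructs the strip-structure in polynomial time, and each simplification step above is a local structural observation requiring no additional search.
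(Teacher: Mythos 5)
Your high-level plan coincides with what the paper actually does: Theorem~\ref{thm:orig-claw-decomposition} is presented purely as a citation of Hermelin et al.'s Theorem~6.8 in simplified form, with no independent proof in the paper, so invoking that theorem essentially verbatim is indeed the intended justification. The problem lies in the machinery you bolt on top. Your ``observation driving the simplification'' --- that a thickening with neither twins nor W-joins is isomorphic to its underlying trigraph --- requires the absence of \emph{all} W-joins, whereas the hypothesis of this theorem forbids only twins and \emph{proper} W-joins. These are genuinely different conditions: if $v,w$ are semi-adjacent in a trigraph, then $(X_v,X_w)$ is a W-join in any thickening, but nothing forces it to be a proper one, since an individual vertex of $X_v$ may well be complete or anticomplete to $X_w$ as long as $X_v$ is neither complete nor anticomplete to $X_w$ in aggregate. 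Hence non-trivial thickenings can survive under ``no proper W-joins'', and both your global claim that the hypothesis ``rules out any genuinely non-trivial thickening'' and your local claim that ``a non-trivial thickening inside a strip would again create a twin or a W-join in $G$'' (note that at that point you even restate the hypothesis as absence of twins and W-joins, dropping the word ``proper'') are unjustified.

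This is precisely the distinction the paper is careful to preserve. The statement you are proving deliberately retains ``thickening of an XX-trigraph'' and ``thickening of an XX-trigraph stripe'' in its conclusion; the collapse of thickenings is deferred to the follow-up Theorem~\ref{thm:claw-decomposition}, whose hypothesis is the strictly stronger ``no twins or W-joins'' and whose one-line proof is exactly the observation you are trying to apply here. In the paper's actual algorithm, the passage from ``no proper W-joins'' to ``no W-joins'' is earned separately, via the distinct-neighbourhoods assumption and Lemma~\ref{l-mustbeproper} (with distinct neighbourhoods, every W-join is proper), and that assumption is not available under the hypothesis of the present theorem. Your erroneous collapse claims happen to be superfluous for the statement as given --- its conclusion permits thickenings, so citing Theorem~6.8 and matching cases suffices --- but as written, your plan would assert through an invalid step a strengthening (no non-trivial thickenings under ``no proper W-joins'') that the paper deliberately avoids claiming; excising those claims, rather than relying on them, is what makes the proof correct.
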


\begin{theorem} \label{thm:claw-decomposition}
Let $G$ be a connected claw-free graph with $\alpha(G) > 3$ such that $G$ does not admit twins or W-joins. Then
\begin{itemize}
\item $G$ is an XX-graph,
\item $G$ is a proper circular-arc graph, or
\item $G$ admits a strip-structure such that for each strip $(J,Z)$
\begin{itemize}
\item $(J,Z)$ is a trivial line graph strip, or
\item $(J,Z)$ is a stripe for which $J$ is connected and
\begin{itemize}
\item $\alpha(J) \leq 3$, and $V(J) \setminus N_J[Z] \not= \emptyset$,
\item $|Z|=1$ and $J$ is a proper circular-arc graph,
\item $|Z|=2$ and $J$ is a proper interval graph, or
\item $(J,Z)$ is an XX-graph stripe.
\end{itemize}
\end{itemize}
\end{itemize}
Moreover, we can distinguish the cases and find the strip-structure in polynomial time.
\end{theorem}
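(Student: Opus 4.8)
The plan is to deduce Theorem~\ref{thm:claw-decomposition} from Theorem~\ref{thm:orig-claw-decomposition} by exploiting the stronger hypothesis that $G$ admits no W-joins at all, rather than merely no \emph{proper} W-joins. Since every proper W-join is in particular a W-join, a graph with no W-joins has no proper W-joins, so Theorem~\ref{thm:orig-claw-decomposition} applies verbatim to $G$ and returns one of its three outcomes. The proper circular-arc outcome is already exactly what we want, and in the strip-structure outcome every subcase except the last is stated identically in both theorems; thus the only work is to replace ``thickening of an XX-trigraph'' by ``XX-graph'' in the first outcome, and ``thickening of an XX-trigraph stripe'' by ``XX-graph stripe'' in the last strip subcase, by showing that under our hypothesis these thickenings are forced to be trivial. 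The first outcome is immediate from the observation recorded before Theorem~\ref{thm:orig-claw-decomposition}: if a graph $H$ is a thickening of a trigraph $T$ and $H$ has neither twins nor W-joins, then $T\cong H$. Applying this with $H=G$, and noting that $G$ is a genuine graph so that the isomorphic $T$ has no semi-adjacent pair, the XX-trigraph $T$ is in fact an XX-graph, and hence so is $G$.

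The heart of the argument, and the step I expect to be the main obstacle, is the analogous collapse inside a stripe. The difficulty is that the thickening lives in a strip $(J,Z)$ whose representative vertices $Z$ are \emph{not} part of $G$, so any W-join or pair of twins found inside $J$ must be shown to persist in the ambient graph $G$. Let $(J,Z)$ be a thickening of an XX-trigraph stripe $(J_0,Z_0)$, with blow-up sets $X_v$ for $v\in V(J_0)$ and $|X_z|=1$ for $z\in Z_0$. The key observation is that all vertices of a single blow-up set $X_v$ have identical adjacency to $V(G)\setminus X_e$: by the strip-structure axioms, inter-strip edges run only between sets $X_{e,y}$ and $X_{e',y}$ incident to a common vertex $y$ of the strip-structure, and the union of all the sets $X_{e,y}$ at a given $y$ is a clique. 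Hence if $v$ is adjacent to the representative of $y$ then $X_v\subseteq X_{e,y}$, which is complete to every neighbouring $X_{e',y}$, while if $v$ is adjacent to no representative then $X_v$ is anticomplete to $V(G)\setminus X_e$. It follows that (i) if a non-semi-adjacent $v$ has $|X_v|>1$ then the vertices of $X_v$ are twins in $G$, and (ii) if $v,w$ are semi-adjacent then every vertex of $V(G)\setminus(X_v\cup X_w)$ is complete or anticomplete to each of the cliques $X_v,X_w$ (externally by the above, internally by the thickening), while $X_v$ is neither complete nor anticomplete to $X_w$, so $(X_v,X_w)$ is a W-join in $G$ with $|X_v|+|X_w|>2$ (two singletons could not be neither complete nor anticomplete). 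Both (i) and (ii) contradict our hypotheses, so every $|X_v|=1$ and $J_0$ has no semi-adjacent pair; therefore $J_0$ is an XX-graph and $(J,Z)$ is an XX-graph stripe.

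Finally, the algorithmic claim is inherited without extra work: Theorem~\ref{thm:orig-claw-decomposition} produces the strip-structure and distinguishes the cases in polynomial time, and the arguments above show that, under the no-W-join hypothesis, the objects it returns are already an XX-graph, a proper circular-arc graph, or a strip-structure whose stripes have the stated forms. Hence no further computation is required, and the running time of Theorem~\ref{thm:claw-decomposition} equals that of Theorem~\ref{thm:orig-claw-decomposition}.
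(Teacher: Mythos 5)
Your proposal is correct and takes essentially the same route as the paper: invoke Theorem~\ref{thm:orig-claw-decomposition} (noting that no W-joins implies no proper W-joins) and observe that, in the absence of twins and W-joins in $G$, every thickening is forced to be trivial, so thickenings of XX-trigraphs and of XX-trigraph stripes collapse to XX-graphs and XX-graph stripes. In fact, your argument that twins and W-joins arising inside a thickened stripe persist in the ambient graph $G$ (via the strip-structure axioms, despite the vertices of $Z$ not belonging to $G$) is carried out in more detail than in the paper, whose entire proof is the one-line observation that ``thickening operations effectively do nothing, as discussed previously.''
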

\begin{proof}
Observe that if $G$ does not admit twins nor W-joins, then thickening operations effectively do nothing, as discussed previously. Hence, any thickening of an XX-trigraph (stripe) is in fact an XX-graph (stripe).
\qed
\end{proof}

We are now ready to prove Theorem~\ref{thm:claw}.

\begin{theorem}\label{thm:claw}
Every claw-free graph $G$ of diameter~$2$ with distinct neighbourhoods, no W-joins, $\alpha(G) > 3$, and $|V(G)| > 13$ is a proper circular-arc graph or a line graph.
\end{theorem}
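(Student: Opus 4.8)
The plan is to invoke the structural dichotomy of Theorem~\ref{thm:claw-decomposition}. First I would verify that its hypotheses hold. Since $G$ has diameter~$2$ it is connected, and since $G$ has distinct neighbourhoods it has no twins: if $N[u]=N[v]$ with $u\neq v$, then $u,v$ are adjacent and $N(u)\setminus\{v\}=N(v)\setminus\{u\}$, which is in particular a nested---indeed equal---neighbourhood and hence excluded. Together with the hypotheses that $G$ has no W-joins and $\alpha(G)>3$, Theorem~\ref{thm:claw-decomposition} applies. The XX-graph alternative is immediately ruled out, because every XX-graph has at most $13$ vertices whereas $|V(G)|>13$. If $G$ is a proper circular-arc graph we are done, so the remaining task is the case where $G$ admits a strip-structure of the stated form.

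In that case the goal is to show that every strip is a trivial line graph strip, since a graph all of whose strips are trivial line graph strips is exactly a line graph (namely the line graph of the underlying multigraph~$H$). So suppose for contradiction that some strip $(J,Z)$ is a non-trivial stripe. Each such stripe satisfies $V(J)\setminus N_J[Z]\neq\emptyset$, so I would fix an \emph{interior} vertex $v\in X_e$, that is, a vertex lying in no boundary set $X_{e,y}$. The strip-structure axioms then force $N_G(v)\subseteq X_e$, because any neighbour of $v$ outside $X_e$ would require $v$ itself to lie in some $X_{e,y}$. Hence $v$ can reach the rest of $G$ only by first stepping into a boundary clique $X_{e,y}\cup X_{e,y'}$.

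The crux is to play diameter~$2$ against this interior vertex. For every $u\notin X_e$ we have $d_G(v,u)\le 2$, which forces a common neighbour $w\in N_G(v)\cap X_{e,y''}$ in the boundary, with $u$ lying in a strip incident to the same endpoint $y''\in\{y,y'\}$; thus all of $G$ outside $X_e$ must attach to the boundary cliques through neighbours of $v$, and the internal distance from $v$ to the boundary is severely constrained. I would then use $\alpha(G)>3$ together with this constraint to locate a vertex at distance at least~$3$ from $v$---either in another strip, or deep inside a long proper circular-arc or proper interval stripe---contradicting diameter~$2$. In the remaining degenerate (short) configurations I would instead exhibit two stripe vertices with nested neighbourhoods or forming a W-join, contradicting distinct neighbourhoods or the absence of W-joins. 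The same scheme dispatches all four stripe types, the XX-graph stripes again being bounded by~$13$ vertices.

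The main obstacle is precisely this last step: reconciling a genuinely long internal stripe with the global diameter bound. For the proper interval ($|Z|=2$) and proper circular-arc ($|Z|=1$) stripes one must argue that the internal distance from the interior vertex $v$ out to the boundary, plus the obligatory extra step into a neighbouring strip, already exceeds~$2$ unless the stripe degenerates, and then exclude the degenerate stripes via the cleaning hypotheses---typically, the extremal vertices of a cleaned proper interval (or circular-arc) ordering have nested neighbourhoods, and any remaining cobipartite overlap would reintroduce a W-join. Setting up the distance and extremality bookkeeping so that $\alpha(G)>3$ reliably supplies the required far vertex in every case is where the real work lies; once every non-trivial stripe has been excluded, only trivial line graph strips remain and $G$ is a line graph.
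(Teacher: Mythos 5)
Your setup is right and matches the paper's: the hypotheses of Theorem~\ref{thm:claw-decomposition} are verified exactly as you do (distinct neighbourhoods exclude twins), the XX-graph outcome dies by $|V(G)|>13$, and your interior-vertex observation---every vertex outside $X_e$ must, by diameter~$2$, be adjacent to a neighbour of $v$ lying in a boundary set $X_{e,y}$---is precisely the paper's first step. The gap is in what you do with it. The paper's centerpiece is a quantitative collapse: the outside vertices attached at a boundary node form a clique whose members all have the same attachment, so distinct neighbourhoods force at most one outside vertex per $z \in Z$ (and for $|Z|=2$ one additionally builds a W-join from the two outside sets $X=N_G(N_J(z_1))\setminus V(J)$ and $Y=N_G(N_J(z_2))\setminus V(J)$ to cap $|X\cup Y|\le 2$). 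Hence $G$ is, up to one or two vertices playing the roles of $Z$, isomorphic to $J$ itself. You never establish this collapse, and your proposed contradictions cannot replace it: for a proper circular-arc stripe ($|Z|=1$) or proper interval stripe ($|Z|=2$) with nonempty interior, the correct conclusion is that $G$ inherits $J$'s structure and \emph{is} a proper circular-arc graph (for interval stripes one bends the interval representation around the circle)---that is, one lands in the other branch of the theorem's disjunction, contradicting only your standing assumption that $G$ is not proper circular-arc. In that situation there is no vertex at distance at least~$3$ (the graph has diameter~$2$), no nested neighbourhood, and no W-join---all are excluded by hypothesis---so all three of your proposed contradiction mechanisms come up empty. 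The same problem hits the $\alpha(J)\le 3$ stripes: the contradiction there is $\alpha(G)\le 3$ via the collapse, not a distance argument.

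Second, your claim that every non-trivial stripe satisfies $V(J)\setminus N_J[Z]\neq\emptyset$ is false: in Theorem~\ref{thm:claw-decomposition} the proper circular-arc ($|Z|=1$) and proper interval ($|Z|=2$) stripes may well have $V(J)\setminus N_J[Z]=\emptyset$, and these cases are not contradictions at all. The paper handles them by modifying the strip-structure: a $|Z|=1$ stripe with empty interior splits into $|V(J)|-1$ trivial line graph strips, and a $|Z|=2$ stripe with empty interior is forced, by applying the no-W-join hypothesis to $(N_J(z_1),N_J(z_2))$, to be a four-vertex path, which splits into two trivial strips. Under your proof-by-contradiction framing (``some strip is a non-trivial stripe'') these configurations yield no contradiction---$G$ is still a line graph---so the argument stalls. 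In short: the skeleton (decomposition theorem, interior vertex, diameter~$2$) is the paper's, but the engine of the proof---the collapse of $G$ onto the stripe, followed by a case analysis whose outcomes are ``$G$ is proper circular-arc'' or ``re-decompose into trivial strips,'' rather than uniform structural contradictions---is missing.
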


\begin{proof}
Let $G$ be as in the theorem statement, and assume that $G$ is not a proper circular-arc graph. In order to prove that $G$ is indeed a line graph, we apply Theorem~\ref{thm:claw-decomposition} to the graph $G$. By the assumptions on $G$, we obtain a strip-structure consisting of strips of several possible types. We will argue that this strip-structure can be modified so that it contains only trivial line graph strips. This implies that $G$ is a line graph.

We first claim that if the strip-structure contains a stripe $(J,Z)$ such that $V(J) \setminus N_J[Z] \not= \emptyset$ and $1 \leq |Z| \leq 2$, then either $|Z| = 1$ and $G$ is isomorphic to $J$ or to $J-Z$, or $|Z|=2$ and $G$ is isomorphic to $J-Z'$ for some $Z' \subseteq Z$ or to $J$ where the vertices in $Z$ have been identified or made adjacent.

Let $(J,Z)$ be a stripe such that $V(J) \setminus N_J[Z] \not= \emptyset$ and $1 \leq |Z| \leq 2$. Let $x \in V(J) \setminus N_J[Z]$. Since $G$ has diameter~$2$, every vertex of $V(G) \setminus (V(J) \setminus Z)$ must be in $N_G(N_J(Z))$. Now observe that $N_G(N_J(z)) \setminus V(J)$ is a clique for each $z \in Z$ by the definition of a strip-structure.

Consider the case $|Z| = 1$. Let $X = N_G(N_J(Z)) \setminus V(J)$. Note that the neighbourhood of every vertex in $X$ is $N_J(Z)$. Since $X$ is a clique and $G$ has distinct neighbourhoods, it follows that $|X| \leq 1$. The claim follows.

Consider the case $|Z| = 2$ and let $Z = \{z_1,z_2\}$. Let $X = N_G(N_J(z_1)) \setminus V(J)$ and $Y = N_G(N_J(z_2)) \setminus V(J))$. Suppose that $|X \cup Y| = |X \cap Y|$. Then all vertices in $X \cup Y$ have the same neighbourhood, namely $X \cup Y \cup N_J(Z)$, which by the distinct neighbourhood assumption implies that $|X \cup Y|=1$. This implies the claim. A similar line of reasoning applies if $|X \cap Y| = 0$ and $X$ and $Y$ are anticomplete.
Otherwise, suppose that $|X \cup Y| > 2$. Then let $A = X \setminus Y$, $B = Y \setminus X$, and iteratively assign the vertices of $X \cup Y$ to the smallest of $A$ and $B$. Then $(A,B)$ is a W-join in $G$, a contradiction. Hence, we may assume that $|X \cup Y| \leq 2$. Suppose that $|X \cup Y| = |X \cap Y|=2$. Then both vertices in $X \cup Y$ have the same neighbourhood, namely $X \cup Y \cup N_J(Z)$, a contradiction. Then $|X \cap Y|=1$, which implies the claim, or otherwise, $|X \cup Y| \leq 2$, which then also implies the claim.

Now consider the possible stripes in the strip-structure. If $1 \leq |Z| \leq 2$, $\alpha(J) \leq 3$, and $V(J)\setminus N[Z] \not= \emptyset$, then applying the claim, it follows that $\alpha(G) \leq 3$, a contradiction. If $(J,Z)$ is an XX-graph stripe, then $V(J) \setminus N[Z] \not= \emptyset$, and thus applying the claim, $G$ has at most $13$ vertices, a contradiction.

Suppose that $|Z| = 1$ and $J$ is a proper circular-arc graph. If $V(J)\setminus N[Z] = \emptyset$, then the stripe can be decomposed into $|V(J)| - 1$ trivial line graph strips, one for each vertex in $V(J) \setminus Z$. 
Otherwise, that is, if $V(J)\setminus N[Z] \not= \emptyset$, then applying the claim, it follows that $G$ is a proper circular-arc graph, a contradiction.

Suppose that $|Z| = 2$ and $J$ is a proper interval graph. Let $Z=\{z_1,z_2\}$. If $V(J)\setminus N[Z] = \emptyset$, then $(N_J(z_1),N_J(z_2))$ would induce a W-join in $G$, a contradiction unless $|N_J(z_1)|=|N_J(z_2)|=1$. That implies $|V(J)| = 4$ and $J$ is in fact a four-vertex path. Then the stripe can be decomposed into two trivial line graph strips by adding a new node to the strip-structure that supports them both. Otherwise, that is if $V(J)\setminus N[Z] \not= \emptyset$, then applying the claim, $G$ is a proper interval graph or a proper circular-arc graph. The latter can be seen from the fact that there trivially exists a representation of $J$ as a proper interval graph in which the intervals corresponding to $Z$ extend farthest left and right in the representation. By bending the representation around the circle, we obtain a proper circular-arc graph. In either case, we obtain a contradiction.

From this, it follows that each strip in the (modified) strip-structure must be a trivial line graph strip. This implies that $G$ is a line graph, as claimed.
\qed
\end{proof}

We now apply Theorem~\ref{thm:claw} and results from previous sections to obtain our main result.

\begin{theorem}\label{t-mainmain}
{\sc Disconnected Cut} is $O(n^3m)$-time solvable for claw-free graphs.
\end{theorem}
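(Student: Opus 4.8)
The plan is to repeatedly simplify the input claw-free graph $G$ until it lands in one of the basic classes solved in the previous sections, while preserving at every step whether a disconnected cut exists. Throughout, the current graph stays claw-free, since we only delete vertices and induced subgraphs of claw-free graphs are claw-free. The algorithm is a single loop; at the top of each round I would recompute the diameter and use Lemma~\ref{l-diameter} to answer NO if it is~$1$ and YES if it is at least~$3$, leaving only the case of diameter exactly~$2$ to continue.

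For a graph of diameter~$2$ I would first impose distinct neighbourhoods by exhaustively applying Lemma~\ref{l-neighbourhood}: whenever a pair $u,v$ with $N(u)\setminus\{v\}\subseteq N(v)\setminus\{u\}$ is present, delete $u$. This preserves the answer and keeps the diameter at most~$2$ (so if it drops to~$1$ I answer NO), and it covers twins as the special case $N[u]=N[v]$. Next I would test in $O(n+m)$ time whether $G$ is cobipartite and, if so, solve the instance with the known polynomial algorithm for cobipartite graphs~\cite{FMPS09}. Otherwise $G$ is claw-free, of diameter~$2$, has distinct neighbourhoods, and is not cobipartite, so I may invoke Lemma~\ref{l-shatterfind} to look, in $O(n^2m)$ time, for an unshatterable proper W-join. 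If one is found, say $(A,B)$, I pick adjacent $a\in A$ and $b\in B$ and replace $G$ by the strictly smaller graph $G_{ab}$; by Lemma~\ref{l-noproper} this preserves the answer, so I return to the top of the loop.

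The loop exits when Lemma~\ref{l-shatterfind} reports that no proper W-join exists. Because $G$ has distinct neighbourhoods, Lemma~\ref{l-mustbeproper} then guarantees that $G$ has no W-join at all (and, as noted, no twins). At this point I would resolve the boundary conditions of the decomposition theorem: if $\alpha(G)\le 3$, i.e.\ $G$ is $4P_1$-free, I use Lemma~\ref{l-4p1}; if $|V(G)|\le 13$, I decide the instance by brute force. In the remaining case $G$ satisfies all hypotheses of Theorem~\ref{thm:claw}, which tells me that $G$ is a proper circular-arc graph or a line graph; I then finish with the polynomial-time algorithm of Theorem~\ref{t-circulararc} (circular-arc graphs subsume proper circular-arc graphs) or of Theorem~\ref{t-linegraphs}, respectively.

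For the running-time bound I would argue that each non-terminating round deletes at least two vertices, since a proper W-join has $|A|,|B|\ge 2$, so there are $O(n)$ rounds, each dominated by the $O(n^2m)$ call of Lemma~\ref{l-shatterfind}; the diameter computation, distinct-neighbourhood reduction, and cobipartite test are cheaper and the single final call of Theorem~\ref{t-circulararc} or Theorem~\ref{t-linegraphs} is absorbed, giving the claimed $O(n^3m)$ total. I expect the main obstacle to be verifying the loop invariant across reductions: after forming $G_{ab}$ one must re-check the diameter and re-impose distinct neighbourhoods (both of which may change) before Lemmas~\ref{l-noproper} and~\ref{l-shatterfind} are applicable again, and one must be sure that exhausting all proper W-joins really leaves a graph to which Theorem~\ref{thm:claw} applies. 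This is exactly where distinct neighbourhoods is indispensable, as it is what upgrades ``no proper W-join'' to ``no W-join'' via Lemma~\ref{l-mustbeproper} and simultaneously eliminates the twins that the decomposition forbids.
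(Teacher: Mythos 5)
Your proposal is correct and follows essentially the same route as the paper: reduce to diameter~$2$ via Lemma~\ref{l-diameter}, enforce distinct neighbourhoods via Lemma~\ref{l-neighbourhood}, eliminate W-joins by iterating Lemmas~\ref{l-mustbeproper}, \ref{l-shatterfind} and~\ref{l-noproper}, then invoke Theorem~\ref{thm:claw} and finish with Theorem~\ref{t-circulararc} or Theorem~\ref{t-linegraphs}, with the same $O(n)$-round accounting giving $O(n^3m)$. The only cosmetic difference is that the paper tests $\alpha(G)\le 3$ up front (dispatching that case by Lemma~\ref{l-4p1}, which simultaneously guarantees the non-cobipartiteness needed for Lemma~\ref{l-noproper}), whereas you test cobipartiteness directly and defer the $\alpha(G)\le 3$ check to the end; both orderings satisfy all hypotheses of the lemmas and of Theorem~\ref{thm:claw}, so both are valid.
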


\begin{proof}
Let $G$ be a connected claw-free graph on $n$ vertices and $m$ edges. Assume $n\geq 14$.
We compute the diameter of $G$ in $O(n^2)$ time.
 By Lemma~\ref{l-diameter}, $G$ has no disconnected cut if its diameter is~$1$ and has a disconnected cut if its diameter is at least~$3$. Assume the diameter of $G$ is~$2$.
We check if $\alpha(G) \leq 3$ in $O(n(m+n \log n))$ time~\cite{FOS14}.
If so, then we decide if $G$ has a disconnected cut in $O(n^3)$ time by Lemma~\ref{l-4p1}. 
Assume $\alpha(G) > 3$. Hence, $G$ is not cobipartite.

Next, we check whether $G$ contains a vertex $u$ for which there exists a vertex $v$ such that $N(u) \setminus\{v\} \subseteq N(v) \setminus\{u\}$. This takes $O(n^3)$ time. If so, then we remove $u$ from $G$ (and restart the algorithm with the resulting graph, which is still connected and claw-free). 
This is correct by Lemma~\ref{l-neighbourhood}. Hence, we may assume that $G$ has distinct neighbourhoods.

Then, we get rid of all W-joins in $G$. Since $G$ has distinct neighbourhoods, it follows from Lemma~\ref{l-mustbeproper} that every W-join in $G$ is a proper W-join. Using Lemma~\ref{l-shatterfind}, in $O(n^2m)$ time, we can find an unshatterable W-join in $G$ or correctly decide that $G$ does not admit a proper W-join (and thus no W-join). In the former case, we apply Lemma~\ref{l-noproper} on the unshatterable proper W-join $(A,B)$ that is found. This takes linear time. We then restart the algorithm on the graph $G_{ab}$ found by Lemma~\ref{l-noproper} (note that $G_{ab}$ is still connected and claw-free).
Since $|A|+|B| \geq 3$, $|V(G_{ab})| < |V(G)|$ and thus we can recurse at most $n$ times. Hence, we may assume that $G$ admits no W-joins.

Next, we check if $G$ is a circular-arc graph in linear time by Lemma~\ref{l-mcc}. 
If so, then we apply Theorem~\ref{t-circulararc} to decide if $G$ has a disconnected cut in 
$O(n^2)$ time. Hence, we may assume that $G$ is not (proper) circular-arc.
By Theorem~\ref{thm:claw} this means that $G$ is a line graph. Hence, we apply Theorem~\ref{t-linegraphs} to decide whether $G$ admits a disconnected cut in $O(n^4)$ time. This finishes the description of the algorithm. The running time is clearly $O(n^3m)$.
\qed
\end{proof}

Recall from~\cite{IKPT11} and~\cite{FMPS09} that {\sc $C_4$-Contractibility} and {\sc ${\cal C}_4$-Compaction}, respectively, are equivalent to {\sc Disconnected Cut} on graphs of diameter~$2$.
Combining these claims with Theorem~\ref{thm:claw} leads to the following two consequences.

\begin{corollary}\label{c-con}
{\sc $C_4$-Contractibility} is $O(n^3m)$-time solvable for claw-free graphs of diameter~$2$.
\end{corollary}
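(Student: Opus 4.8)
The plan is to derive this corollary directly by combining the equivalence of Ito et al.\ with the main algorithmic result, Theorem~\ref{t-mainmain}. First I would invoke the key fact from Ito et al.~\cite{IKPT11}, already recalled in the introduction: on claw-free graphs of diameter~$2$, the problem {\sc $C_4$-Contractibility} is equivalent to {\sc Disconnected Cut}. Concretely, a claw-free graph~$G$ of diameter~$2$ can be contracted to~$C_4$ if and only if~$G$ admits a disconnected cut. This equivalence is precisely what makes the disconnected-cut machinery applicable to the contraction problem, and it is exactly the motivation Ito et al.\ gave for asking the open question that Theorem~\ref{t-mainmain} resolves.

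Given this equivalence, the corollary is immediate. For an input claw-free graph~$G$ of diameter~$2$, I would simply run the $O(n^3m)$-time algorithm for {\sc Disconnected Cut} on claw-free graphs guaranteed by Theorem~\ref{t-mainmain}. By the equivalence above, the Boolean answer returned by this algorithm is precisely the answer to {\sc $C_4$-Contractibility} on~$G$, and the running time is inherited unchanged. Note that no diameter test or preprocessing is needed here, since the input is assumed to have diameter~$2$; the reduction is a pure restatement of the decision problem.

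Since the entire argument is a reduction, via a previously established equivalence, to a problem already solved in Theorem~\ref{t-mainmain}, there is no genuine obstacle. The only point requiring care is to confirm that the equivalence of~\cite{IKPT11} applies to the full class of claw-free graphs of diameter~$2$, rather than to some proper subclass, which is exactly the form in which that result is stated. Consequently the running time matches that of Theorem~\ref{t-mainmain}, namely $O(n^3m)$, completing the proof.
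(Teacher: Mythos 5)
Your proposal is correct and matches the paper's own argument: the paper likewise obtains this corollary by combining the equivalence of Ito et al.~\cite{IKPT11} between {\sc $C_4$-Contractibility} and {\sc Disconnected Cut} on (claw-free) graphs of diameter~$2$ with the $O(n^3m)$-time algorithm of Theorem~\ref{t-mainmain}. Nothing further is needed.
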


\begin{corollary}\label{c-com}
{\sc ${\cal C}_4$-Compaction} is $O(n^3m)$-time solvable for claw-free graphs of diameter~$2$.
\end{corollary}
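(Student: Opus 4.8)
The plan is to obtain Corollary~\ref{c-com} as an immediate consequence of the algorithm of Theorem~\ref{t-mainmain}, together with the equivalence between {\sc ${\cal C}_4$-Compaction} and {\sc Disconnected Cut} on diameter-$2$ graphs recalled above from~\cite{FMPS09}. First I would fix an arbitrary claw-free graph $G$ of diameter~$2$ on $n$ vertices and $m$ edges as the input instance. By~\cite{FMPS09}, every graph of diameter~$2$ admits a compaction to ${\cal C}_4$ if and only if it has a disconnected cut. Since the claw-free graphs of diameter~$2$ form a subclass of all graphs of diameter~$2$, this biconditional applies verbatim to $G$.

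Hence deciding whether $G$ admits a ${\cal C}_4$-compaction is the same as deciding whether $G$ has a disconnected cut. As $G$ is claw-free, Theorem~\ref{t-mainmain} decides the latter in $O(n^3m)$ time; by the equivalence this answer is also the answer for {\sc ${\cal C}_4$-Compaction}, giving the claimed $O(n^3m)$ bound. Note that no preprocessing is required, as the diameter-$2$ hypothesis is part of the statement, and that a decision-level reduction suffices because {\sc ${\cal C}_4$-Compaction} is itself a decision problem.

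I expect no genuine obstacle here, since all the difficulty is already absorbed into Theorem~\ref{t-mainmain} (and its supporting structural result Theorem~\ref{thm:claw}) and into the cited equivalence of~\cite{FMPS09}, none of which is reproved. The only points requiring routine care are to confirm that the equivalence of~\cite{FMPS09}, stated for arbitrary diameter-$2$ graphs, indeed restricts to the claw-free case (it does, being a subclass), and that we are reducing the existence of a compaction to the existence of a disconnected cut rather than exhibiting an explicit witness, which is exactly what the decision version demands.
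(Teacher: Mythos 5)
Your proposal is correct and follows essentially the same route as the paper: invoke the equivalence from~\cite{FMPS09} between {\sc ${\cal C}_4$-Compaction} and {\sc Disconnected Cut} on diameter-$2$ graphs, then run the $O(n^3m)$-time algorithm of Theorem~\ref{t-mainmain} on the claw-free input. If anything, your citation of Theorem~\ref{t-mainmain} for the time bound is slightly more precise than the paper's own phrasing, which nominally credits the structural Theorem~\ref{thm:claw}.
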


\section{${\mathbf H}$-Free Graphs for Graphs ${\mathbf H}$ with at Most Four Vertices}\label{s-k4}

In this section we show that the $K_4$ is the only 4-vertex graph~$H$ for which the computational complexity of {\sc Disconnected Cut} is still open on $H$-free graphs. To prove this claim, we need to show three additional results.

A graph~$G=(V,E)$ is \emph{complete $k$-partite} if~$V$ can be partitioned into~$k$ independent sets $A_1,\ldots,A_k$ for some integer~$k\geq 2$, such that two vertices are adjacent if and only if they belong to two different sets~$A_i$ and~$A_j$. The graph $\overline{P_1+ P_3}$, which is a triangle with a pendant vertex, is also known as the {\it paw}. Olariu proved the following result for paw-free graphs. 

\begin{lemma}[\cite{Ol88}]\label{l-paw}
Every connected $\overline{P_1+P_3}$-free graph is either $C_3$-free or complete $k$-partite for some $k\geq 3$.
\end{lemma}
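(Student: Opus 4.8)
The plan is to dispose of the trivial alternative first: if $G$ is $C_3$-free we land in the first case of the dichotomy, so I would assume $G$ contains a triangle $T=\{a,b,c\}$ and aim to show $G$ is complete $k$-partite with $k\geq 3$. The reduction I would use is the standard characterization that a graph is complete multipartite if and only if it contains no induced $P_1+P_2$ (an edge together with a vertex anticomplete to both its endpoints, i.e.\ $\overline{P_3}$); the bound $k\geq 3$ then follows immediately from the presence of~$T$. Thus the whole problem reduces to ruling out an induced $P_1+P_2$ in a connected $\overline{P_1+P_3}$-free graph that contains a triangle.

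The key structural step, which I expect to be the crux, is a coverage claim: every vertex of $G$ either lies in $T$ or is adjacent to at least two vertices of~$T$. First I would note that paw-freeness already forces any vertex with a neighbour in $T$ to have at least \emph{two} neighbours in $T$, since a single neighbour creates an induced $\overline{P_1+P_3}$ with~$T$. I would then show that the set $S=T\cup N(T)$ is closed under adjacency: if $x\in S\setminus T$ is adjacent to, say, $a,b\in T$, then $\{x,a,b\}$ is again a triangle, so any neighbour $y$ of $x$ having no neighbour in $T$ would be adjacent to exactly one vertex of $\{x,a,b\}$, yielding a paw; hence $y\in S$. Since $G$ is connected and $T\subseteq S\neq\emptyset$, this closure forces $S=V(G)$, establishing the claim.

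With coverage in hand I would finish by contradiction. Suppose $u\sim w$ while $v\notin N(u)\cup N(w)$, an induced $P_1+P_2$. By coverage, $v$ is adjacent to two vertices of $T$, say $a$ and $b$, so $\{a,b,v\}$ is a triangle. Each of $u,w$ is adjacent to at least two vertices of $\{a,b,c\}$; being adjacent to exactly one of $a,b$ would create a paw with the triangle $\{a,b,v\}$ (as $v$ is anticomplete to both $u$ and $w$), and being adjacent to neither $a$ nor $b$ would contradict having two neighbours in $T$, so both $u$ and $w$ are adjacent to $a$ and to $b$. But then $\{a,u,w\}$ is a triangle, and $v$ is adjacent to $a$ while anticomplete to $u$ and $w$, producing a paw. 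This contradiction rules out the induced $P_1+P_2$, so $G$ is complete multipartite, with $k\geq 3$ because of $T$.

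The main obstacle is the coverage claim: once the closure-under-connectivity argument is set up correctly it powers everything, but identifying the right invariant, namely that $T\cup N(T)$ is closed under adjacency because a vertex adjacent to two triangle vertices spawns a fresh triangle, is the insight on which the otherwise routine paw-extraction casework rests.
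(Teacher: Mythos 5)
The paper does not actually prove this lemma: it is imported verbatim from Olariu~\cite{Ol88}, so there is no internal proof to compare against. Judged on its own, your proof is correct and self-contained. The reduction to forbidding an induced $P_1+P_2$ is sound, since a graph is complete multipartite precisely when its complement is $P_3$-free, i.e.\ a disjoint union of cliques. Your coverage claim is proved correctly: a vertex with exactly one neighbour in the triangle $T$ yields an induced $\overline{P_1+P_3}$, and the closure argument (a vertex $x$ adjacent to two vertices of $T$ spawns a fresh triangle, so any neighbour of $x$ with no neighbour in $T$ would be a pendant on that triangle) combined with connectedness forces $T\cup N(T)=V(G)$. The concluding casework also holds, including the distinctness checks: from $v\not\sim u,w$ and $v\sim a,b$ one gets $u,w\notin\{a,b\}$, so $\{a,u,w\}$ is a genuine triangle with pendant $v$. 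One small point you gloss over: coverage only tells you that $v\in T$ \emph{or} $v$ has two neighbours in $T$; in the former case $v$ is adjacent to the other two vertices of $T$ because $T$ is a triangle, so your statement ``$v$ is adjacent to two vertices of $T$'' remains literally true and the rest of the argument runs verbatim. This is a cosmetic omission, not a gap. Your argument is exactly the kind of short paw-extraction proof one finds in Olariu's original note, and it would serve as a legitimate replacement for the citation.
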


We also need the following lemma from~\cite{FMPS09} for $C_3$-free graphs (or we could use the more general result that 
{\sc Disconnected Cut} is polynomial-time solvable on $(C_3+P_1)$-free graphs in~\cite{DMS12}).

\begin{lemma}[\cite{FMPS09}]\label{l-c3}
{\sc Disconnected Cut} is $O(n^3)$-time solvable for $C_3$-free graphs.
\end{lemma}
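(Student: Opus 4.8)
The plan is to reduce to the diameter-$2$ case and then exploit the fact that triangle-freeness forces a very rigid local structure. By Lemma~\ref{l-diameter}, a $C_3$-free graph of diameter~$1$ (which can only be $K_1$ or $K_2$) has no disconnected cut, while a $C_3$-free graph of diameter at least~$3$ has one; both situations are detected by a single diameter computation. So I would reduce immediately to a connected $C_3$-free graph $G$ of diameter exactly~$2$.

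The key observation is that, since $G$ is $C_3$-free, the neighbourhood $N(u)$ of every vertex $u$ is an independent set. Hence $G[N(u)]$ is disconnected precisely when $|N(u)| \geq 2$, that is, exactly when $u$ has degree at least~$2$. I would combine this with the two domination lemmas. If $G$ has a dominating vertex $u$, then $N(u) = V(G) \setminus \{u\}$ is independent, so $G$ is a star $K_{1,t}$; by Lemma~\ref{l-dominating} it then has no disconnected cut. If $G$ has no dominating vertex, I claim that $G$ has a disconnected cut.

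To prove this claim---which I expect to be the only nontrivial step---I would first argue that a diameter-$2$ graph with no dominating vertex has minimum degree at least~$2$. Indeed, if some vertex $u$ had a unique neighbour $v$, then every other vertex, being within distance~$2$ of $u$, would lie in $N[v]$, making $v$ dominating, a contradiction. Thus every vertex has degree at least~$2$ and, by the key observation, a disconnected neighbourhood; moreover every vertex is non-dominating by assumption. Applying Lemma~\ref{l-disconnected} to any such vertex yields a disconnected cut of $G$.

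Altogether this gives a clean characterization: a connected $C_3$-free graph has a disconnected cut if and only if it is neither a complete graph ($K_1$ or $K_2$) nor a star. The resulting algorithm computes the diameter via breadth-first search from each vertex, and in the diameter-$2$ case merely tests for the existence of a dominating vertex; since a $C_3$-free graph has $O(n^2)$ edges, both steps run within $O(n^3)$ time, as required.
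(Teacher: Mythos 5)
Your proposal is correct. Note that the paper does not actually prove this statement---it imports it from Fleischner et al.~\cite{FMPS09}---so there is no in-paper proof to compare against; what you have produced is a valid self-contained derivation from the paper's own preliminary lemmas. Your three-way split (diameter~$1$, diameter~$\geq 3$, diameter exactly~$2$) via Lemma~\ref{l-diameter}, the observation that triangle-freeness makes every neighbourhood an independent set, and the dichotomy ``dominating vertex $\Rightarrow$ star $\Rightarrow$ no cut (Lemma~\ref{l-dominating})'' versus ``no dominating vertex $\Rightarrow$ minimum degree $\geq 2$ $\Rightarrow$ some (indeed every) vertex is non-dominating with disconnected neighbourhood $\Rightarrow$ cut exists (Lemma~\ref{l-disconnected})'' are all sound; in particular the minimum-degree argument (a degree-one vertex in a diameter-$2$ graph forces its unique neighbour to be dominating) is exactly the right way to close the gap. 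The resulting characterization---a connected $C_3$-free graph has a disconnected cut if and only if it is neither $K_1$, $K_2$, nor a star---is what underlies the cited result, and your complexity accounting (all-pairs BFS plus a dominating-vertex test) comfortably fits in $O(n^3)$; in fact your characterization yields an even faster test than the stated bound requires.
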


\begin{lemma}\label{l-paw2}
{\sc Disconnected Cut} is $O(n^3)$-time solvable for $\overline{P_1+P_3}$-free graphs.
\end{lemma}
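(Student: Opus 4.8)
The plan is to exploit Lemma~\ref{l-paw}, which says that every connected $\overline{P_1+P_3}$-free graph is either $C_3$-free or complete $k$-partite for some $k\geq 3$. Since {\sc Disconnected Cut} is trivial on disconnected graphs and since we may restrict attention to connected input by Lemma~\ref{l-diameter} (a graph of diameter at least~$3$ always has a disconnected cut, and diameter~$1$ never does), the two cases of Olariu's dichotomy can be handled separately. The first case is immediate: if $G$ is $C_3$-free, then Lemma~\ref{l-c3} gives an $O(n^3)$-time algorithm directly. So the entire remaining work is the complete $k$-partite case.

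First I would argue that the complete $k$-partite case is in fact easy. Let $G$ be complete $k$-partite with parts $A_1,\ldots,A_k$ and $k\geq 3$. The key structural observation is that every vertex $u$ has neighbourhood $N(u)=V(G)\setminus A_i$, where $A_i$ is the part containing $u$; in particular $G[N(u)]$ is itself complete $(k-1)$-partite on the parts $A_j$ with $j\neq i$. Since $k\geq 3$, there are at least two such parts, so $G[N(u)]$ is connected (any two vertices in different surviving parts are adjacent, and any two in the same part have a common neighbour in another part). Thus no vertex has a disconnected neighbourhood, which blocks the easy sufficient condition of Lemma~\ref{l-disconnected} but more importantly suggests the graph is ``too connected'' to admit a disconnected cut. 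I would make this precise by showing directly that a complete $k$-partite graph with $k\geq 3$ has no disconnected cut: take any partition $V_1,V_2,V_3,V_4$ into nonempty sets, and show that $V_1$ cannot be anticomplete to $V_3$ together with $V_2$ anticomplete to $V_4$. The point is that in a complete multipartite graph the only non-edges lie within a single part, so if $V_1$ is anticomplete to $V_3$ then $V_1\cup V_3$ must be contained in a single part $A_i$; likewise $V_2\cup V_4\subseteq A_j$. But then $V(G)=A_i\cup A_j$ forces $k\leq 2$, contradicting $k\geq 3$.

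Putting the two cases together yields the algorithm: compute the connected components; for each connected component that is the whole graph, test connectivity and diameter via Lemma~\ref{l-diameter}, then determine which case of Lemma~\ref{l-paw} applies (checking whether $G$ contains a triangle, which takes $O(n^3)$ time, or directly testing the complete-multipartite structure), and invoke Lemma~\ref{l-c3} in the $C_3$-free case while answering ``no'' in the complete $k$-partite case. All steps are dominated by the $O(n^3)$ cost of Lemma~\ref{l-c3} (equivalently, of the triangle-detection / structural test), so the overall running time is $O(n^3)$.

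The main obstacle, such as it is, is purely in correctly handling the complete $k$-partite branch: one must be careful that the argument genuinely uses $k\geq 3$ and does not accidentally apply to $k=2$ (where $G$ is complete bipartite, i.e.\ cobipartite, and can indeed have a disconnected cut when both parts have size at least~$2$). The clean way to do this is the anticompleteness-forces-a-single-part argument above, which isolates exactly where $k\geq 3$ enters. Everything else is routine bookkeeping around Olariu's dichotomy and the two already-established algorithmic lemmas.
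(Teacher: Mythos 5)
Your proposal is correct and follows essentially the same route as the paper's proof: apply Olariu's dichotomy (Lemma~\ref{l-paw}), handle the $C_3$-free case via Lemma~\ref{l-c3}, and show that a complete $k$-partite graph with $k\geq 3$ admits no disconnected cut because anticompleteness of $V_1$ and $V_3$ (respectively $V_2$ and $V_4$) forces each union into a single part, leaving some part uncovered. Your phrasing of that last step (non-edges lie within a single part, so $V(G)=A_i\cup A_j$ forces $k\leq 2$) is a slightly cleaner packaging of the paper's argument, and the observation about where $k\geq 3$ is genuinely needed matches the paper's reasoning exactly.
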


\begin{proof}
Let $G$ be a $\overline{P_1+P_3}$-free graph. By using brute force we check in $O(n^3)$ time if $G$ contains a $C_3$. 
If $G$ is $C_3$-free, then we apply Lemma~\ref{l-c3}.
Otherwise, by Lemma~\ref{l-paw}, $G$ is complete $k$-partite for some integer $k\geq 3$.
We claim that $G$ has no disconnected cut. For contradiction, let $V_1,V_2,V_3,V_4$ be a disconnected partition of $G$.
Let $A_1,\ldots,A_k$ be the partition classes of $G$.
We may assume without loss of generality that $V_1 \cap A_1 \not= \emptyset$. Since $A_1$ is complete to $A_i$ for each $i \not= 1$ and $V_1$ is anticomplete to $V_3$, we know that $V_3 \cap A_i = \emptyset$ for all $i \not= 1$. It follows that $\emptyset \subset V_3 \subseteq A_1$. By the same argument, $\emptyset \subset V_1 \subseteq A_1$. 
The same line of reasoning implies that $\emptyset \subset V_i \subseteq A_2$ and $\emptyset \subset V_i \subseteq A_4$ for some $i$, say $i\in \{1,2\}$. Then $A_3 \cap (V_1 \cup V_2 \cup V_3 \cup V_4) = \emptyset$, a contradiction.
\qed
\end{proof}

We recall that a biclique is a complete bipartite graph $K_{r,s}$ for some integers $r,s\geq 1$. 
Recall also that the $2K_2$-{\sc Partition} problem is to decide if a graph $G$ has a $2K_2$-partition, or equivalently, if
the vertex set of~$G$ can be partitioned into two 
non-empty sets $S$ and $T$ such that $G[S]$ and $G[T]$ are bicliques. 
Moreover, we recall  that $G$ has a $2K_2$-partition if and only if $\overline{G}$ has a disconnected cut.
We need the following lemma, which follows from Lemma~\ref{l-4p1}.

\begin{lemma}[\cite{DMS12}]\label{l-4p1b}
{\sc $2K_2$-Partition} is $O(n^3)$-time solvable for $K_4$-free graphs.
\end{lemma}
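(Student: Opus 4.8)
The plan is to reduce {\sc $2K_2$-Partition} on $K_4$-free graphs directly to {\sc Disconnected Cut} on $4P_1$-free graphs via complementation, and then invoke Lemma~\ref{l-4p1}. The two facts needed for the reduction are already available in the excerpt: first, a graph~$G$ has a $2K_2$-partition if and only if its complement $\overline{G}$ has a disconnected cut; and second, the classes of $K_4$-free graphs and $4P_1$-free graphs are complementary.

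Given an $n$-vertex $K_4$-free graph~$G$, I would first construct $\overline{G}$, which takes $O(n^2)$ time. The key observation is that $G$ contains an induced $K_4$ if and only if $\overline{G}$ contains an induced $\overline{K_4}=4P_1$. Hence $G$ being $K_4$-free is equivalent to $\overline{G}$ being $4P_1$-free, so $\overline{G}$ has independence number at most~$3$ and falls within the scope of Lemma~\ref{l-4p1}.

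I would then run the $O(n^3)$-time algorithm of Lemma~\ref{l-4p1} on $\overline{G}$ to decide whether $\overline{G}$ has a disconnected cut. By the complementation equivalence above, the answer is affirmative exactly when $G$ has a $2K_2$-partition, so the output of this algorithm directly settles the original instance. The total running time is $O(n^2)+O(n^3)=O(n^3)$, as claimed.

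There is no substantial obstacle in this argument: the whole proof is a straightforward application of the complementation equivalence and the complementary-class relationship between $K_4$-free and $4P_1$-free graphs, combined with the already-established algorithm of Lemma~\ref{l-4p1}. The only point requiring any care is confirming that the $4P_1$-free hypothesis of Lemma~\ref{l-4p1} is exactly matched by the $K_4$-free hypothesis under complementation, which is immediate from $\overline{K_4}=4P_1$.
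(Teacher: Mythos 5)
Your proposal is correct and is exactly the paper's (implicit) argument: the paper gives no separate proof of Lemma~\ref{l-4p1b}, stating only that it ``follows from Lemma~\ref{l-4p1},'' which is precisely the complementation reduction you spell out, namely that $\overline{K_4}=4P_1$ makes $K_4$-freeness of $G$ equivalent to $4P_1$-freeness of $\overline{G}$, and that $G$ has a $2K_2$-partition if and only if $\overline{G}$ has a disconnected cut. The $O(n^2)$ cost of forming the complement plus the $O(n^3)$ algorithm of Lemma~\ref{l-4p1} gives the claimed $O(n^3)$ bound, just as you argue.
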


We can now prove the following lemma.

\begin{lemma}\label{l-p1p3}
{\sc Disconnected Cut} is $O(n^3)$-time solvable for $(P_1+P_3)$-free graphs.
\end{lemma}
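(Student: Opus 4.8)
The plan is to exploit the complementation relationship between \textsc{Disconnected Cut} and \textsc{$2K_2$-Partition}, together with Lemma~\ref{l-4p1b}, much as Lemma~\ref{l-paw2} exploited the structure theorem of Olariu. First I would note that $G$ has a disconnected cut if and only if $\overline{G}$ has a $2K_2$-partition, so it suffices to show that \textsc{$2K_2$-Partition} is $O(n^3)$-time solvable when $\overline{G}$ belongs to the complementary class, namely when $\overline{G}$ is $\overline{P_1+P_3}$-free. Since $\overline{P_1+P_3}$ is the paw, the complement $\overline{G}$ is paw-free, and I would apply Lemma~\ref{l-paw} to its connected components. The key dichotomy is that each component of $\overline{G}$ is either $C_3$-free (hence triangle-free, so $K_3$-free, hence certainly $K_4$-free) or complete $k$-partite for some $k \geq 3$.

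The main step is to argue that in either branch we can decide \textsc{$2K_2$-Partition} on $\overline{G}$ in cubic time. In the $C_3$-free branch, $\overline{G}$ contains no triangle and in particular no $K_4$, so Lemma~\ref{l-4p1b} applies directly and gives an $O(n^3)$ algorithm. In the complete $k$-partite branch with $k \geq 3$, I would mirror the contradiction argument of Lemma~\ref{l-paw2}: I expect to show outright that a complete $k$-partite graph with $k\geq 3$ has no $2K_2$-partition, by analysing how the two biclique sides $S$ and $T$ must interact with the partition classes $A_1,\dots,A_k$. Concretely, if $S$ meets two distinct classes then those vertices are adjacent and lie on opposite sides of the bipartition of the biclique $G[S]$, and the complete-multipartite adjacency forces all remaining classes to be split between $S$ and $T$ in a way that leaves some class $A_i$ empty on both sides, contradicting that the $A_i$ partition $V(\overline{G})$; I would turn this into a clean case analysis showing no valid $S,T$ exist.

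One subtlety I would flag is that Lemma~\ref{l-paw} is stated for \emph{connected} paw-free graphs, so I must first handle the disconnectedness of $\overline{G}$. The natural move is to recall (via Lemma~\ref{l-diameter}, since a disconnected cut forces diameter exactly~$2$) that the relevant instances reduce to the case where $G$ itself has diameter~$2$; equivalently I can reason componentwise about $\overline{G}$ and observe that a $2K_2$-partition imposes strong constraints across components, so that either $\overline{G}$ is connected and Lemma~\ref{l-paw} applies, or the component structure itself settles the question immediately.

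The hardest part will be getting the complete-$k$-partite case fully airtight: the impossibility argument is combinatorial rather than algorithmic, and I need to be careful that the sets $S$ and $T$ in a $2K_2$-partition must each induce a biclique (a complete bipartite graph with at least one edge, hence nonempty), which constrains but does not forbid $S$ or $T$ from lying inside a single class $A_1$ (an independent set cannot be a biclique, so each side must meet at least two classes). I expect the crux to be showing that once both $S$ and $T$ each meet two or more classes, the complete adjacency between distinct classes forces a monochromatic biclique violation or an empty class, which then yields the contradiction. Once both branches are settled, the overall $O(n^3)$ running time follows since the triangle test and the two subroutines (Lemma~\ref{l-4p1b} and the constant-time structural check) are each $O(n^3)$.
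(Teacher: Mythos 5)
Your overall skeleton matches the paper's: complement the graph, observe $\overline{G}$ is paw-free, apply Olariu's dichotomy (Lemma~\ref{l-paw}) to the connected case, and use Lemma~\ref{l-4p1b} in the $C_3$-free branch. However, your treatment of the complete $k$-partite branch is exactly backwards, and this is a fatal error, not a fixable detail. The paper concludes that a connected complete $k$-partite graph with $k\geq 3$ on $n\geq 4$ vertices always \emph{has} a $2K_2$-partition, so the algorithm answers yes in this branch; you propose to prove it \emph{never} has one and answer no. The paper is right and your claim is false: take $\overline{G}=K_{1,1,2}$ with classes $A_1=\{a\}$, $A_2=\{b\}$, $A_3=\{c,d\}$; then $V_1=\{a\}$, $V_3=\{c\}$, $V_2=\{b\}$, $V_4=\{d\}$ is a $2K_2$-partition ($V_1$ complete to $V_3$, $V_2$ complete to $V_4$). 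In general, if some class $A_1$ has a vertex $a$ with $A_1\setminus\{a\}\neq\emptyset$, take $V_1=\{a\}$, $V_3=A_2$, $V_2=A_3\cup\dots\cup A_k$, $V_4=A_1\setminus\{a\}$; if all classes are singletons then $k=n\geq 4$ and the classes themselves can be distributed among $V_1,\dots,V_4$. So your algorithm would return the wrong answer on every such instance.

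The root of the error is twofold. First, a $2K_2$-partition does not require $S$ and $T$ to \emph{induce} complete bipartite graphs; it requires partitions $S=V_1\cup V_3$, $T=V_2\cup V_4$ with $V_1$ complete to $V_3$ and $V_2$ complete to $V_4$ (arbitrary edges inside the $V_i$ are allowed) --- this is the notion that is complementation-equivalent to a disconnected cut, and it is very easy to satisfy in a dense graph. Second, ``mirroring'' Lemma~\ref{l-paw2} is precisely the wrong instinct: that lemma shows complete multipartite graphs admit no \emph{disconnected cut} because anticompleteness constraints cannot be met in a dense graph; after complementation the constraints become completeness constraints, and the answer flips from ``never'' to ``always'' (for $n \geq 4$). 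Separately, your handling of disconnected $\overline{G}$ is too vague to stand: diameter~$2$ for $G$ does not make $\overline{G}$ connected (e.g., $G=C_4$ has $\overline{G}=2K_2$), and the paper deals with this case explicitly --- more than two components means no $2K_2$-partition, and with exactly two components $D_1,D_2$ one checks that each $\overline{D_i}$ is disconnected.
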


\begin{proof}
Let $G=(V,E)$ be a $(P_1+P_3)$-free graph on $n$ vertices. We may assume that $n\geq 4$.
We check if the $\overline{P_1+P_3}$-free graph $\overline{G}$ has a $2K_2$-partition by applying the following algorithm, which 
has running time $O(n^3)$.
If $\overline{G}$ has more than two disconnected components, $\overline{G}$ has no $2K_2$-partition.
Suppose $\overline{G}$ has exactly two components $D_1$ and $D_2$. Then each $D_i$ must have at least two vertices and must contain a complete bipartite spanning subgraph. The latter condition is true if and only if  each $\overline{D_i}$ is disconnected.
Suppose $\overline{G}$ has exactly one component. If $\overline{G}$ is $C_3$-free, then we apply Lemma~\ref{l-4p1b}. 
Otherwise, $\overline{G}$ is complete $k$-partite for some $k\geq 3$ due to Lemma~\ref{l-paw}. 
Then, as $n\geq 4$, we observe that $\overline{G}$ has a $2K_2$-partition.
\qed
\end{proof}

The graph $\overline{2P_1+P_2}$ is also known as the {\it diamond}.

\begin{lemma}\label{l-diamond}
{\sc Disconnected Cut} is $O(n^3)$-time solvable for $\overline{2P_1+P_2}$-free graphs.
\end{lemma}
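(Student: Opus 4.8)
**The plan is to handle the diamond-free case by complementation, reducing to a structural dichotomy for $(2P_1+P_2)$-free graphs.** Observe that $G$ is $\overline{2P_1+P_2}$-free if and only if its complement $\overline{G}$ is $(2P_1+P_2)$-free, that is, co-diamond-free. Since $G$ has a disconnected cut if and only if $\overline{G}$ has a $2K_2$-partition (as recalled in the excerpt, via the equivalence between disconnected cuts and $2S_2$/$2K_2$-partitions), it suffices to decide efficiently whether the co-diamond-free graph $\overline{G}$ admits a $2K_2$-partition. So first I would pass to $H := \overline{G}$ and work entirely with the $2K_2$-{\sc Partition} problem on the $(2P_1+P_2)$-free graph $H$.

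\medskip
\noindent
\textbf{The key step is a structural dichotomy for $(2P_1+P_2)$-free graphs, analogous to Olariu's lemma for paw-free graphs.} A graph is $(2P_1+P_2)$-free (co-diamond-free) precisely when it contains no induced diamond in its complement; I expect a clean characterization stating that every connected co-diamond-free graph is either $K_4$-free (equivalently, $\overline{G}$ triangle-free in the relevant sense) or has a very rigid dominating structure — most likely that its complement is $C_3$-free or complete multipartite, mirroring exactly the structure exploited in Lemmas~\ref{l-paw}--\ref{l-p1p3}. Indeed, $(P_1+P_3)$-free graphs were handled in Lemma~\ref{l-p1p3} by complementing to the paw-free setting and invoking Olariu's dichotomy; I anticipate $(2P_1+P_2)$-free graphs admit an analogous dichotomy whose two branches are a $K_4$-free case, dispatched by Lemma~\ref{l-4p1b} (which solves {\sc $2K_2$-Partition} in $O(n^3)$ on $K_4$-free graphs), and a highly structured case where a short direct argument settles the existence of a $2K_2$-partition.

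\medskip
\noindent
\textbf{The procedure I would run is therefore:} check connectivity and the number of components of $H = \overline{G}$; if $H$ has three or more components there is no $2K_2$-partition, and if $H$ has exactly two components $D_1, D_2$ then, exactly as in the proof of Lemma~\ref{l-p1p3}, a $2K_2$-partition exists iff each $D_i$ has at least two vertices and contains a complete bipartite spanning subgraph, the latter being testable by checking that each $\overline{D_i}$ is disconnected. In the connected case, I would test in $O(n^3)$ time whether $H$ is $K_4$-free (by brute force over $4$-subsets, or equivalently whether $G=\overline{H}$ is $\overline{K_4}$-free) and apply Lemma~\ref{l-4p1b} if so; otherwise the structural dichotomy forces $H$ into the rigid case, where the presence of a $K_4$ together with co-diamond-freeness should make the answer immediate (either always yes once $n$ is large enough, or decidable by inspecting the induced structure around the $K_4$).

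\medskip
\noindent
\textbf{The main obstacle is establishing and correctly invoking the structural dichotomy for the connected $K_4$-containing case.} Unlike the $(P_1+P_3)$-free case, which complements directly to paw-free and slots into Olariu's theorem, the $(2P_1+P_2)$-free class requires its own structural lemma, and the delicate part is proving that when $H$ is connected, co-diamond-free, and contains a $K_4$, its global structure is rigid enough that {\sc $2K_2$-Partition} is trivial to decide (most plausibly always solvable once $|V(H)| \geq 4$, argued by a contradiction like the one in Lemma~\ref{l-paw2} using the complete-multipartite-type structure). Verifying that this rigid branch genuinely covers every connected $K_4$-containing co-diamond-free graph, and that the $2K_2$-partition always exists there, is where I expect the real work to lie; the remaining bookkeeping and the running-time analysis are routine once that structural claim is in hand.
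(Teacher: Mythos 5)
Your reduction to {\sc $2K_2$-Partition} on $H=\overline{G}$, and your handling of the cases where $H$ has two or more components or is connected and $K_4$-free (via Lemma~\ref{l-4p1b}), are all sound. The genuine gap is exactly the case you flag as the main obstacle: $H$ connected, co-diamond-free, and containing a $K_4$. The Olariu-type dichotomy you conjecture there is false: diamond-free graphs admit no analogue of Lemma~\ref{l-paw}. A connected diamond-free graph need not be $C_3$-free or complete multipartite --- the ``bowtie'' (two triangles sharing a vertex) already violates this, and a chain of four triangles consecutively glued at vertices gives a connected diamond-free graph $G$ that contains triangles, is not complete multipartite, has $\alpha(G)\geq 4$, and has connected complement; its complement $H$ is then precisely a connected, co-diamond-free, $K_4$-containing graph for which your dichotomy fails. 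So the plan of ``complement, then invoke a global structure theorem'' collapses at the one step where, as you correctly anticipate, all the work lies.

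What saves the statement --- and what the paper actually does --- is a local rather than global structural fact: $G$ is $\overline{2P_1+P_2}$-free if and only if every neighbourhood $N(u)$ is $P_3$-free, i.e., a disjoint union of cliques. The paper's proof never complements. It first checks for a dominating vertex (if one exists, there is no disconnected cut by Lemma~\ref{l-dominating}); otherwise, if some $N(u)$ splits into at least two cliques, then $u$ is a non-dominating vertex with a disconnected neighbourhood and Lemma~\ref{l-disconnected} immediately yields a disconnected cut; and if every $N(u)$ is a single clique, then the connected graph $G$ is itself a clique and has no disconnected cut. Note that this local argument is also what would rescue your case 4: if $H$ is connected, then $G$ has no dominating vertex, and if $H$ contains a $K_4$, then $\alpha(G)\geq 4$, so $G$ is not a clique, hence some vertex has a disconnected neighbourhood and the answer is always ``yes''. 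Thus your hedged guess that this case is always a yes-instance happens to be correct, but it cannot be justified by the structural dichotomy you propose, since no such dichotomy exists; the justification has to go through the clique-neighbourhood structure of diamond-free graphs instead.
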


\begin{proof}
Let $G$ be a $\overline{2P_1+P_2}$-free graph. We check in $O(n^2)$ time if $G$ has a dominating vertex.
If so, then $G$ has no disconnected cut due to
Lemma~\ref{l-dominating}. Assume $G$ has no dominating vertex.
As $G$ is $\overline{2P_1+P_2}$-free, the neighbourhood $N(u)$ of each vertex must be $P_3$-free, and thus $G[N(u)]$ is the disjoint union of one or more complete graphs.  If $G[N(u)]$ is the disjoint union of at least two complete graphs, then $u$ has a disconnected neighbourhood. Consequently, $G$ has a disconnected cut due to Lemma~\ref{l-disconnected}.
We check in $O(n^3)$ time if $G$ has a vertex whose neighbourhood is a disjoint union of at least two complete graphs.
Suppose $G$ has no such vertex, so every neighbourhood $N(u)$ is a clique. Then $G$ itself must be a clique and thus $G$ has no disconnected cut.
\qed
\end{proof}

We are now ready to prove the following summary for {\sc Disconnected Cut} restricted to $H$-free graphs; see also
Table~\ref{t-four}.

\begin{theorem}\label{t-k4}
Let $H\neq K_4$ be a graph on at most four vertices. Then {\sc Disconnected Cut} is polynomial-time solvable for $H$-free graphs.
\end{theorem}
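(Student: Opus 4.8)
The plan is to argue by a finite case analysis over the isomorphism type of $H$. First I would reduce to the case $|V(H)|=4$: if $H$ has fewer than four vertices, then $H$ is an induced subgraph of $H':=H+(4-|V(H)|)P_1$, which is a four-vertex graph containing an isolated vertex and hence is not $K_4$. Since every $H$-free graph is also $H'$-free (an induced $H'$ would contain an induced $H$), a polynomial-time algorithm for $H'$-free graphs also decides $H$-free graphs. Thus it suffices to settle the ten four-vertex graphs different from $K_4$, namely $4P_1$, $2P_1+P_2$, $2P_2$, $P_1+P_3$, $C_3+P_1$, $P_4$, the claw $K_{1,3}$, the paw $\overline{P_1+P_3}$, the diamond $\overline{2P_1+P_2}$, and $C_4$.

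Six of these follow immediately from results already available. For $H=4P_1$ I would invoke Lemma~\ref{l-4p1}; for the claw $H=K_{1,3}$ the main Theorem~\ref{t-mainmain}; for $H=2P_2$ the characterization of Lemma~\ref{l-2p2}, testing in polynomial time whether $\overline{G}$ has a universal pair (see also~\cite{CDEFFK10}); for $H=P_4$ the algorithm of~\cite{FMPS09} for graphs with a dominating edge, which applies since every connected $P_4$-free graph has a dominating edge; and for the co-diamond $H=2P_1+P_2$ and the co-claw $H=C_3+P_1$ the polynomial-time algorithms of~\cite{DMS12}.

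Three further cases are exactly the lemmas established just above: the paw $H=\overline{P_1+P_3}$ is Lemma~\ref{l-paw2}, $H=P_1+P_3$ is Lemma~\ref{l-p1p3}, and the diamond $H=\overline{2P_1+P_2}$ is Lemma~\ref{l-diamond}. In each I would follow the template of those proofs: apply Olariu's dichotomy (Lemma~\ref{l-paw}) or a direct neighbourhood analysis, and, where convenient, pass to the complement and decide the equivalent $2K_2$-partition instance, recalling that $G$ has a disconnected cut if and only if $\overline{G}$ has a $2K_2$-partition. Collecting the running times, each case runs in $O(n^3)$ time, or $O(n^3m)$ for the claw, so the overall bound is polynomial.

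The remaining and most delicate case is $H=C_4$, and this is where I expect the real work to lie, since $C_4$-freeness is neither subsumed by any of the other four-vertex classes nor covered by a stated lemma. The cost-free first moves are: by Lemma~\ref{l-diameter} I may assume $G$ has diameter~$2$; by Lemma~\ref{l-dominating} a dominating vertex certifies the absence of a disconnected cut; and by Lemma~\ref{l-disconnected} a non-dominating vertex with a disconnected neighbourhood certifies its presence. Hence the only configuration left is a \emph{locally connected} graph with no dominating vertex. Here I would exploit $C_4$-freeness through the structural consequence that, for any non-adjacent pair $x,y$, the common neighbourhood $N(x)\cap N(y)$ is a clique (two non-adjacent common neighbours would complete an induced $C_4$), together with the fact that, by Lemma~\ref{l-4cycle}, any disconnected partition forces an induced $C_5$ meeting all four parts. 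Combining these with local connectivity should pin down how the vertices outside a guessed induced $C_5$ may be distributed among the four parts, which I would then encode—mirroring the propagation-and-$2$-SAT argument used for circular-arc graphs in Theorem~\ref{t-circulararc}—to decide the existence of a disconnected cut in polynomial time (equivalently, this amounts to solving $2K_2$-partition on the $2P_2$-free graph $\overline{G}$). Verifying that $C_4$-freeness is strong enough to make this propagation deterministic, and thereby closing the $C_4$ case, is the step I expect to be the main obstacle.
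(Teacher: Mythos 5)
For nine of the ten four-vertex graphs you list, your argument coincides with the paper's proof of Theorem~\ref{t-k4}: the same reduction to the case $|V(H)|=4$, followed by the same dispatch to Lemma~\ref{l-4p1} and the results of~\cite{DMS12} and~\cite{CDEFFK10}, the dominating-edge argument of~\cite{FMPS09} for $P_4$, the claw-free algorithm (Theorem~\ref{t-mainmain}) for $K_{1,3}$, and Lemmas~\ref{l-p1p3}, \ref{l-paw2} and~\ref{l-diamond} for the co-paw, paw and diamond. Up to that point the proposal is correct and essentially identical to the paper.

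The genuine gap is the $C_4$ case, which you correctly identify but do not close. Your sketch---guess the induced $C_5$ supplied by Lemma~\ref{l-4cycle}, use that common neighbourhoods of non-adjacent vertices are cliques, then propagate and finish with $2$-SAT---is not a proof: the $2$-SAT encoding in Theorem~\ref{t-circulararc} rests entirely on Lemma~\ref{l-circular} (the four parts may be taken connected, their arcs cover the circle, and every unplaced vertex meets exactly two consecutive parts), and none of these facts has an analogue for $C_4$-free graphs. You give no argument that each remaining vertex is confined to a binary choice between two fixed parts, and you yourself flag this step as unverified; reducing to $2K_2$-{\sc Partition} on the $2P_2$-free graph $\overline{G}$ does not help either, since the known results concern {\sc Disconnected Cut} (not $2K_2$-{\sc Partition}) on $2P_2$-free graphs. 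That said, your enumeration is actually more careful than the paper's: there are eleven four-vertex graphs, six of them connected ($P_4$, $K_{1,3}$, $C_4$, the paw, the diamond and $K_4$), yet the paper's case $p=1$ lists only four of these, and $C_4$ appears neither in the proof of Theorem~\ref{t-k4} nor in Table~\ref{t-four}. So the $C_4$ case is missing from the paper's own proof as well; as written, neither your proposal nor the paper establishes the statement for $H=C_4$.
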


\begin{proof}
We may assume that $H$ has exactly four vertices. Let $H$ have $p$ connected components. 
If $p=4$, then $H=4P_1$ and we can use Lemma~\ref{l-4p1}, proven in~\cite{DMS12}.
If $p=3$, then $H=2P_1+P_2$. This case was also proven in~\cite{DMS12}.
If $p=2$, then $H=2P_2$, or $H=P_1+P_3$, or $H=C_3+P_1$. The first case was proven in~\cite{CDEFFK10}. The second case is proved in Lemma~\ref{l-p1p3}. The third case was proven in~\cite{DMS12}.
If $p=1$, then $H=K_{1,3}$, or $H=P_4$, or $H=\overline{P_1+P_3}$, or $H=\overline{2P_1+P_2}$.
The first case follows from Theorem~\ref{thm:claw}. 
The second case follows from a result of~\cite{FMPS09}, which states that {\sc Disconnected Cut} is polynomial-time solvable for 
graphs $G=(V,E)$ with a dominating edge, that is, an edge $e=xy$ with $N_G(x)\cup N_G(y)=V$.
Every $P_4$-free graph is a cograph (and vice versa). It follows rom the definition of a connected cograph that
every connected $P_4$-free graph on at least two vertices has a spanning complete bipartite subgraph, and thus a dominating edge
(take an an edge with endpoints in each of the two partition classes of the spanning complete bipartite subgraph).
The third case is prove in Lemma~\ref{l-paw2}. The fourth case is prove in Lemma~\ref{l-diamond}.
\qed
\end{proof}

\begin{table}[h]
\begin{center}
\begin{tabular}{|l|l|}
\hline
$H$ &{Complexity of {\sc Disconnected Cut}} \\
\hline
$4P_1$ & polynomial~\cite{DMS12} \\
\hline
$2P_1+P_2$ (co-diamond) & polynomial~\cite{DMS12} \\
\hline
$2P_2$ & polynomial~\cite{CDEFFK10} \\
\hline
$P_1+P_3$ (co-paw) & polynomial  ({\bf this work}: Lemma~\ref{l-p1p3})\\
\hline
$C_3+P_1$ (co-claw) & polynomial~\cite{DMS12} \\
\hline
$K_{1,3}$ (claw) & polynomial ({\bf this work}: Theorem~\ref{thm:claw}) \\
\hline
$P_4$ & polynomial~\cite{FMPS09}  \\
\hline
$\overline{P_1+P_3}$ (paw) & polynomial  ({\bf this work}: Lemma~\ref{l-paw2}) \\
\hline
$\overline{2P_1+P_2}$ (diamond) &	polynomial  ({\bf this work}: Lemma~\ref{l-diamond}) \\
\hline
$K_4$ & \emph{open}\\
\hline
\end{tabular}
\end{center}
\medskip 
\caption{The complexity of {\sc Disconnected Cut} on $H$-free graphs when $H$ has four vertices.}
\label{t-four}
\end{table}

\section{Open Problems}\label{s-con}

In light of Corollaries~\ref{c-con} and~\ref{c-com} 
we ask about the complexities of {\sc $C_4$-Contractibility} and
{\sc ${\cal C}_4$-Compaction} for claw-free graphs of diameter at least~3.
We note that the \NP-complete problem $P_4$-{\sc Contractibility}~\cite{BV87} is polynomial-time solvable for claw-free graphs~\cite{FKP13}.

Both the complexity classification of  {\sc $H$-Compaction} and {\sc Surjective $H$-Colouring} are wide open.
In particular, it is not known if there exists a graph~$H$ for which these two problems have a different complexity. However, if we impose restrictions on the input graph, such a graph~$H$ is known: {\sc ${\cal C}_4$-Compaction} is \NP-complete for graphs of diameter~3~\cite{Vi02}, whereas {\sc Surjective ${\cal C}_4$-Colouring} (being equivalent to {\sc Disconnected Cut}) is trivial on this graph class. 
In contrast to claw-free graphs, graphs of diameter~3 do not form a hereditary graph class, that is, they are not closed under vertex deletion. This leads to the natural question if there exist a hereditary graph class ${\cal G}$ and a graph~$H$, such that {\sc $H$-Compaction} and {\sc Surjective $H$-Colouring} have different complexity when restricted to ${\cal G}$.
Should {\sc ${\cal C}_4$-Compaction} turn out to be \NP-complete for claw-free graphs, then due Theorem~\ref{thm:claw} and the equivalency between {\sc Disconnected Cut} and {\sc Surjective $H$-Colouring} we can take the class of claw-free graphs as ${\cal G}$ and
the graph~${\cal C}_4$ as $H$ to find such a pair $({\cal G},H)$.

In light of Theorem~\ref{t-k4}, we also ask what the complexity of {\sc Disconnected Cut} is for $K_4$-free graphs.

\end{document}